\newcommand{\BO}{\mathcal{O}}
\newcommand*\eps{\varepsilon}
\newtheorem{theorem}{Theorem}[section]
\newtheorem{lemma}[theorem]{Lemma}
\newtheorem{corollary}[theorem]{Corollary}
\newtheorem{definition}[theorem]{Definition}
\newtheorem{example}[theorem]{Example}
\newtheorem{remark}[theorem]{Remark}
\newtheorem{fact}[theorem]{Fact}
\newcommand{\todo}[1]{{\begin{small}\sffamily \color{red}TODO:  #1 \end{small}}}
\newcommand{\todoI}[1]{}
\newcommand{\blueI}[1]{}
\newcommand{\tsp}{\mathop{tsp}}
\newcommand{\tspapp}{\mathop{\overline{tsp}}}
\newcommand{\dep}{\mathop{dep}}
\newcommand{\depapp}{\mathop{\overline{dep}}}
\renewcommand{\epsilon}{\varepsilon}
\newcommand{\calL}{\mathcal{L}}
\newcommand{\blue}[1]{{ \color{blue} #1 }}
\newcommand*{\StartNewContent}{%
    \let\OrigLabel\label%
    \let\OrigRef\ref%
    \renewcommand{\label}[1]{\OrigLabel{FULL:##1}}%
    \renewcommand{\ref}[1]{\OrigRef{FULL:##1}}%
    \renewcommand{\label@in@display}[1]{%
        \ifx\df@label\@empty\else
            \@amsmath@err{Multiple \string\label's:
                label '\df@label' will be lost}\@eha
        \fi
        \gdef\df@label{FULL:##1}%
    }%
}
\newif\iffull
\newif\ifshort
\begin{document}

\author{Benjamin Dissler ~~~~~~  Stephan Holzer\thanks{Part of the work was done at ETH Zurich. At MIT the author was supported by the following grants: AFOSR Contract Number FA9550-13-1-0042, NSF Award 0939370-CCF, NSF Award CCF-1217506, NSF Award number CCF-AF-0937274.}  ~~~~~~ Roger Wattenhofer\\
~~\small disslerb@ethz.ch ~~~~~~~~~~~ \small holzer@mit.edu  ~~~~~~~~~  \small wattenhofer@ethz.ch\\
  \small ETH Zurich  ~~~~~~~~~~~~~~~~~~~~~~  MIT  ~~~~~~~~~~~~~~~~~~~~~~~  ETH Zurich\\
}

\title{
Distributed Local Multi-Aggregation and Centrality Approximation
}
\maketitle
\thispagestyle{empty}
\begin{abstract}
We study local aggregation and graph analysis in distributed environments using the message passing model. We provide a flexible framework, where each of the nodes in a set $S$--which is a subset of all nodes in the network--can perform a large range of common aggregation functions in its $k$-neighborhood. We study this problem in the CONGEST model, where in each synchronous round, every node can transmit a different (but short) message to each of its neighbors. While the $k$-neighborhoods of nodes in $S$ might overlap and aggregation could cause congestion in this model, we present an algorithm that needs time $\BO(|S|+k)$ even when each of the nodes in $S$ performs a different aggregation on its $k$-neighborhood. The framework is not restricted to aggregation-trees such that it can be used for more advanced graph analysis. We demonstrate this by providing efficient approximations of centrality measures and approximation of minimum routing cost trees.
\end{abstract}

\clearpage
\pagestyle{plain}
\setcounter{page}{1}

\newpage
\fulltrue 

\section{Introduction}
Data aggregation and analysis is one of the most basic tasks at the heart of many distributed systems and the question of aggregating and analyzing information and networks itself as efficient as possible arises daily. The result of this is a huge body of work ranging from theoretical to practical aspects focusing on optimizing e.g. speed, space, bandwidth, energy, fault-tolerance and accuracy. As already \cite{kuhn2008distributed} stated,``the database community classifies aggregation functions
into three categories: \emph{distributive} (max, min, sum, count), \emph{algebraic} (plus, minus, average, variance), and holistic (median, $k^{th}$ smallest, or $k^{th}$ largest value). Combinations of these functions are believed to support a wide range of reasonable aggregation queries.''. Often one is also interested in computing a combination of these such as, e.g., ``What is the average of the 10\% largest values?''\cite{kuhn2008distributed}.

However, most of this work considers the case in which only one node in the network aggregates information (and then often broadcasts it to all other nodes). In reality, many nodes of a large network are independently interested in aggregating data at the same time 

and restricted to their local neighborhood. That is, all nodes in a subset $S\subseteq V$ want to perform a (possibly different) aggregation, where $V$ is the set of nodes in the network.  For example 1) a few nodes in a network want to know if certain data is stored in their neighborhood, or 2) cars participating in a vehicular ad hoc network (VANET) want to aggregate information on traffic, safety warnings and parking spots from their local neighborhood, and 3) nodes who turned idle search for the busiest nodes in their respective neighborhoods to take work from them. A simple approach is to just let all nodes in $S$ perform their aggregation at the same time---however, this might lead to congestion and a worst case runtime of $\BO(|S|\cdot k)$.

In this paper we present a general framework for local multi-aggregation that allows a time- and message-optimal implementation of a wide range of aggregation functions. We do this in a general setting, where nodes in $S$ aggregate data from their $k$-neighborhood using shortest paths and achieve a runtime of $\BO(|S|+k)$ to do so.

We show how to perform aggregation and graph analysis by aggregating through all possible shortest paths between any pair of nodes (not only using one path as it is usually done in an aggregation-tree). This is a powerful tool which enables us to provide an efficient approximation of Betweenness Centrality. We also present efficient approximations of Closeness Centrality and Minimum Routing Cost Trees.

\subsection{Our Contribution}
We provide a framework for local multi-aggregation that takes care of scheduling messages sent between nodes in an efficient way. One only needs to specify how nodes process incoming messages. That is designing algorithms depending on the aggregation-function at hand, which transform received messages into new messages to be sent. Using our framework, one can aggregate information not only by using a tree, but using all possible shortest paths from a root node to any other node. This has advantages for advanced computations as we demonstrate later. Thus we show two variations of the algorithm:
\begin{enumerate}
\item only one (shortest) $r$-$t$-path is needed for each $(r,t)\in S\times N_k(r)$ to perform the aggregation.
\item all shortest $r$-$t$-paths are needed for each $(r,t)\in S\times N_k(r)$ to perform the aggregation.
\end{enumerate}
The last version is e.g. of interest for computing betweenness centrality measures, which is a measure that depends on all shortest paths, not only a single one. \footnote{In the algorithm we approximate the number of all shortest paths starting in a certain sampled set $S$ of nodes. Note that it does not seem possible to approximate centrality measures without performing these $|S|$ independent aggregations.}. 

To perform $|S|$ independent (possibly different) aggregations in the $k$-neighborhood of each of the nodes in $S$, our algorithm takes time $\BO(|S|+k)$, which we show to be optimal (see Remark \ref{rem:lowerbound}). As an example of an aggregation-function, which can be plugged into our framework, we consider computing the maximum value
\todoI{more fkt.}
stored in the $k$-neighborhood of each of the $S$ nodes. Further aggregation functions can be implemented in a similar way. Different root nodes in $S$ can perform different aggregations at the same time. Based on this, we show how to approximate centrality measures and minimum routing cost trees and obtain the following theorems:

\begin{theorem}\label{theo:DLGruntime}
Algorithm\iffull~\ref{alg:DLGcomp}\fi~\textsc{DLGcomp} computes $|S|$ valid directed leveled graphs with depth $k$, in time $\BO(|S|+k)$. And Algorithm\iffull~\ref{alg:DLGagr}\fi~\textsc{DLGagr} aggregates information through $|S|$ directed leveled graphs with depth $k$, in time $\BO(|S|+k)$.
\end{theorem}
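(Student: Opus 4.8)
The plan is to prove the two halves by the same pipelining principle, treating correctness of the constructed objects as the easy part and the round complexity as where the work lies. First I would pin down what \textsc{DLGcomp} does: I expect it to run, truncated at depth $k$, a simultaneous breadth‑first exploration out of every $r\in S$, where each node $v$ keeps for every root $r$ a tentative level $\ell_r(v)$ (initially $0$ at $v=r$ and $\infty$ otherwise) and, on receiving a message ``$r$ reached level $\ell$'' from a neighbour $w$ with $\ell+1\le k$, sets $\ell_r(v):=\min(\ell_r(v),\ell+1)$ and records $w$ as a predecessor (resp.\ successor) in $\cG_r$ exactly when $\ell=\ell_r(v)-1$ (resp.\ $\ell=\ell_r(v)+1$). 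A routine induction on the level then gives the invariant $\ell_r(v)=\dist(r,v)$ for all $v$ with $\dist(r,v)\le k$, and that the recorded edges form the shortest‑path DAG of $r$ truncated at depth $k$ (all‑paths variant) or a spanning sub‑DAG of it (single‑path variant); hence each $\cG_r$ is a valid directed leveled graph of depth $k$.

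For the running time of \textsc{DLGcomp} the non‑trivial claim is that the $|S|$ explorations, whose $k$‑neighbourhoods may overlap arbitrarily, together finish in $\BO(|S|+k)$ rounds rather than $\BO(|S|\cdot k)$. I would cast this as a congestion--dilation argument on layered graphs. Order the roots by identifier, $r_1,\dots,r_{|S|}$, and have every node in every round send, on each outgoing edge, the pending ``$r$ reached level $\ell$'' message of smallest root‑rank that it has not yet forwarded over that edge (a standard tagging of messages with hop counts, together with the leveled discipline, keeps each edge carrying only $\BO(1)$ messages per root, so the congestion across any edge is $\BO(|S|)$ and the dilation is $\le k$). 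Because all messages of $r_i$ travel strictly ``downward'' through the levels of $\cG_{r_i}$, a lower‑rank root's wave, once it has overtaken $r_i$'s wave at some node, never blocks it again on the way to a fixed target. The key lemma I would prove is: the level‑$\ell$ frontier of $\cG_{r_i}$ is reached by round $\BO(i+\ell)$; the induction is on $i+\ell$, charging each round of delay that $r_i$'s message suffers to a distinct lower‑rank root, which caps the total delay at $\BO(i)=\BO(|S|)$ on top of the $\ell\le k$ hops. Summing over all roots and levels, every $\cG_r$ is complete by round $\BO(|S|+k)$.

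For \textsc{DLGagr} the analysis is essentially identical. Once the $|S|$ DLGs exist, aggregation pushes partial aggregates along them — a convergecast from level $k$ up to each root, and/or a broadcast the other way — and every node lies in at most $|S|$ of the DLGs, so it has at most one aggregate to forward per incident edge per DLG. Prioritising again by root‑rank and invoking the same layered congestion--dilation lemma, now with ``upward'' (towards the root) as the monotone direction, shows that the aggregate of $\cG_{r_i}$ is delivered by round $\BO(i+k)$, so all $|S|$ aggregations complete in $\BO(|S|+k)$ rounds.

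The hard part will be the delay lemma underlying both bounds: showing that the per‑root congestion enters the round complexity \emph{additively} as $\BO(|S|)$ rather than multiplicatively as $\BO(|S|\cdot k)$. On a single path this is the classical ``a packet is delayed at most once by each other packet'' fact, but here each wave fans out over an entire shortest‑path DAG and can be slowed at many nodes at once, so the argument has to be run for the earliest‑arriving copy at each target and must use the leveled structure to forbid a higher‑priority wave from colliding with $r_i$'s wave more than once en route; one also has to make sure the distance (resp.\ aggregate) updates at a node commit after $\BO(1)$ messages per root so the congestion bound is genuinely $\BO(|S|)$. I expect this scheduling/charging argument to be the only delicate step; the rest is bookkeeping.
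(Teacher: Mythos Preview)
Your outline for \textsc{DLGcomp} is essentially the paper's argument: prioritise waves by root identifier and charge each round of delay that $r_i$'s wave suffers to a distinct lower-rank root, capping the total delay at $|S|$ on top of the $k$ hops. The one point you treat too lightly is correctness of the DLG under delays. The paper does not simply say ``routine induction''; it proves that the \emph{set} of lower-rank waves delaying an $r$-message is the same along every path to a given node $u$, so the arrival order at $u$ still reflects true distances and the correct parents are recorded. Your predecessor rule ``record $w$ when $\ell=\ell_r(v)-1$'' can fire before $\ell_r(v)$ has converged to $\dist(r,v)$ (a farther neighbour's message may arrive first and temporarily set $\ell_r(v)$ too high), so you either need to rescan predecessors once levels are final or, as the paper does, argue that the shortest-distance $r$-message always arrives first at $u$.

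Your treatment of \textsc{DLGagr} takes a genuinely different route. You propose a fresh congestion--dilation scheduling argument for the convergecast; the paper instead exploits that \textsc{DLGcomp} already records, at each node $u$ and for each root $r$, the round $\tau[r][h]$ at which parent $u_h$'s $r$-message arrived. \textsc{DLGagr} then simply \emph{reverses} this schedule: at round $|S|+k-\tau[r][h]$ node $u$ sends its aggregate to parent $u_h$. This buys two things your approach has to work for separately: (i) no new delay analysis is needed, the runtime is immediately bounded by that of \textsc{DLGcomp}; and (ii) the aggregation invariant (that $u$ has heard from all its children in $\calL_r$ before sending to any parent) is automatic, since time-reversal of a valid receive-then-send schedule is again a valid receive-then-send schedule. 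Your scheme would need an extra mechanism for a node to detect that all its children in $\calL_r$ have reported, which you do not specify.
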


\begin{theorem}\label{theo:tree}
The tree variations of Algorithms\iffull~\ref{alg:DLGcomp}\fi~\textsc{DLGcomp} and\iffull~\ref{alg:DLGagr}\fi~\textsc{DLGagr} compute $|S|$ trees $T_r$ in time $\BO(|S|+D_\omega)$ and aggregate information through these $|S|$ trees, in time $\BO(|S|+D_\omega)$.
\end{theorem}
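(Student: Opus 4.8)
The plan is to \emph{reduce Theorem~\ref{theo:tree} to Theorem~\ref{theo:DLGruntime}} by realizing the tree variation as a restriction of the directed-leveled-graph (DLG) construction, and then re-running the scheduling analysis with the radius parameter $k$ replaced by the (weighted) shortest-path diameter $D_\omega$. First I would pin down what the tree variation actually does: when a node $v$ first learns, during \textsc{DLGcomp}, that it sits at level $\ell$ of the leveled graph rooted at $r$ (so $\dist(r,v)$ matches that level), then instead of recording \emph{all} incident level-$(\ell-1)$ neighbours as predecessors it records exactly one of them (e.g.\ the one from which $r$'s wave arrived first, ties broken by identifier). The structure $T_r$ produced this way is a shortest-path tree rooted at $r$ spanning all of $V$, and its hop-depth is at most $D_\omega$, which is the natural depth parameter once the ball radius $k$ is no longer the bound. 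The key point to check here is that this pruning is a purely local decision taken from messages the node already receives in the DLG construction, so no extra rounds and no extra messages are introduced.

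For the runtime of the construction, I would isolate from the proof of Theorem~\ref{theo:DLGruntime} the statement that the schedule it produces sends only $\BO(|S| + \text{depth})$ distinct messages across any single edge, and that this bound is \emph{oblivious} to which sub-edges of the DLG are actually used. Since restricting each DLG to a tree only deletes edges, congestion can only decrease, so the identical pipelining argument yields $\BO(|S| + D_\omega)$ after substituting the depth bound $k \mapsto D_\omega$. The one thing that needs genuine care — and I expect it to be the main obstacle — is confirming that \emph{nothing else} in the analysis of Theorem~\ref{theo:DLGruntime} quietly used $k$ being small: that the per-round messages stay within a CONGEST bound, that the token of each of the $|S|$ sources can still be pipelined one behind another, and, in the weighted setting, that levels are defined by hop-length along fixed shortest paths so that the Bellman–Ford-style wave from each source settles within $D_\omega$ rounds of its activation rather than in $\Theta(n)$ rounds. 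If the original construction is depth-agnostic in this sense, this step is routine; otherwise one has to re-examine the wave propagation directly and show each node is settled at most $D_\omega$ rounds after its source wakes up.

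For the aggregation claim I would invoke \textsc{DLGagr} essentially verbatim: a tree $T_r$ is a valid leveled graph in which every non-root node has a single predecessor, so \textsc{DLGagr} aggregates through the $|S|$ trees in time $\BO(|S| + D_\omega)$ by the same per-edge pipelining bound, again via the substitution $k \mapsto D_\omega$. Any adjustment needed for tree-style aggregation (for instance, a node combining contributions only from its children rather than from all level-$(\ell{-}1)$ predecessors) lives entirely in the message-processing function handed to the framework and does not touch the schedule, hence does not change the round complexity. I would close by noting that, just as in Remark~\ref{rem:lowerbound}, the $\BO(|S|+D_\omega)$ bound cannot be improved in general, so the tree variation is time-optimal in the same sense as the DLG version.
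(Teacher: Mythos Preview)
Your proposal is correct and follows essentially the same route as the paper: the tree variation keeps only one parent per node (the sender of the first $r$-message), this is purely local bookkeeping that leaves the message schedule unchanged, and the $\BO(|S|+D_\omega)$ bound then follows from Theorem~\ref{theo:DLGruntime} with $k=D_\omega$. The paper's argument is terser still---it simply observes that only in-node storage changes, not communication---so your caution about whether the analysis ``quietly used $k$ being small'' turns out to be unnecessary, since the schedules in the two variants are in fact identical rather than merely no more congested.
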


\begin{theorem}\label{theo:bcapprox}
\ifshort
One can compute 
\fi
\iffull
Algorithm \ref{alg:BC_setup_controlling} computes 
\fi
an estimation of the betweenness centrality for at least $\hat{n}$ nodes. For all those nodes holds: for $\frac{1}{n^c}<\epsilon'<\frac{1}{2}$ and $c\geq0$, if the betweenness centrality of a node $v$ is $n^2/t$ for some constant $t\geq 1$, then with probability at least $ 1-2\epsilon'$ its betweenness centrality can  $(\times,\frac{1}{\epsilon'})$-approximated with computation of $|S|=t\epsilon'$ samples of source nodes. 
\ifshort
This algorithm 
\fi
\iffull
Algorithm \ref{alg:BC_setup_controlling} 
\fi
 runs in time $\BO\left(|S|+D_\omega\cdot\max\left(1,\log\frac{|S|}{D_h}\right)\right)$.
\end{theorem}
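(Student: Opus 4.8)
The plan is to combine a sampling-based estimator for betweenness centrality with the multi-aggregation machinery of Theorems \ref{theo:DLGruntime} and \ref{theo:tree}. First I would set up the estimator: sample a set $S$ of source nodes uniformly at random, and for each sampled source $r\in S$ run the all-shortest-paths variant of \textsc{DLGcomp}/\textsc{DLGagr} to compute, at every node $v$, the (suitably normalized) contribution of pair-dependencies routed through $v$ with source $r$. Summing these contributions over $r\in S$ and scaling by $n/|S|$ yields an unbiased estimator of the betweenness centrality of $v$; this is the standard Brandes--Pich / Bader-style single-source sampling scheme, and the earlier sections of the paper supply exactly the primitive needed to evaluate one source's dependencies via aggregation over all shortest paths in a directed leveled graph.

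Next I would do the concentration analysis. Fix a node $v$ with betweenness centrality $n^2/t$. Each sampled source contributes a bounded random variable (the per-source dependency is at most $n$, after normalization at most $1$), with expectation proportional to the true centrality. I would apply a Chernoff/Hoeffding bound to the sum of $|S|$ i.i.d.\ contributions: to get a multiplicative $(\times,1/\epsilon')$-approximation one needs the number of samples to scale like $1/(\epsilon'^2 \cdot (\text{relative centrality}))$ up to log factors, which with relative centrality $1/t$ gives $|S|=\Theta(t\epsilon')$ as claimed, and the failure probability is bounded by $2\epsilon'$ by a union-type argument over the two tail directions. I would also account for the ``at least $\hat n$ nodes'' clause: nodes with very small centrality cannot be estimated multiplicatively with few samples, so the guarantee is stated only for the nodes whose centrality is large enough relative to the sample size, and a counting argument (the total centrality mass is bounded) shows at least $\hat n$ nodes meet the threshold.

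Finally I would bound the running time. Computing and aggregating through $|S|$ directed leveled graphs of depth $k$ costs $\BO(|S|+k)$ by Theorem \ref{theo:DLGruntime}; using the tree/weighted-diameter versions replaces $k$ by $D_\omega$ as in Theorem \ref{theo:tree}. The extra factor $\max(1,\log\frac{|S|}{D_h})$ arises from pipelining: when $|S|$ is large relative to the relevant hop-diameter parameter $D_h$, the scheduling of the $|S|$ concurrent aggregations incurs a logarithmic overhead in the depth term (but not in the $|S|$ term, since the $|S|$ messages stream through), giving the stated bound $\BO\bigl(|S|+D_\omega\cdot\max(1,\log\frac{|S|}{D_h})\bigr)$.

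The main obstacle I expect is the concentration step, specifically getting the sample-count constant right so that it comes out as exactly $|S|=t\epsilon'$ with failure probability $2\epsilon'$: this requires carefully choosing the normalization of the per-source dependency random variable and invoking the multiplicative Chernoff bound in the regime where the expectation itself is small (of order $1/t$ per sample), where the bound is most delicate. A secondary subtlety is justifying the $\log\frac{|S|}{D_h}$ scheduling overhead rather than a plain $\log|S|$, which depends on the details of how the framework's pipelining interacts with the BFS depth — I would lean on the scheduling analysis already developed for \textsc{DLGagr}.
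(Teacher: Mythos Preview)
Your proposal has the right high-level shape (sample sources, run Brandes-style dependency accumulation via the DLG framework, add up), but three of its concrete explanations are wrong or missing, and each corresponds to a feature of the theorem statement that you have not accounted for.

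First, the factor $\max\bigl(1,\log\frac{|S|}{D_h}\bigr)$ has nothing to do with pipelining or scheduling overhead inside \textsc{DLGcomp}/\textsc{DLGagr}. One invocation of the framework on all of $S$ already costs $\BO(|S|+D_\omega)$ with no log factor (Theorem~\ref{theo:DLGruntime}). The log appears because Algorithm~\ref{alg:BC_setup_controlling} is \emph{adaptive}: it does not know in advance how many samples are needed, so it proceeds in rounds, starting with roughly $D_h$ sources and doubling each round until the stopping criterion fires. That gives $\log\frac{|S|}{D_h}$ rounds; the per-round $|S_i|$ terms telescope geometrically to $\BO(|S|)$, but the per-round $D_\omega$ accumulates to $D_\omega\cdot\log\frac{|S|}{D_h}$. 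Your ``pipelining overhead'' story would not recover this bound. Relatedly, the ``at least $\hat n$ nodes'' clause is not a counting argument about total centrality mass: $\hat n$ is an \emph{input} that serves as the stopping criterion for the doubling loop (the algorithm halts once at least $\hat n$ nodes have crossed the threshold $s_{BC}\ge n\cdot\tau_c$).

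Second, you entirely miss the reason the statement uses $\epsilon'$ with the constraint $\frac{1}{n^c}<\epsilon'$ rather than Bader et~al.'s plain $\epsilon$. In the CONGEST model a node cannot transmit the exact number $\sigma_{su}$ of shortest paths, since this can be $2^{\Theta(n)}$; the paper encodes it in floating point with $(3+c)\log n$ mantissa bits plus $\log n$ exponent bits, incurring a multiplicative error in $[1-n^{-c},1+n^{-c}]$ on every transmitted count. Lemma~\ref{lemma:deproundapprox} propagates this through the Brandes recursion (two inductions, one for $\sigma$ and one for $\delta$, over hop depth up to $D_h$ or $D_{sp}$) to show the computed dependencies stay within $[1-n^{-c},1+n^{-c}]$. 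This numerical error is then combined with Bader's sampling error to yield $\epsilon'=\epsilon-n^{-c}$. Without this piece your plan does not actually work in the stated model, and your proposed re-derivation of the concentration bound is unnecessary: the paper simply invokes Bader et~al.'s Theorem~3 as a black box for the sampling guarantee and spends its effort on the CONGEST-specific numerical issue.
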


\begin{theorem}\label{theo:ccapprox}
In a weighted graph $G=(V,E,\omega)$, when the weights $\omega$ denote the distance between two nodes, 
\ifshort
one can approximate 
\fi
\iffull
Algorithm \ref{alg:CC} approximates 
\fi
closeness centrality of all nodes with an inverse additive error of $\epsilon D_\omega$ and runs in $\BO\left(\frac{log(n)}{\epsilon^2}+D_\omega\right)$ time.
\end{theorem}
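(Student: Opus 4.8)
The plan is to estimate, for every node $v$, the quantity $\sum_{u \in V} \dist(v,u)$ (or its normalized version, the average distance) by sampling a small set of pivot nodes and measuring distances only from those pivots. Recall the standard trick due to Eppstein--Wang: if $R$ is a uniformly random multiset of $\ell$ nodes, then $\frac{n}{\ell}\sum_{w \in R} \dist(v,w)$ is an unbiased estimator of $\sum_{u}\dist(v,u)$, and since every term $\dist(v,w)/D_\omega \in [0,1]$, a Hoeffding/Chernoff bound gives that with $\ell = \Theta(\log n / \epsilon^2)$ samples the estimate of the \emph{average} distance $\frac{1}{n}\sum_u \dist(v,u)$ is within additive $\epsilon D_\omega$ of the true value simultaneously for all $n$ nodes (union bound over $v$, each failing with probability $n^{-\Omega(1)}$). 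Since closeness centrality is the reciprocal of the average distance, this yields the claimed inverse additive error of $\epsilon D_\omega$.

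First I would sample the pivot set $S$ of size $|S| = \Theta(\log n / \epsilon^2)$ (e.g. each node joins independently with the appropriate probability, or pick a fixed-size random multiset; a leader or shared randomness can fix $|S|$). Next, the core computation is to let every node in $S$ initiate a weighted BFS / shortest-path broadcast so that every node $v$ learns $\dist(v,w)$ for each pivot $w \in S$. This is exactly a multi-source aggregation task: I would invoke the framework of Theorem \ref{theo:DLGruntime} / Theorem \ref{theo:tree} with the source set $S$, but with $k$ set large enough to reach the whole graph, so that the relevant ``radius'' parameter becomes the weighted diameter $D_\omega$ rather than $k$. Applying the tree variant (Theorem \ref{theo:tree}) with aggregation radius $D_\omega$ gives runtime $\BO(|S| + D_\omega) = \BO(\log n/\epsilon^2 + D_\omega)$. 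Finally, once $v$ has collected the $|S|$ distances, it locally computes $\tilde{c}(v) = \frac{|S|}{\sum_{w\in S}\dist(v,w)}$ (appropriately scaled), which is the output.

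Two points need care. The first is the reduction from the CONGEST multi-source shortest-path problem to the framework: I must argue that computing $\dist(v,w)$ for all $(v,w)\in V \times S$ is an instance the framework handles in time $\BO(|S| + D_\omega)$ — this requires that the leveled-graph / tree construction of Theorems \ref{theo:DLGruntime}–\ref{theo:tree} works with weighted distances and with the "neighborhood radius" taken to be all of $V$; if the framework is stated for unweighted $k$-hop neighborhoods one must either cite the weighted extension or note that distances are bounded by $D_\omega$ so the schedule length is governed by $D_\omega$. The second is the concentration argument: I would state the Hoeffding bound for the sum of the $|S|$ bounded i.i.d.\ variables $\dist(v,w)/D_\omega \in [0,1]$, conclude a $\pm\epsilon$ additive error on the normalized average with probability $1 - 2e^{-2|S|\epsilon^2} \ge 1 - n^{-\Omega(1)}$, union-bound over all $v \in V$, and then translate the additive error on the average distance into the inverse additive error $\epsilon D_\omega$ on closeness. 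The main obstacle is the first point — cleanly fitting weighted all-pairs-from-$S$ shortest paths into the multi-aggregation framework and confirming the schedule really costs only $\BO(|S| + D_\omega)$ rather than something depending on unweighted hop-count; the sampling and concentration part is standard and routine.
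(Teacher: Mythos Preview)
Your proposal is correct and follows essentially the same approach as the paper: sample $|S|=\Theta(\log n/\epsilon^2)$ pivots \`a la Eppstein--Wang, run the multi-source shortest-path framework (Algorithm \textsc{DLGcomp}) with $k=D_\omega$ so that every node learns its distance to every pivot in time $\BO(|S|+D_\omega)$, and compute the estimator locally. Your one stated worry---whether the framework handles weighted distances with cost $\BO(|S|+D_\omega)$---is already resolved by Lemma~\ref{lemma:algdisp}, since \textsc{DLGcomp} is formulated for weighted graphs from the start (messages are delayed by edge weights), so no additional argument is needed there.
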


\begin{theorem}
Algorithm\iffull~\ref{alg:MRCT}\fi~\textsc{MRCT} computes a $(\times,2)$-approximation of the S-MRCT problem in time $\BO(|S|+D_\omega)$, when $\omega(e)$ is uniform or corresponds to the time needed to transmit a message through edge $e$..
\label{theo:mrctapprox}
\end{theorem}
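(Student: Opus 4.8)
The plan is to follow the classical "route everything through one well-chosen center'' strategy for minimum routing cost trees and to realize it inside the multi-aggregation framework. For a candidate root $r$ let $A(r):=\sum_{v\in S} d_G(r,v)$ be the sum of shortest-path distances from $r$ to the nodes of $S$, and recall that the routing cost relevant for S-MRCT is $\mathrm{rc}_S(T)=\sum_{u,v\in S} d_T(u,v)$ (the sum of tree-distances over the pairs of interest induced by $S$). Algorithm \textsc{MRCT} does three things: (i) it invokes the tree variant of \textsc{DLGcomp}/\textsc{DLGagr} to build, for every $r\in S$ simultaneously, a shortest-path tree $T_r$ and to aggregate up each $T_r$ the quantity $A(r)$ — i.e. each node forwards toward $r$ the running sum of the distances of the $S$-nodes in its subtree, tagged with its own distance to $r$; (ii) it selects $r^\ast:=\arg\min_{r\in S}A(r)$ by a single minimum-convergecast over a BFS tree of $G$ followed by a broadcast of the winner's identity; (iii) every node keeps its parent pointer in $T_{r^\ast}$, already computed in step (i), which yields the output tree.

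For correctness I would run the standard median argument. Because $T_{r^\ast}$ is a shortest-path tree, $d_{T_{r^\ast}}(u,r^\ast)=d_G(u,r^\ast)$ for every $u$, so by the triangle inequality $d_{T_{r^\ast}}(u,v)\le d_{T_{r^\ast}}(u,r^\ast)+d_{T_{r^\ast}}(r^\ast,v)=d_G(u,r^\ast)+d_G(r^\ast,v)$ for all $u,v$. Summing over $u,v\in S$ gives $\mathrm{rc}_S(T_{r^\ast})\le 2|S|\,A(r^\ast)$. For the lower bound, let $T^\ast$ be an optimal S-MRCT; since $d_{T^\ast}(u,v)\ge d_G(u,v)$ and $\sum_{v\in S}d_G(u,v)=A(u)\ge A(r^\ast)$ for every $u\in S$ (here the minimality of $r^\ast$ over $S$, not over $V$, is exactly what is needed), we get $\mathrm{rc}_S(T^\ast)\ge\sum_{u\in S}A(u)\ge|S|\,A(r^\ast)$. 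Combining the two bounds yields $\mathrm{rc}_S(T_{r^\ast})\le 2\,\mathrm{rc}_S(T^\ast)$, i.e. a $(\times,2)$-approximation.

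For the running time I would charge step (i) and its aggregation pass directly to Theorem~\ref{theo:tree}, which already gives $\BO(|S|+D_\omega)$ for building the $|S|$ shortest-path trees and for one aggregation through them; computing the $A(r)$ is exactly such a pass (a sum restricted to the $S$-nodes, supported by the framework just like the $\max$ example). The extra minimum-convergecast plus broadcast in step (ii) costs $\BO(D_\omega)$ on a BFS tree, and step (iii) is local, so the whole algorithm stays in $\BO(|S|+D_\omega)$. This is also where the hypothesis on $\omega$ enters: the $D_\omega$-terms in Theorem~\ref{theo:tree} and in the flooding of step (ii) are only correct when traversing an edge of weight $\omega(e)$ costs $\omega(e)$ rounds (uniform weights being the special case $\omega\equiv 1$); for arbitrary numeric weights unrelated to communication delay, distances could be large while routing is fast, and the bound would fail.

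The main obstacle I anticipate is not the approximation ratio — that argument is essentially Wong-style and short — but making the distributed bookkeeping fit the stated bound: one must check that the tree variant of the framework can compute precisely "the sum of graph-distances to the $S$-nodes'' within the $\BO(|S|+D_\omega)$ budget (the $A(r)$ values have to be aggregated with the same pipelining that proves Theorem~\ref{theo:tree}, and their magnitudes — up to roughly $n\cdot D_\omega$ — must fit into a CONGEST message, possibly requiring a few rounds of bit-splitting), and that the argmin selection and the subsequent agreement on $r^\ast$ genuinely cost only $\BO(D_\omega)$ rather than something scaling with $|S|$. A secondary point to be careful about is the exact form of the S-MRCT objective (pairs within $S$, or sources in $S$ against all of $V$): the lower-bound averaging step above works cleanly for the $S\times S$ cost, and if the intended objective sums over $S\times V$ the same proof goes through only after replacing $A(r)$ by $\sum_{v\in V}d_G(r,v)$ and verifying that the averaging set still contains $r^\ast$.
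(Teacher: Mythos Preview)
Your median (Wong-style) argument is correct and, in fact, supplies more than the paper does: the paper's own proof simply observes that Algorithm~\textsc{MRCT} reproduces the S-MRCT algorithm of \cite{hochuli:holzer:MRCST} and then invokes Theorem~2 of that reference for the $(\times,2)$ guarantee, together with Theorem~\ref{theo:tree} for the $\BO(|S|+D_\omega)$ time. So on the approximation side you are giving a self-contained proof where the paper merely cites one.

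There is, however, one discrepancy you should be aware of. The actual Algorithm~\textsc{MRCT} does \emph{not} aggregate $A(r)=\sum_{v\in S}d_G(r,v)$. Via $\mathcal{A}_{MRCT}'$ it aggregates the full routing cost $RC_S(T_r)$ of each shortest-path tree $T_r$, using the recursion of Lemma~\ref{lemma:recRC},
\[
rc_S(T,u)=\sum_i rc_S(T,c_i)+\sum_i \omega(u,c_i)\,|Z^1_{(u,c_i)}(T)|\cdot|Z^2_{(u,c_i)}(T)|,
\]
and then selects $r'':=\arg\min_{r\in S}RC_S(T_r)$ rather than your $r^\ast:=\arg\min_{r\in S}A(r)$. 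Your proof still establishes the theorem once you add the single line $RC_S(T_{r''})\le RC_S(T_{r^\ast})\le 2\,RC_S(T^\ast)$, since the algorithm's choice minimizes the objective over the very same candidate set $\{T_r:r\in S\}$ that contains your $T_{r^\ast}$. In short: your aggregation quantity is simpler and your analysis cleaner, while the paper's aggregation directly evaluates the objective; both yield the same factor-2 bound, and your runtime accounting (Theorem~\ref{theo:tree} for the parallel tree build/aggregate, plus an $\BO(D_\omega)$ min-convergecast/broadcast) matches the paper's. The concerns you flag about message sizes and the S$\times$S versus S$\times$V objective are real but peripheral; the paper uses the $S\times S$ cost (Definition~\ref{def:MRCT}) and relies on polynomially bounded weights so that routing costs fit in $\BO(\log n)$ bits.
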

\begin{theorem}\label{theo:mrctapprox-rand}
Algorithm~\textsc{MRCTrand} computes a $(\times,2+\varepsilon)$-approximation of the S-MRCT problem in time $\BO(D_\omega + \log(n))$, when $\omega(e)$ is uniform or corresponds to the time needed to transmit a message through edge $e$.
\end{theorem}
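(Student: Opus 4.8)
The plan is to reuse the structural $2$-approximation behind Theorem~\ref{theo:mrctapprox} but to \emph{sample} a near-optimal ``median'' of $S$ instead of computing the exact one, which removes the $|S|$ term from the running time at the price of an extra $(1+\varepsilon/2)$ factor. For a node $r$ write $W(r):=\sum_{v\in S}\dist(r,v)$ and let $T_r$ be a shortest-path tree rooted at $r$. Exactly as in the analysis of Algorithm~\textsc{MRCT}, routing every terminal pair through $r$ gives $\mathrm{cost}_S(T_r)\le 2|S|\cdot W(r)$, while tree distances dominate graph distances, so every spanning tree $T$ satisfies $\mathrm{cost}_S(T)\ge\sum_{v\in S}W(v)=:A$. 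Hence it suffices to find some $r\in S$ with $W(r)\le(1+\varepsilon/2)\,A/|S|$: then $\mathrm{cost}_S(T_r)\le(2+\varepsilon)\,A\le(2+\varepsilon)\,\mathrm{cost}_S(T^{*})$, the claimed bound.

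The algorithm \textsc{MRCTrand} proceeds as follows. (i) Build a BFS tree of $G$ and compute $n$, $|S|$ and (an upper bound on) $D_\omega$ in $\BO(D_\omega)$ rounds. (ii) Every node of $S$ joins a set $S'$ independently with probability $p=\Theta\!\big(\tfrac{\log n}{\varepsilon\,|S|}\big)$; by a Chernoff bound $|S'|=\Theta(\log n/\varepsilon)$ with high probability. (iii) Invoke the tree variant of \textsc{DLGcomp}/\textsc{DLGagr} with root set $S'$ and depth $D_\omega$ (Theorem~\ref{theo:tree}): each $r\in S'$ grows $T_r$ and, by summing along the tree the contribution $\dist(r,v)=\mathrm{level}(v)$ of every terminal $v\in S$ in its subtree, learns $W(r)$. (iv) Aggregate $\min_{r\in S'}W(r)$ together with the identity of a minimiser $r'$ up the BFS tree and broadcast $r'$. (v) Output $T_{r'}$, the SSSP tree from $r'$.

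For correctness, a reverse-averaging argument shows that at least $\tfrac{\varepsilon/2}{1+\varepsilon/2}\,|S|=\Omega(\varepsilon|S|)$ nodes $v\in S$ are \emph{good}, meaning $W(v)\le(1+\varepsilon/2)\,A/|S|$: otherwise more than $\tfrac{1}{1+\varepsilon/2}|S|$ nodes would each contribute strictly more than $(1+\varepsilon/2)\,A/|S|$ to $A=\sum_{v\in S}W(v)$, a contradiction. Each good node lies in $S'$ independently with probability $p$, so $\Pr[\,S'\text{ contains no good node}\,]\le(1-p)^{\Omega(\varepsilon|S|)}\le e^{-\Omega(\varepsilon|S|p)}=n^{-\Omega(1)}$ for a suitable constant in $p$. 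When $S'$ does contain a good node, the minimiser $r'$ is at least as good, so $W(r')\le(1+\varepsilon/2)\,A/|S|$, and the structural fact yields $\mathrm{cost}_S(T_{r'})\le(2+\varepsilon)\,\mathrm{cost}_S(T^{*})$ with high probability. Note that the algorithm never needs to know $A$ or $\mathrm{cost}_S(T^{*})$: it merely compares the sampled candidates, and the best of them inherits the guarantee of any good sample.

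For the running time, steps (i), (iv) and (v) are single BFS/SSSP-type computations plus a convergecast and broadcast of $\BO(\log n)$ values, hence $\BO(D_\omega+\log n)$ rounds; here we use, as for Theorem~\ref{theo:mrctapprox}, that $\omega$ is uniform or equals transmission delays, so a weighted shortest path of length $\ell$ is traversed in $\BO(\ell)$ rounds. Step (iii) runs the framework with $|S'|=\BO(\log n/\varepsilon)$ roots, which by Theorem~\ref{theo:tree} (equivalently Theorem~\ref{theo:DLGruntime}) costs $\BO(|S'|+D_\omega)=\BO(\log n+D_\omega)$ for constant $\varepsilon$. Summing gives $\BO(D_\omega+\log n)$. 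The delicate point, and the one I expect to require the most care, is the calibration of $p$: it must be small enough to keep $|S'|$ --- and therefore the framework's running time --- $\BO(\log n)$, yet large enough that the $\Omega(\varepsilon|S|)$ good nodes are hit with high probability; making the reverse-Markov bound and this independence/union argument line up is the crux, while the structural $2$-approximation and the distributed bookkeeping are inherited from Theorem~\ref{theo:mrctapprox} and the multi-aggregation framework.
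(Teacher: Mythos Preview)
Your proposal is correct and follows the same high-level strategy as the paper: sample $\BO(\log n)$ candidate roots from $S$, build a shortest-path tree from each using the multi-aggregation framework, and output the best one. The paper's own proof is much terser because it simply invokes Theorem~3 of \cite{hochuli:holzer:MRCST} for the approximation guarantee, whereas you supply that argument in full (the averaging bound showing $\Omega(\varepsilon|S|)$ ``good'' medians exist, plus the sampling analysis). What you wrote is essentially a reconstruction of that cited result.

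There is one minor algorithmic divergence worth noting: the paper's \textsc{MRCTrand} runs the full \textsc{MRCT} subroutine on the sampled root set $S'$, so it compares the \emph{actual} routing costs $RC_S(T_r)$ (computed via the recursion of Lemma~\ref{lemma:recRC}) and selects the minimum, while you compare the proxy $W(r)=\sum_{v\in S}\dist(r,v)$ instead. Both selection rules are valid, and your analysis correctly shows that minimising $W$ already yields the $(2+\varepsilon)$ guarantee; using $RC_S$ can only do at least as well. Your version has the virtue of being simpler to aggregate, while the paper's version reuses the existing $\mathcal{A}_{MRCT}'$ machinery unchanged.
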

Algorithm and proof of Theorem~\ref{theo:DLGruntime} is stated in Section \ref{sec:aggregate}.
Algorithm and proof of Theorem~\ref{theo:tree} is stated in Section \ref{sec:tree}.
Algorithm and proof of Theorem~\ref{theo:bcapprox} is stated in Section \ref{sec:BC}.
Algorithm and proof of Theorem~\ref{theo:ccapprox} is stated in Section \ref{sec:CC}.
Algorithm and proof of Theorem~\ref{theo:mrctapprox-rand} is stated in
\ifshort
Appendix \ref{FULL:app:MRCT}.
\fi
\iffull
Section \ref{app:MRCT}.
\fi

\subsection{Related Work}

\paragraph{(Local multi-)aggregation}
The authors of \cite{rajagopalan8data} survey the general area of data-aggregation in wireless sensor networks.
It is know \cite{anantvalee2007survey} that local multi-aggregation is useful in generalizations of Zone-Based Intrusion Detection Systems mentioned in. These systems detect local changes of the topology (assuming the graph does not change too fast) and react to it. The authors of \cite{Jiang:2008:DBT:1463434.1463440} perform local multi-aggregation in sensor networks to detect basic changes of the local environment, such as variations of temperature or concentration of toxic gas. If the value in one node passes a threshold, it starts a local aggregation. They also provide applications to fire fighters and emergency services in case of a flood. In \cite{Jiang:2008:DBT:1463434.1463440} changes of areal objects are detected by the cooperation of sensors in the locality of the changes and later reported to a central node.

\paragraph{Centrality}
Classical applications of Closeness Centrality \cite{bavelas1950communication,beauchamp1965} and Betweenness Centrality \cite{freeman1977} are in the area of social network analysis \cite{wasserman1994social}. E.g. Closeness Centrality indicates how fast information spreads and Betweeness Centrality indicates how easy it can be manipulated by certain nodes. However, these findings can often be transferred to networks based on electric devices. One example are simple routing protocols, where malicious nodes with high centrality have the power to manipulate large parts of the communication. Eppstein and Wang showed in \cite{eppstein:2001:fastCCapprox} a fast approximation algorithm to derive the closeness centrality of a node by sampling only a few nodes $S$ and computing shortest paths from $S$ to all nodes. Brandes \cite{brandes:2001:fasterBCalgo} was able to provide a a fast algorithm for computing betweenness centrality recursively, which Bader et al. \cite{bader:2007:BCapprox} extended to an even faster approximation using an adaptive sampling techique.

We continue by reviewing a few applications in network analysis and design independent of social networks: For distance approximation, when using the landmark method \cite{ng2002predicting,pias2003lighthouses}, it is mentioned in \cite{aggarwal2010survey} that a modification of \cite{rattigan2007graph} chooses landmarks using local closeness centrality. According to \cite{shao2013trinity}, this is  implemented e.g. in the Trinity Graph Engine.
A relation to network flows is established e.g. in \cite{borgatti2005centrality} for Closeness as well as Betweenness Centrality. The authors of \cite{daly2007social} propose a routing scheme based on Betweenness Centrality. \todoI{wiki has more stuff on routing}

\paragraph{MRCT}
Distributed approximations of MRCT in the CONGEST model were studied in \cite{hochuli:holzer:MRCST, khan2008efficient, sarma12}. The authors of \cite{khan2008efficient} showed how to use results of \cite{fakcharoenphol2003tight} to obtain a randomized $\BO(\log n)$-approximation of the MRCT in time $\BO\left(D_{sp}\cdot \log^2 \left(n\right)\right)$. Observe that this might be $\BO(n\log^2 n)$ even in a graph with hop diameter $D_h=1$. In \cite{hochuli:holzer:MRCST} two algorithms were presented for unweighted graphs and weighted graphs, where the weight of an edge corresponds to the traversal time of the edge. A deterministic one, that computes a $2$-approximation in linear time $\BO(n)$ and a randomized one, that computes w.h.p. an $2+\varepsilon$-approximation in time $\BO(D+\log n)$ (and variations thereof). A lower bound of $\Omega(\sqrt{n}+D)$ for any randomized $\varepsilon$-error algorithm that computes a $[poly$ $n]$-approximation was shown in \cite{sarma12} for weighted graphs. For a more detailed overview on work related to MRCT computations in various computational models, we refer the reader to \cite{hochuli:holzer:MRCST}.

\paragraph{Shortest Paths}
At the heart of our algorithms is an algorithm to compute $|S|$ directed layered graphs rooted in nodes $S\subseteq V$ in parallel. This algorithm is a modification of the $S$-Shortest Paths algorithm of \cite{holzer2012optimal}. Furthermore our modification extends this $S$-Shortest Paths algorithm to the weighted setting and to compute shortest paths only up to a certain distance that can be provided as input. Note that the $k$-source detection algorithm by Lenzen and Peleg~\cite{lenzen2013efficient} can be extended to weighted graphs in a similar way and could be modified to compute directed layered graphs as well, but we developed our method that resulted in this paper independently and simultaneously~\cite{dissler-thesis,holzer2013phd}.

\section{Model, Definitions and Problems}\label{sec:model}

\paragraph{Model:} Our network is represented by an undirected graph $G=\left(V,E\right)$. Nodes $V$ correspond to processors (computers, sensor nodes or routers),
two nodes are connected by an edge from set $E$ if they can communicate directly with each other. We denote the number of nodes of a graph by $n$, and the number of its edges by $m$. Furthermore we assume that each node has a unique ID in the range of $\{1,\dots,2^{\BO\left(\log n\right)}\}$, i.e. each node can be represented by $\BO(\log n)$ bits. 
Initially the nodes know only $G$ the IDs of their immediate neighborhood, the number of nodes in $G$ and the node with the lowest ID further denoted as node $1$.\footnote{Note that for computing the schedule in our framework we do not need to know $n$ and the node with smallest ID. These assumptions are only necessary for our applications such as centrality computation. The runtime of these applications depends on the diameter of the graph such that computing these values does not affect the runtime.} \\

An unweighted shortest path in $G$ between two nodes $u$ and $v$ is a $u-v$-path consisting of the minimum number of edges contained in any $u-v$-path. Denote by $d\left(u,v\right)$ the unweighted distance between two nodes $u$ and $v$ in $G$, which is the length of an unweighted shortest $u-v$-path in $G$. We also say $u$ and $v$ are $d(u,v)$ hops apart. By $\omega:E\rightarrow \mathbb{N}$ we denote a graph's weight function and by $\omega(e)$ the non-negative weight of an edge in $G$. For an edge $e=(u,v)$ we often write $\omega(u,v)$ to denote $\omega(e)$. For arbitrary nodes $u$ and $v$, by $\omega(u,v):=\min_{\{P|P\text{ is }u-v\text{-path in }G\}}\sum_{e\text{ is edge in }P}\omega(e)$ we define the weighted distance between $u$ and $v$, that is the weight of a shortest weighted path connecting $u$ and $v$. In case we consider a subgraph $H$ of a graph $G$, we denote by $d_H$ and $\omega_H$ the distances with respect to using only edges in $H$.

In our model we consider only symmetric edge weights, i.e. an edge $(u,v)$ has in both directions the same weight $\omega(u,v)=\omega(v,u)$. We denote a graph with weighted edges as $G=(V,E,\omega)$. By $N_k(v)$ we denote the $k$-neighborhood of a node $v$. That is in an unweighted graph all nodes in $G$ that can be reached from $v$ using $k$ hops. And in a weighted graph all nodes that can be reached from $v$ using paths with at most weight $k$.
We use the convention that $v\in N_{k}(v)$. Given a set $S\subset V$, set $N_k(S):=\cup_{v\in S}N_k(v)$ denotes the $k$-neighborhood of $S$. During the paper we denote the degree $|N_1(v)-1|$ of a node $v$ by $deg(v)$.

\begin{definition}
The weighted eccentricity $ecc_{\omega}\left(u\right)$ of a node $u$ is the largest weighted distance to any other node in the graph, that is $ecc_{\omega}\left(u\right):=\max_{v\in V} \omega\left(u,v\right)$.\\
The unweighted (hop) eccentricity $ecc_{h}\left(u\right)$ of a node $u$ is the largest hop distance to any other node in the graph, that is $ecc_{h}\left(u\right):=\max_{v\in V} d\left(u,v\right)$.
\end{definition}

\begin{definition}
The weighted diameter $D_{\omega}\left(G\right):=\max_{u\in V} ecc_{\omega}(u)$ of a graph $G$ is the maximum weighted distance between any two nodes of the graph.
\\ 
The unweighted diameter (or hop diameter) $D_{h}\left(G\right):=\max_{u\in V} ecc_{h}(u)$ of a graph $G$ is the maximum number of hops between any two nodes of the graph.
\end{definition}

Observe that $D_h=D_\omega$ for unweighted graphs. In weighted graphs it is true that $1\leq D_h\leq n$ as well as $D_{h}\leq D_\omega$.

In this paper we need a graph substructure to describe shortest paths from a root node $r$ to all other nodes. Such a representation can be found e.g. in a tree. However, for applications of our framework to e.g. calculating the Betweenness Centrality of a node in Section \ref{sec:BC}, we need to consider all shortest paths between two nodes, not just a single one as provided by a tree.
Therefore we need to define a graph structure which clearly indicates all possible shortest paths between a root node $r\in V$ and all other nodes $v\in V$.

\begin{definition}[Tree $T_r$]
 Given a node $r$, we denote the spanning tree of $G$ that results from performing a breadth-first search ($BFS_r$) starting at $r$ by $T_r$.
\end{definition}

\begin{definition}[Directed Leveled Graph (DLG) $\calL_r$, unweighted]
 Given a node $r\in V$ of a graph $G=(V,E)$, partition all nodes $v\in V$ in disjoint subsets $V_0,V_1,\dots,V_{D_h}$ with $v\in V_i$ if $d(r,v)=i$. Add an directed edge $(u,v)$ to a set $\vec E_r$ for every edge $(u,v)\in E$ with $u\in V_i$ and $v\in V_{i+1}$.
That is, every shortest path from $r$ to any other node $v$ in $G$ is a directed path in $\calL_r=(V,\vec E_r)$. We say $\calL_r$ is rooted in $r$.
A graphic is provided in 
\ifshort
Appendix \ref{FULL:sec:model}, Figure \ref{FULL:abb:leveledgraph}.
\fi
\iffull
Figure \ref{abb:leveledgraph}.
\fi

In the weighted case, we partition all nodes $v\in V$ in disjoint subsets $V_0,V_1,\dots,V_{D_\omega}$ with $v\in V_i$ if $\omega(r,v)=i$. Add an directed edge $(u,v)$ to set $\vec E_r$ for every edge $(u,v)\in E$ with $u\in V_i$ and $v\in V_{i+\omega(u,v)}$.
That is, every shortest weighted path from $r$ to any other node $v$ in $G$ is a directed weighted path in $\calL_r=(V,\vec E_r,\omega)$. 
A graphic is provided in 
\ifshort
Appendix \ref{FULL:sec:model}, Figure \ref{FULL:abb:leveledgraphweighted}.
\fi
\iffull
Figure \ref{abb:leveledgraphweighted}.
\fi

\end{definition}
\iffull
\begin{figure}[htb]
\begin{center}
\includegraphics[width=\linewidth]{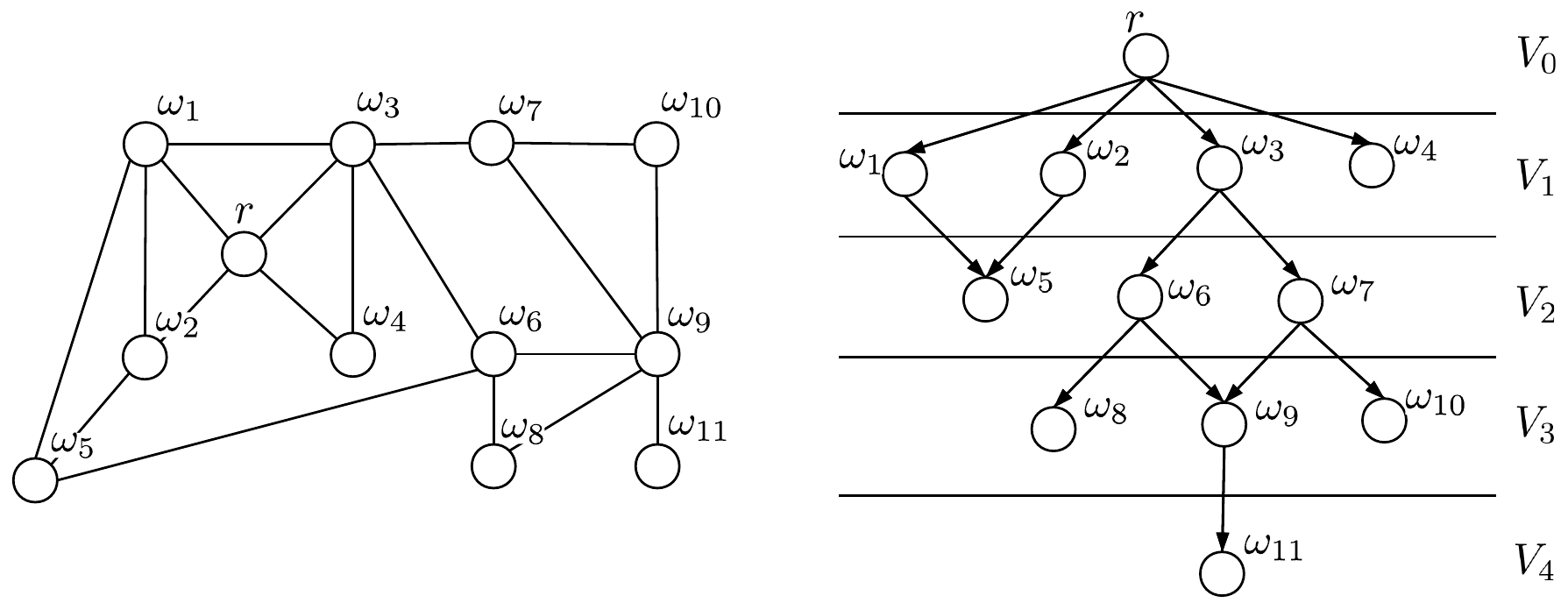}
\caption{$G=(V,E)$ on the left side and its leveled graph $\calL_r$ rooted in $r$ on the right.}\label{abb:leveledgraph}
\end{center}
\end{figure}
\begin{figure}[htb]
\begin{center}
\includegraphics[width=\linewidth]{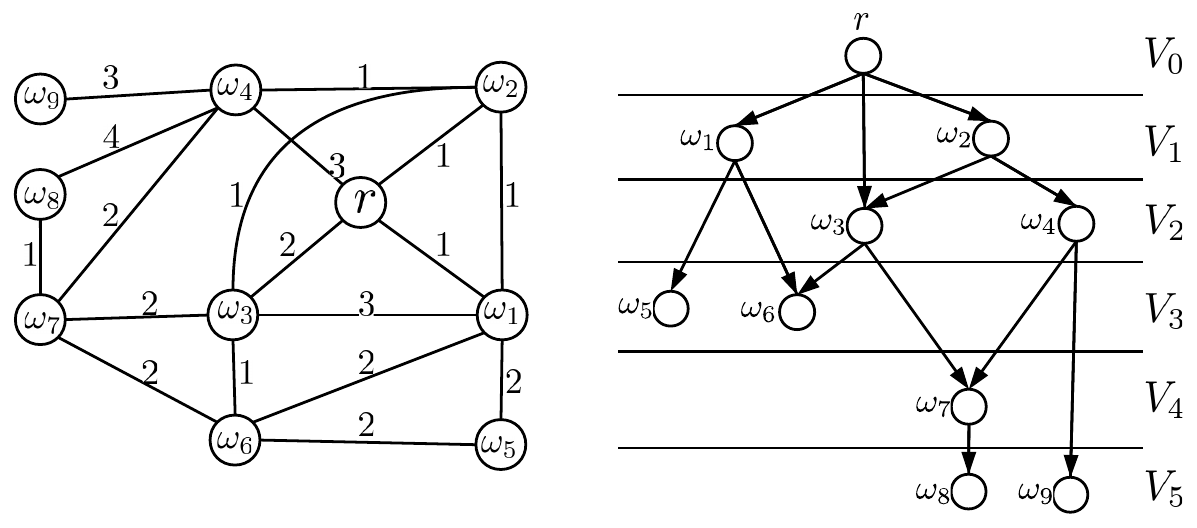}
\caption{$G=(V,E,\omega)$ on the left side and its leveled graph $\calL_r$ rooted in $r$ on the right}\label{abb:leveledgraphweighted}
\end{center}
\end{figure}
\fi

\begin{definition}[parent, children, ancestors and leaves  in a DLG]
 Given a node $v$, we define a parent of $v$ in  a DLG $\calL_r$ as any neighbor $u_p$ of $v$ connected by a directed edge $(u_p,v)\in \vec E_r$ to $v$. As a node can have several parents, we often consider a set of parents, formally defined as: $P_r(v)=\{u\in V:\{u,v\}\in E \text{ and } \omega(r,v)=\omega(r,u)+\omega(u,v)\}$. Children are all neighbors $u_c$ of $v$ connected by a directed edge $(v,u_c)\in \vec E_r$ to $v$. A leaf node in a DLG $\calL_r$ is a node without any children.
\end{definition}

\begin{definition}[approximation] 
Given an optimization problem $P$, denote by $sol_{OPT}$ the value of the optimal solution for $P$ and by $sol_A$ the value of the solution of an algorithm $A$ for $P$.
Let $\epsilon\geq 0$. We say A is a $(+,\epsilon)$-approximation (additive approximation) with additive error $\epsilon$ for $P$ if $|sol_{OPT}-sol_A|\leq\epsilon$ for any input.
Let $\delta\geq 0$. Like in \cite{eppstein:2001:fastCCapprox} we say $A$ is an inverse additive approximation with inverse additive error $\delta$ for $P$ if $\left|\frac{1}{sol_{OPT}}-\frac{1}{sol_A}\right|\leq\delta$ for any input. 
We say $A$ is $\left(\times ,\rho_1\right)$-approximative (a one-sided multiplicative approximation) for $P$ if $(\rho_1\cdot sol_{OPT})\le sol_A \le sol_{OPT}$ and $0<\rho_1<1$ or if $sol_{OPT}\le sol_A \le (\rho_1\cdot sol_{OPT})$ and $\rho_1>1s$ for any input.
\end{definition}
\iffull
Like in \cite{eppstein:2001:fastCCapprox}, the inverse additive error is used in the closeness centrality approximation, see Section \ref{sec:CC}.
\fi
\begin{fact}\label{fact:ecc-approx-diam}
 For any node $u\in V$ we know that $ecc_{\omega}\left(u\right)\leq D_{\omega}\left(G\right) \leq 2\cdot ecc_{\omega}\left(u\right)$.
\end{fact}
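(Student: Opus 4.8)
The plan is to prove the two inequalities separately, both essentially from the definitions together with the triangle inequality for the weighted shortest-path ``distance'' $\omega$.

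For the left inequality $ecc_{\omega}(u)\le D_{\omega}(G)$ I would simply unfold the definitions: since $D_{\omega}(G)=\max_{v\in V} ecc_{\omega}(v)$ is a maximum of $ecc_{\omega}$ over \emph{all} nodes, it is in particular at least $ecc_{\omega}(u)$, so nothing further is needed. For the right inequality $D_{\omega}(G)\le 2\cdot ecc_{\omega}(u)$ I would first fix a diametral pair, i.e.\ nodes $x,y\in V$ with $\omega(x,y)=D_{\omega}(G)$ (such a pair exists because $V$ is finite and $D_{\omega}(G)$ is by definition the maximum of $\omega$ over all pairs). Then I invoke the triangle inequality $\omega(x,y)\le \omega(x,u)+\omega(u,y)$, which follows in one line from the definition $\omega(a,b)=\min_{P}\sum_{e\in P}\omega(e)$: concatenating a minimum-weight $x$-$u$ path with a minimum-weight $u$-$y$ path yields an $x$-$y$ walk of total weight $\omega(x,u)+\omega(u,y)$, hence the minimum over $x$-$y$ paths is no larger. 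Finally, using symmetry $\omega(u,y)=\omega(y,u)$ and the definition of eccentricity, both $\omega(x,u)$ and $\omega(y,u)$ are at most $ecc_{\omega}(u)=\max_{v\in V}\omega(u,v)$, giving $D_{\omega}(G)=\omega(x,y)\le 2\cdot ecc_{\omega}(u)$.

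There is no genuine obstacle here; the only point deserving explicit mention is that $\omega$ is symmetric and satisfies the triangle inequality, since the fact is stated for arbitrary non-negative edge weights rather than for a metric given axiomatically. Both properties are immediate from the definition of $\omega$ as a minimum edge-weight sum over paths in an undirected weighted graph, so I would state them as a one-sentence remark and then carry out the argument above. The identical reasoning applies verbatim to the hop metric $d$ and yields the analogous bound $ecc_h(u)\le D_h(G)\le 2\cdot ecc_h(u)$, which is used implicitly elsewhere.
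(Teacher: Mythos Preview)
Your proof is correct and is exactly the standard argument. The paper itself does not prove this statement at all---it is stated as a \emph{Fact} without proof---so there is nothing to compare; your triangle-inequality argument is the expected justification.
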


\subsection{Problem Statements}
We start by formally stating the problems we consider in this paper. First we state local multi-aggregation, which is not only of interest by itself but turns out to be at the heart of the other problems. Centrality-measures are graph-properties that are based on the number of shortest paths. Besides classical applications in social network analysis \cite{carrington2005models,wasserman1994social}, they have a number of applications in design and analysis of distributed networks such as \cite{aggarwal2010survey,borgatti2005centrality,daly2007social}. Minimum Routing Cost Spanning Trees can be used to minimize the average cost of communication in a network while keeping a sparse routing-structure \cite{hu1974optimum,wong1980worst}. 
 \begin{definition}(local multi-aggregation).
In the problem of local multi-aggregation, we are given a subset $S$ of nodes $V$ in a graph $G$. Each of the nodes in $G$ contains (several) values and each node in $S$ wants to evaluate a (possibly different) aggregation-function based on (some of) these values stored in nodes of its $k$-neighborhood.
We consider two variations:
\begin{enumerate}
\item only one (shortest) $s$-$t$-path is needed for each $(s,t)\in S\times N_k$ to perform the aggregation.
\item all shortest $s$-$t$-paths are needed for each $(s,t)\in S\times N_k$ to perform the aggregation.
\end{enumerate}\end{definition}

Betweenness centrality is a measure for centrality of a node $u\in V$, which is based on the number of shortest paths in a graph $G$ node $u$ is part of. In a graph $G=(V,E)$, let $\sigma_{st}$ denote the number of shortest paths from $s$ to $t$ and let $\sigma_{st}(v)$ denote the number of shortest paths from $s$ to $t$ that go through the node $v$ for $s,t,v\in V$. Closeness centrality is another measure to identify important nodes in a graph. The closeness centrality of a node is the inverse of the average distance to all other nodes. At the heart of our solutions for these problems is the $S$-SP problem and we define:

\begin{definition}[Betweenness Centrality, as stated in \cite{freeman1977}] $BC(v):=\sum\limits_{s\neq v\neq t \in V}\frac{\sigma_{st}(v)}{\sigma_{st}}$.
\end{definition}

\begin{definition}[Closeness Centrality, as stated in \cite{beauchamp1965}] $CC(v):=\frac{n-1}{\sum_{i=1}^nd(v,i)}$.
\end{definition}

\begin{definition}\label{def:MRCT}(Minimum Routing Cost Tree (MRCT) as stated in \cite{WuLBCRT99} ).
The MRCT problem~\cite{WuLBCRT99} is defined as follows. The routing cost $RC\left(H\right)$ of a subgraph $H$ of $G$ is the sum of the routing costs of all pairs of vertices in the tree, i.e., $RC\left(H\right):=\sum_{u, v\in V} \omega_H\left(u, v\right)$. Our goal is to find a spanning tree with minimum routing cost.
Given a subset $S$ of the vertices $V$ in $G$, in the $S$-MRCT problem  \cite{hochuli:holzer:MRCST},\ our goal is to find a subtree $T$ of $G$ that spans $S$ and has minimum routing cost with respect to $S$. That is a tree $T$ such that $RC_S\left(T\right):=\sum_{u, v\in S} \omega_T\left(u, v\right)$ is minimized. Here, $RC_S\left(T\right)$ denotes the routing cost of $T$ with respect to $S$.
\end{definition}

\begin{definition}[$S$-SP]
 Let $G = (V, E)$ be a graph. In the $S$-Shortest Paths ($S$-SP) problem, we are
given a set $S\subseteq V$ and need to compute the shortest path lengths between any pair of nodes in $S\times V$ such that in the end each node in $ V$ knows its distance to all nodes in $S$.
\end{definition}

\section{Local Multi-Aggregation}
\label{sec:aggAlgo}
Many distributed algorithms to compute network properties can be stated in a way that they distribute and/or aggregate information through a tree or a directed leveled graph (DLG). For example computing the max-value of the $k$-neighborhood (using a tree, see the text below), the sum of all values in the $k$-neighborhood (using a tree, see the text below) and computing centrality measures (using a DLG, see Sections \ref{sec:BC} and \ref{sec:CC}).

By extending the algorithms of \cite{holzer2012optimal} and \cite{hochuli:holzer:MRCST}, we present an algorithm which can aggregate information in the $k$-Neighborhood of a set of root nodes $r\in S$. Note that if information from the whole graph is needed, $k$ can be set to $D_\omega$.

All known exact computations of these properties need to compute a tree or DLG for every node $v\in V$, to evaluate $v$'s dependency on other nodes.
Often the exact computation can be approximated by evaluating only the dependencies of a subset $S\subseteq V$ of nodes and therefore computing only $|S|$ trees or DLGs with depth $k$ in respect to the root nodes $r\in S$.
We provide in Section \ref{sec:aggregate} an algorithm to compute $|S|$ such DLGs rooted in the nodes of $S$ in time $\BO(|S|+k)$.
Furthermore we provide an algorithm to aggregate information on these previously computed DLGs, again with a time complexity of $\BO(|S|+k)$.
Both algorithms are able to execute additional algorithms, ${\mathcal{A}}_{agr}$ and ${\mathcal{A}}_{agr}'$ respectively within each node, to distribute and aggregate information about the desired network property, where the network property description replaces the index $agr$. These functions ${\mathcal{A}}_{agr}$ and ${\mathcal{A}}_{agr}'$ can be chosen depending on the aggregating problem at hand.
A possible application with choices of ${\mathcal{A}}_{max}$ and ${\mathcal{A}}_{max}'$ to compute the max-value in the $k$-neighborhood of each root node $r\in S$ is shown in 
\ifshort
Appendix \ref{FULL:sec:aggAlgo}, Example \ref{FULL:ex:maxval}.
\fi
\iffull
Example \ref{ex:maxval}.

\begin{example}
(Computing the max-value of each k-Neighborhood of a set of nodes $r\in S$, $S\subseteq V$, with our proposed algorithm.)
Each node $r\in S$ starts Algorithm\iffull~\ref{alg:DLGcomp}\fi~\textsc{DLGcomp}. 
While spanning the DLGs $\calL_r$ rooted in each node $r\in S$ in Algorithm\iffull~\ref{alg:DLGcomp}\fi~\textsc{DLGcomp} no algorithm $\mathcal{A}_{max}$ needs to be specified. In Algorithm\iffull~\ref{alg:DLGagr}\fi~\textsc{DLGagr} the maximum node value needs to be aggregated. Therefore we define algorithm $\mathcal{A}_{max}'$ as follows: $\mathcal{A}_{max}'$ initializes on each node $u\in V$ for each root node $r$ a variable $max_v[r]$ to store the highest value received from any child in $\calL_r$, first the node's own value $val(u)$ is stored in $max_v[r]$ for every $r\in S$. On execution of $\mathcal{A}_{max}'$ with message $msg_c$ and ID $r$ from a child, $\mathcal{A}_{max}'$ stores the value of $msg_c$ in $max_v[r]$ if $val(msg_c)>max_v[r]$. Then $max_v[r]$ is stored in $msg_p$ and sent to the parents of $u$.
After the execution of Algorithm\iffull~\ref{alg:DLGcomp}\fi~\textsc{DLGcomp} and\iffull~\ref{alg:DLGagr}\fi~\textsc{DLGagr} each root node $r$ stores in $max_v[r]$ the max-value of its own k-Neighborhood.
\label{ex:maxval}
\end{example}

\fi
 Or  ${\mathcal{A}}_{sum}$ and ${\mathcal{A}}_{sum}'$ to compute the sum of all values in the $k$-neighborhood of each root node $r\in S$ is provided in 
\ifshort
Appendix \ref{FULL:sec:aggAlgo}, Example \ref{FULL:ex:sumval}.
\fi
\iffull
Example \ref{ex:sumval}.

\begin{example}\label{ex:sumval}
(Computing the sum of all values in each k-Neighborhood of a set of nodes $r\in S$, $S\subseteq V$, with our proposed algorithm.)
Each node $r\in S$ starts Algorithm\iffull~\ref{alg:DLGcomp}\fi~\textsc{DLGcomp}. 
While spanning the DLGs $\calL_r$ rooted in each node $r\in S$ in Algorithm \iffull~\ref{alg:DLGcomp}\fi~\textsc{DLGcomp} no algorithm $\mathcal{A}_{max}$ needs to be specified. In Algorithm\iffull~\ref{alg:DLGagr}\fi~\textsc{DLGagr} the sum of all node values in the k-Neighborhood needs to be aggregated. Therefore we define algorithm $\mathcal{A}_{sum}'$ as follows: $\mathcal{A}_{sum}'$ initializes on each node $u\in V$ for each root node $r$ a variable $sum_v[r]$ to store the sum of all values received from any child in $\calL_r$ plus its own value $val(u)$. First the node's own value $val(u)$ is stored in $sum_v[r]$ for every $r\in S$. On execution of $\mathcal{A}_{sum}'$ with message $msg_c$ and ID $r$ from a child, $\mathcal{A}_{sum}'$ adds the value of $msg_c$ to $sum_v[r]$. Then $sum_v[r]$ is stored in $msg_p$ and sent to (only) one parent of $u$. If node $u$ has more than one parent, the message is sent to the parent node with the lowest ID. This results in the Tree Variation in Section \ref{sec:tree}.
After the execution of Algorithm\iffull~\ref{alg:DLGcomp}\fi~\textsc{DLGcomp} and\iffull~\ref{alg:DLGagr}\fi~\textsc{DLGagr} each root node $r$ stores in $sum_v[r]$ the sum of all values in its own k-Neighborhood.
\end{example}

\fi
\begin{remark}[Time optimality of Algorithms\iffull~\ref{alg:DLGcomp}\fi~\textsc{DLGcomp} and\iffull~\ref{alg:DLGagr}\fi~\textsc{DLGagr}]\label{rem:lowerbound}
In a setting where edge-weights correspond to the transmission time through the corresponding edge, any solution for an algorithm to aggregate information from all nodes $u\in V$ to one root node $r$ needs at least $\Omega(D_\omega)$ time. Now consider the graph of Figure~\ref{fig:lowerbound}. There we have $|S|$ root nodes $1,\dots,|S|$ connected to a chain of $D-1$ nodes and assume this chain determines the diameter $D_\omega$. Assume each of the chain nodes stores $|S|$ different values that are encoded using $\BO(\log n)$ bits each. Assume each root node $i$ computes an aggregation function based on the $i$'th value of each chain node. Due to congestion, the time until all $|S|$ values of the chain node at distance $D_\omega$ to the root nodes arrives at the corresponding root nodes is $|S|+D_\omega$. This can be extended to $|S|+k$ in the unweighted case with $k<D$.
\end{remark}
\begin{figure}[ht]
	\begin{center}
		\includegraphics[width=.5\textwidth]{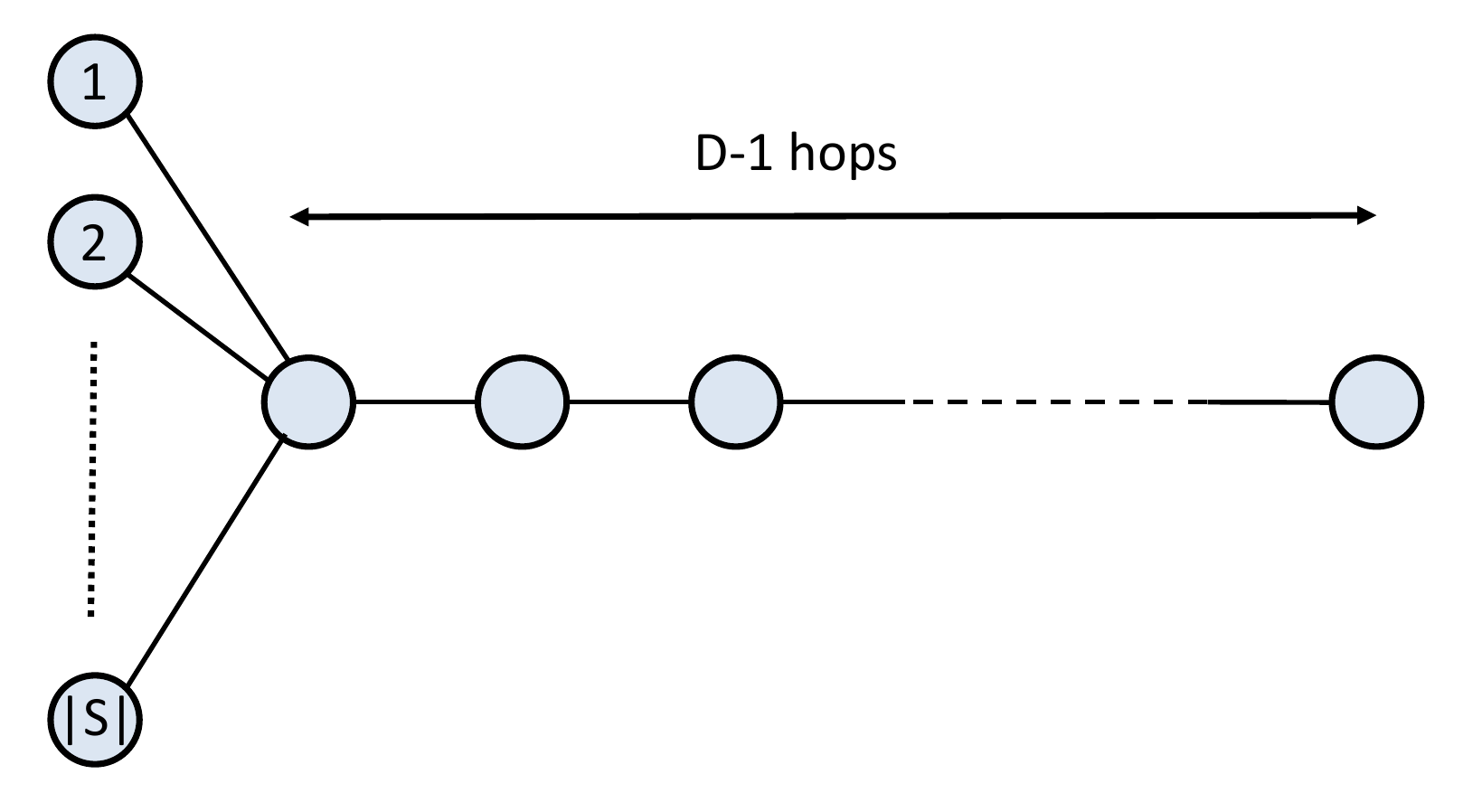}
	\end{center}
	\caption{Graph based on the construction used in Remark~\ref{rem:lowerbound}.}
	\label{fig:lowerbound}
\end{figure}

\subsection{Algorithm for Local Multi-Aggregation}\label{sec:aggregate}

First we describe Algorithm\iffull~\ref{alg:DLGcomp}\fi~\textsc{DLGcomp} which computes $|S|$ times a DLG with depth $k$, one for each node in $S$, and can distribute/broadcast information (depending on the aggregation function at hand) along these computed DLGs. In Algorithm\iffull~\ref{alg:DLGcomp}\fi~\textsc{DLGcomp} we extend the $S$-SP Algorithm of \cite{holzer2012optimal}, which computes shortest paths. (Note that alternatively one could also modify the $k$-source detection algorithm by Lenzen and Peleg~\cite{lenzen2013efficient}.) In Algorithm\iffull~\ref{alg:DLGagr}\fi~\textsc{DLGagr}, we aggregate information efficiently through the computed DLGs.
\ifshort
\footnote{In addition to the descriptions in the text below, the pseudocode of Algorithms \textsc{DLGcomp} and \textsc{DLGagr} are stated in Appendix \ref{FULL:sec:aggregate} (for the convenience of the reviewer).}
\fi
Algorithm\iffull~\ref{alg:DLGcomp}\fi~\textsc{DLGcomp} is designed as a subroutine and has two input parameters $|S|$ and $k$ as well as a function ${\mathcal{A}}_{agr}$.
Parameter $|S|$ is the number of DLGs $\calL_r$ rooted in $r\in S$ which Algorithm\iffull~\ref{alg:DLGcomp}\fi~\textsc{DLGcomp} has to compute.
Parameter $k$ denotes the depth of the $k$-neighborhood.

And function ${\mathcal{A}}_{agr}$ specifies an algorithm executed along with the computation of the $|S|$ DLGs.
Further a node $r\in S$ knows that it is in $S$ (e.g. as it is sampled by some procedure or interested in performing an aggregation by itself).
\todoI{describe basic short idea of Algo}
Lets consider first only the special case where we compute a single DLG $\calL_r$ rooted in node $r$. We denote by \emph{$r$-message} any messages sent during an execution of Algorithm\iffull~\ref{alg:DLGcomp}\fi~\textsc{DLGcomp} that belongs to the construction of the DLG rooted in node $r$ (later we keep this term for different roots $r\in S$).  Node $r$ starts with sending an $r$-message to all its neighbors $w$ and in the next time slot, those neighbors $w$ send an $r$-message to their neighbors, except to $r$.
A DLG $\calL_r$ continues to grow as follows: in time slot $t$ all nodes $u$ at distance $\omega(r,u)=t$ receive a $r$-message from their neighbors $w$ with distance $\omega(r,w)=t-1$. If the first $r$-message is received from neighbor $w_1$ of node $u$, $w_1$ is considered a parent of $u$ in $\calL_r$. However, if in the same time slot $u$ receives further $r$-messages from different neighbors $w_i$, those are considered parents too. One time slot after receiving the first $r$-message(s), node $u$ computes an $r$-message and sends it to all neighbors $w_i$ which are not considered a parent of $u$.

Now we say a $r$-message received by $u$ is considered \emph{valid}, if it is sent from a parent of $u$ in $\calL_r$. Each $r$-message contains the ID of the root node $r$ and the weighted distance $\omega_r:=d_\omega(r,u)$ from $r$ to the receiving node $u$.

To simulate a weighted graph, the messages get delayed 
\iffull in Line \ref{line:li} \fi
 corresponding to the corresponding edge weight $\omega(u,v)$, if a message is to be sent from node $u$ to $v$. For unweighted graphs, the variable $l_i$
\iffull (Line \ref{line:li}) \fi
 contains the lowest ID in $L_i$ which is not in $L_{delay}$.\todoI{details: what is $L_i$, $L_{delay}$.}

While computing $\calL_r$, the ID $r$ and some additional information is stored and sent along with the $r$-messages. Each node stores time $t$ and parent $w$ from which it received a $r$-message
\iffull
 (Line \ref{line:tau} and \ref{line:tauplusone})
\fi
. This is needed to efficiently execute Algorithm\iffull~\ref{alg:DLGagr}\fi~\textsc{DLGagr} later on. 
A node $u$ stores in schedule $\tau[r][h]$ and in $parent\_in\_T_{r_h}$ arrival times and parent nodes $u_h$ if $u\in N_k(r)$ \iffull(Lines \ref{line:kvarstart} to \ref{line:exf} are only executed if $u\in N_k(r_i)$, i.e. if $\omega_i\leq k$, where $\omega_i$ was received in a $r_i$-message)\fi.
Nonetheless, $r_i$-messages are still forwarded, even if the distance $\omega_i$ in a $r_i$-messages is larger than $k$.
This is needed to ensure that $r$-messages are properly delayed.

Furthermore each node $u$ executes algorithm ${\mathcal{A}}_{agr}$\iffull (Line \ref{line:exf})\fi, and can be used to distribute information in $\calL_r$. Algorithm ${\mathcal{A}}_{agr}$ is executed with parameters $(r,msg_p,u_h)$ each time $u$ receives a valid $r$-message from neighbor $u_h\in \{u_1,\dots,u_{deg(u)}\}$. Message $msg_p$ is the part of the $r$-message computed by algorithm ${\mathcal{A}}_{agr}$ on a parent node $u_h$ of $u$ in $\calL_r$.
After the execution for every parent  $u_h$ of $u$ in $\calL_r$ with the corresponding $msg_p$,  algorithm ${\mathcal{A}}_{agr}$ computed and stored in $msg_c[r]$ the message which is sent to the children of $u$ in $\calL_r$\iffull (Line \ref{line:sendreceivedist})\fi. Furthermore algorithm ${\mathcal{A}}_{agr}$ is executed once on every node for definition and initialization of additional global variables on the node\iffull (Line \ref{line:DLGinitf})\fi.

With Algorithm\iffull~\ref{alg:DLGcomp}\fi~\textsc{DLGcomp}, multiple leveled graphs $\calL_{r_1},\dots, \calL_{r_{|S|}}$ start growing in the same time slot. One $\calL_r$ for each $r\in S$. This could lead to congestions. To prevent that, every time two Algorithm\iffull~\ref{alg:DLGcomp}\fi~\textsc{DLGcomp} messages cross\footnote{That means both messages are received in the same time slot by the same node or both are sent in the same time slot through the same edge $(u_h,u)$, one from $u$ to $u_h$ and one in the opposite direction.}, the Algorithm\iffull~\ref{alg:DLGcomp}\fi~\textsc{DLGcomp} message with the higher ID is delayed one time slot. 
This is done\iffull in Line \ref{line:DLGputLdelay}\fi by putting the higher ID into queue $L_{delay}$ or by retransmitting the message in the next time slot through the same edge $(u_h,u)$ again, respectively\iffull (due to the if-statement in Line \ref{line:DLGdelay})\fi.

Similar as in the proof of the S-SP algorithm \cite{holzer2012optimal}, we show that the total delay of any Algorithm\iffull~\ref{alg:DLGcomp}\fi~\textsc{DLGcomp} is at most $\BO(|S|)$ when $|S|$ many DLGs are constructed in parallel. And we show that despite of the delays we still construct valid leveled graphs.

\iffull
\begin{algorithm}[tp]
\small
\begin{algorithmic}[1]
\STATE $L:=\emptyset$,
\STATE $L_{delay}:=\emptyset$;
\STATE $\tau:=\{\infty,\infty,\dots,\infty\}\times[n]$; \COMMENT{$\tau[v][i]=$ time when message of DLG $\calL_v$ reaches $u$, for each parent $w_1,\dots,w_{\max}\in P_v[u]$ enumerated in $i $}\label{line:tauinit}
\STATE $sp:=\{0,0,\dots,0\}$; \COMMENT{sp[v] := number of parents in shortest paths from $v$ to $u$}
\STATE $\omega:=\{\infty,\infty,\dots,\infty\}$; \COMMENT{$\omega[v]=\omega(v,u)$, $\omega[\emptyset]:=\infty$}\label{line:omegainit}
\IF{$u \in S$}
        \STATE $L:=\{u\}$;
        \STATE $\omega[u]:=0$;
        \STATE $\tau[u][1]:=0$;
\ENDIF
\STATE $L_1,\dots,L_{deg(u)}:=L$;
\STATE \textbf{initialize} algorithm ${\mathcal{A}}_{agr}$;\label{line:DLGinitf}
\FOR{$t=0,\dots,|S|+k$}\label{line:DLGdisploop}
        \FOR{$i=1,\dots,deg(u)$}
                \STATE $l_i:=\min\left\{v\in L_i : \tau[v][sp[v]]+\omega(u,v)\ge t \text{ and }\ v\notin L_{delay}\right\}$; // smallest ID that is not delayed and ready to be scheduled \label{line:li}
        \ENDFOR
        \STATE within one time slot:\label{line:sendreceivedist}
        \newline \textbf{send} $\left(l_1,\omega[l_1]+\omega\left(u,u_1\right), msg_c[l_1]\right)$ to neighbor $u_1$, receive $\left(r_1,\omega_1, msg_{p,1}\right)$ from $u_1$;
        \newline \textbf{send} $\left(l_2,\omega[l_2]+\omega\left(u,u_2\right), msg_c[l_2]\right)$ to neighbor $u_2$, receive $\left(r_2,\omega_2, msg_{p,2}\right)$ from $u_2$;
\todoI{$\omega_2$ vs. $w_i$}        \newline $\dots$
        \newline \textbf{send} $\left(l_{deg(u)},\omega[l_{deg(u)}]+\omega\left(u,l_{deg(u)}\right), msg_c[l_{deg(u)}]\right)$ to neighbor $u_{deg(u)}$, receive $\left(r_{deg(u)},\omega_{deg(u)}, msg_{p,deg(u)}\right)$ from $u_{deg(u)}$;
        
        \STATE $R:=\{r_i:r_i<l_i\text{ and }i\in \{1,\dots,deg(u)\}\}\setminus L$;
        \STATE $s:=min(L_{delay})$;
        \IF{$s\leq \min(R) \textbf{ and } s<\infty$}
                \STATE $L_{delay}:=L_{delay}\setminus\{s\}$;
        \ENDIF
        
        \FOR{$i=1,\dots,deg(u)$}
                \IF[$T_{l_i}$'s message will be delayed due to $T_{r_i}$.]{$r_i<l_i$}\label{line:DLGdelay}
                        
                        \IF{$\omega_i\leq \omega[r_i]$} \label{line:treevarstart}
                                                \IF{$\omega_i\leq k$} \label{line:kvarstart}
                                                        \STATE $sp[r_i]=sp[r_i]+1$;
                                                        \STATE $\tau[r_i][sp[r_i]]:=t$;\label{line:tau}
                                                        \STATE $parent\_in\_T_{r_i}[sp[r_i]]:=$ neighbor $u_i$; \label{line:tauplusone}
                                                        \STATE \textbf{execute} algorithm ${\mathcal{A}}_{agr}(r_i, msg_{p,i})$; \label{line:exf}
                                                \ENDIF
                                                \IF{$r_i\notin L$}
                                                        \STATE $\omega[r_i]=\omega_i$; \label{line:omegacomp}
                                                        \STATE $L:=L\cup\{r_i\}, L_1:=L_1\cup\{r_i\}, L_2:=L_2\cup\{r_i\},\dots,L_{i-1}:=L_{i-1}\cup\{r_i\},L_{i+1}:=L_{i+1}\cup\{r_i\},\dots,L_{deg(u)}:=L_{deg(u)}\cup\{r_i\}$;
                                                        \IF{$\min(R)<r_i \textbf{ or } s<r_i$}
                                                                \STATE $L_{delay}:=L_{delay}\cup\{r_i\}$; \label{line:DLGputLdelay}
                                                        \ENDIF
                                                \ELSE
                                                        \STATE $L_i:=L_i\setminus \{r_i\}$;
                                                \ENDIF
                                        \ENDIF \label{line:treevarend}        
                \ELSE
                        \STATE $L_i:=L_i\setminus \{l_i\}$; \COMMENT{$T_{l_i}$'s message was successfully sent to neighbor $u_{i}$.}
                \ENDIF
        \ENDFOR
\ENDFOR
\end{algorithmic}
\caption{\textsc{DLGcomp} compues $|S|$ DLGs in the k-Neighborhood of the root nodes $r\in S$ \newline
(executed by node $u$) \newline 
        \textbf{Input:} $|S|$, $k$, ${\mathcal{A}}_{agr}$\newline 
        \textbf{parameters passed to ${\mathcal{A}}_{agr}(v, msg_p, u_h)$:} message $msg_p$ of a parent $u_h$ of $u$ in DLG $\calL_v$}
\label{alg:DLGcomp}
\end{algorithm}
\fi

In  Algorithm\iffull~\ref{alg:DLGagr}\fi~\textsc{DLGagr}, information gets aggregated according to Algorithm $\mathcal{A}_{agr}'$. To do this, the computed leveled graphs of each root node $r_j$ get processed in a bottom-up fashion. 
Algorithm\iffull~\ref{alg:DLGagr}\fi~\textsc{DLGagr} has three inputs: the number of DLGs $|S|$, the depth $k$ of the $k$-neighborhood, which are needed to bound the runtime, and an algorithm $\mathcal{A}_{agr}'$, which can consist of an initialization part and a computation part. The initialization part is executed once on every node $u\in V$ starting the computation\iffull(loop of Algorithm\iffull~\ref{alg:DLGagr}\fi~\textsc{DLGagr}, Line \ref{line:DLGloopagg})\fi. In the loop in time slot $t=|S|+k-\tau[r_j][h]$ a node $u$ sends the information computed by $\mathcal{A}_{agr}'$ to its parent $u_h\in \{u_1,\dots,u_{deg(u)}\}$ in $\calL_{r_j}$. Here schedule $\tau[r_j][h]$ is the time when $u$ received a valid $r_j$-message from neighbor $u_h$. The schedule $\tau$ was computed before in Algorithm\iffull~\ref{alg:DLGcomp}\fi~\textsc{DLGcomp}\iffull (Line \ref{line:tau})\fi. 

The computation part of $\mathcal{A}_{agr}'(r,msg_{c,u_h},u_h)$ is executed on a node $u\in V$ each time $u$ received a message $msg_{c,u_h}$ from a child node $u_h$ in $\calL_r$. When $u$ has received all messages $msg_{c,u_h}$ form all children $u_h$ in $\calL_r$ and $u$ has executed $\mathcal{A}_{agr}'$ once for each of those messages, $\mathcal{A}_{agr}'$ computed and stored in $msc_p[r]$ the information that is subsequently sent to all parents of $u$ in $\calL_r$.

\iffull
\begin{algorithm}[ht]
\begin{algorithmic}[1] 
\STATE \textbf{initialize} algorithm ${\mathcal{A}}_{agr}'$; // and use same variables as Algorithm\iffull~\ref{alg:DLGcomp}\fi~\textsc{DLGcomp}
\FOR{$t=0,\dots,|S|+k$}\label{line:DLGloopagg}
        \STATE within one time slot:\label{line:sendreceiveagg}
        \newline
                          \hspace*{0.5cm}\textbf{foreach} $v\in L$, $i\in \{1,\dots,sp[v]\}$ \textbf{such that} $t=|S|+k-\tau[v][i]$ \textbf{do}\\
\hspace*{1cm}\textbf{send} $\left(v,msg_p[v]\right)$ to $parent\_in\_T_{v}[i]$;
        \newline \hspace*{0.5cm}\textbf{receive} $\left(v_1,msg_{c,u_1}\right)$ from neighbor $u_1$;
        \newline \hspace*{0.5cm}\textbf{receive} $\left(v_2,msg_{c,u_2}\right)$ from neighbor $u_2$;
        \newline \hspace*{0.5cm}$\cdots$
        \newline \hspace*{0.5cm}\textbf{receive} $\left(v_{deg(u)},msg_{c,u_{deg(u)}}\right)$ from neighbor $u_{deg(u)}$;
        \FOR{$i=1,\dots,deg(u)$}
                \IF{$v_i\neq$ empty}
                        \STATE \textbf{execute} algorithm ${\mathcal{A}}_{agr}'(v_i, msg_{c,u_i})$;\label{line:exg}
                \ENDIF
        \ENDFOR
\ENDFOR
\end{algorithmic}
\caption{\textsc{DLGagr} computes aggregation function (executed by node $u$) \newline
        \textbf{Input:} $|S|$, $k$, ${\mathcal{A}}_{agr}'$  \newline
        \textbf{parameters passed to ${\mathcal{A}}_{agr}'(v,msg_c, u_h)$:} message $msg_c$ of a child $u_h$ of $u$ in DLG $\calL_v$}
\label{alg:DLGagr}
\end{algorithm}
\fi

\begin{lemma}
Algorithm\iffull~\ref{alg:DLGcomp}\fi~\textsc{DLGcomp} computes $|S|$ valid directed leveled graphs, in time $\BO(|S|+D_\omega)$ for $k=D_\omega$.
\label{lemma:algdisp}
\end{lemma}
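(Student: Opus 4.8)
The plan is to establish two things separately: (1) correctness, i.e. that for each $r \in S$ the structure recorded by the nodes (parents, arrival times $\tau$, distances $\omega$) really is a valid DLG $\calL_r$ as defined, and (2) the running time bound $\BO(|S| + D_\omega)$ when $k = D_\omega$. For correctness, I would argue by induction on the level $i$ of $\calL_r$ (i.e.\ on the weighted distance $\omega(r,v) = i$). The base case is $r$ itself at level $0$. For the inductive step, I would show that a node $u$ at distance $i$ receives, in the time slots it expects, a valid $r$-message from exactly the neighbors $w$ with $\omega(r,w) = i - \omega(w,u)$ and $\omega(r,u) = \omega(r,w) + \omega(w,u)$ — i.e.\ precisely the set $P_r(u)$ — and records each of them as a parent in lines~\ref{line:tau}--\ref{line:tauplusone}, while forwarding its own $r$-message one step further to its non-parent neighbors. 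The subtlety is that messages are \emph{delayed}: I must argue that a delay never causes a node to (a) miss a genuine parent, or (b) accept a wrong parent. This follows because the condition in line~\ref{line:DLGdelay} together with the bookkeeping via $L_{delay}$ only reorders \emph{which time slot} an $r$-message is sent in, not \emph{whether} it is eventually sent, and the distance value $\omega_i$ carried in each message is computed from $\omega[r_i]$ independently of timing; so the first-received/simultaneously-received test that defines parenthood still selects exactly the nodes on shortest weighted paths. The weighted-edge simulation (line~\ref{line:li}, holding a message until $\tau[v][sp[v]] + \omega(u,v) \ge t$) is what makes the $\omega$-distances, rather than hop-distances, the relevant quantity.

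For the running-time bound, I would piggyback on the analysis of the $S$-SP algorithm of \cite{holzer2012optimal}: the only new ingredients are (i) weighted edges, (ii) the $k$-cutoff, and (iii) the extra payload $msg_c, msg_p$ carried along and the call to ${\mathcal{A}}_{agr}$. Point (iii) is free: ${\mathcal{A}}_{agr}$ runs locally within a node between time slots and the messages are still $\BO(\log n)$ bits (this is implicitly assumed of the aggregation functions), so it does not affect the CONGEST schedule. Point (ii) only removes work, since with $k = D_\omega$ no message is ever dropped for exceeding $k$, so this case is exactly the full computation. Point (i): the weighted-edge delays are handled by the standard device of treating an edge of weight $\omega(u,v)$ as a path of $\omega(u,v)$ unit-weight dummy nodes; the total "length" along any root-to-node path is then its weighted distance, bounded by $D_\omega$. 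The core claim, to be proved exactly as in \cite{holzer2012optimal}, is that the \emph{total} accumulated delay experienced by any single $r$-message is $\BO(|S|)$: whenever two algorithm messages collide, only the one with the larger root-ID is delayed, and a message with root $r$ can be delayed only by messages whose roots are smaller than $r$, of which there are at most $|S| - 1$; a charging/potential argument (each unit of delay charged to a distinct smaller-ID message that "overtakes" it) bounds the total. Hence every $r$-message reaches every node within distance $D_\omega$ by time $\BO(|S| + D_\omega)$, and the loop bound $t = 0, \dots, |S| + k$ in line~\ref{line:DLGdisploop} is large enough.

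The main obstacle, I expect, is making the delay-charging argument airtight in the presence of both mechanisms by which the algorithm stalls a message — putting an ID into $L_{delay}$ (a node-side stall) and retransmitting across the same edge (an edge-side stall) — and in the presence of the weighted-edge delays, which themselves consume time slots that must not be double-counted against the $\BO(|S|)$ delay budget. The clean way to handle this is to separate the two time contributions at the outset: define the "intrinsic" arrival time of an $r$-message at $u$ as $\omega(r,u)$ (what it would be with no contention), and prove that the actual arrival time is $\omega(r,u) + (\text{contention delay})$ with contention delay $\le |S|$; the weighted part is absorbed into the first term by the dummy-node reduction, so the collision analysis is then identical to the unweighted $S$-SP proof. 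I would state this as the key lemma and then note that Theorem~\ref{theo:DLGruntime}'s first half (general $k$) follows by the same argument with $D_\omega$ replaced by $k$, since the $k$-cutoff stops recording but not forwarding, and forwarding beyond distance $k$ contributes at most $\BO(D_\omega)$ further slots only if $k = D_\omega$ — for $k < D$ one additionally observes messages need only be forwarded while relevant, giving the $\BO(|S|+k)$ bound claimed in the theorem.
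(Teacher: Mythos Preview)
Your runtime argument is sound and essentially matches the paper: the total contention delay suffered by any $r$-message is at most $|S|-1$ because only smaller-ID messages can delay it, and the intrinsic (undelayed) travel time is at most $D_\omega$. The dummy-node device for weighted edges is a clean way to think about it, though the paper simply handles weights through the scheduling condition in line~\ref{line:li}.

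Your correctness argument, however, has a real gap. You claim a node cannot ``accept a wrong parent'' because ``the distance value $\omega_i$ carried in each message is computed from $\omega[r_i]$ independently of timing.'' That premise is false: $\omega[r_i]$ is assigned exactly once, to the value carried by the \emph{first} $r$-message that arrives at $u$ (line~\ref{line:omegacomp}, guarded by $r_i\notin L$). If delays ever let an $r$-message along a non-shortest path arrive first, $\omega[r_i]$ would be set too large, that sender would be (wrongly) recorded as a parent via the test $\omega_i\le\omega[r_i]$ in line~\ref{line:treevarstart}, and $u$ would then forward an inflated distance downstream. So correctness really does depend on timing: you must show that the first $r$-message to arrive at $u$ always carries the true $\omega(r,u)$, i.e.\ that the relative arrival order of $r$-messages from different neighbors of $u$ is preserved under delays.

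The paper's proof supplies precisely this step. It shows that for any two neighbors $u_1,u_2$ of $u$, the \emph{set} of lower-ID messages that delay the $r$-message on its way through $u_1$ is identical to the set that delays it on its way through $u_2$ (because any $i$-message with $i<r$ that interferes along one path must already have reached $u$ before the $r$-message and hence interferes along the other as well). Both copies therefore incur the same total delay, so the one coming via the shorter weighted path still arrives strictly first. This path-independence of the delay is the crux of correctness; your induction on level $i$ cannot close without it, and nothing in your plan (``delays only reorder which time slot\ldots'') substitutes for it.
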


\ifshort
We state the proof of Lemma \ref{lemma:algdisp} in Appendix \ref{FULL:sec:aggregate}.
\fi
\iffull
\begin{proof}
The proof is very similar to the proofs for the S-SP algorithm in \cite{holzer2012optimal}. For the \textit{correctness} we state a slightly adapted version of Lemma 12 in \cite{holzer2012optimal}.

\textbf{Correctness: }
First, assume that $S=\{r\}$ and consider the computation of $\calL_r$. At time t, each node $u$ at weighted distance $t$ from $r$ receives a $r$-message from all of $u$'s neighbors $u_h$ that are at distance $t-\omega(u_h,u)$ to $r$. All edges incident to neighbors $u_h$ that sent such a message are added to DLG $\calL_r$, directed from $u_h$ to $u$.

Now consider the case, where the set $S\supset\{r\}$ contains at least two nodes.
We analyze how the computations of other nodes affect the computation of $\calL_r$. 
We say that a $r$-message has \textit{`reached'} a node $u$ through edge $(u_h, u)$ from a neighbor $u_h$ in time slot t, if the message was successfully received and has been removed from the delay queue $L_{delay}$ in time slot $t$, or it was successfully received in time slot $t$ and not put into the queue $L_{delay}$ at all.
It turns out that the first $r$-messages of $\calL_r$'s computation which \textit{reach} $u$ are transmitted through the edges at distance $t-\omega(u_h,u)$ as in the case before. Consider two neighbors $u_1,u_2\in N_1(u)$ of $u$; we can ignore the case that $u$ has only one neighbor since this case trivially satisfies our claim.
A $r$-message containing weighted distance $\omega(r,u_j)$, has \textit{reached} $u$ through the edge $(u_1,u)$ earlier than through edge $(u_2,u)$ if and only if $\omega(r,u_1)<\omega(r,u_2)$.
We show this by proving that the set of messages with lower IDs which delay the $r$-messages is the same for both paths $(r,u_1,u)$ and $(r,u_2,u)$: Assume that the computation of $\calL_i$ is delaying the $r$-message sent through $(r,u_1,u)$ at some point.
Then the $i$-message reaches, in case the $i$-message is coming from $u$'s direction, node $u$ earlier than the $r$-message.
Even if an $i$-message and a $r$-message are transmitted in the same time slot to $u$ (through $(u_1,u)$ and $(u_2,u)$, respectively), the $i$-message delays the $r$-message in the node by putting it into the queue $L_{delay}$, and the $i$-message \textit{reaches} $u$ earlier.
Thus the $i$-message also delays the $r$-message running through path $(r,u_2,u)$, if it did not already delayed it earlier.

\textbf{Runtime: }
In case $S=\{r\}$ contains only one element, the computation of  DLG $\calL_r$, which is not delayed by other computations, takes at most $D_\omega'$ time steps. Because in time step $t$ all nodes $u\in V$ at distance $\omega(r,u)=t$ receive a $r$-message, as a consequence in time step $D_\omega$ every node $v\in V$ with distance $\omega(r,v)\leq D_\omega$ receives or has received a $r$-message. Since there are no nodes $u\in V$ with $\omega(r,v)> D_\omega$, the computation of $\calL_r$ stops after $D_\omega'$ time steps.

We showed that a $r$-message can be delayed at most one time slot by another $i$-message with $i<r$, and therefore by at most $|S|$ time slots by all other $|S|$ DLG-constructions. Thus Algorithm\iffull~\ref{alg:DLGcomp}\fi~\textsc{DLGcomp} runs in $\BO(|S|+D_\omega)$.
\end{proof}
\fi

\begin{lemma}
Algorithm\iffull~\ref{alg:DLGagr}\fi~\textsc{DLGagr} aggregates information through $|S|$ directed leveled graphs, in time $\BO(|S|+D_\omega)$ for $k=D_\omega$.\label{lemma:algagg}
\end{lemma}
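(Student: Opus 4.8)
\textbf{Proof proposal for Lemma \ref{lemma:algagg}.}

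The plan is to mirror the structure of the proof of Lemma \ref{lemma:algdisp}, splitting the argument into a correctness part and a runtime part, but running the schedule ``in reverse''. First I would observe that Algorithm \textsc{DLGagr} reuses the schedule $\tau$ and the parent pointers $parent\_in\_T_{r_j}[\cdot]$ that were computed and stored by Algorithm \textsc{DLGcomp}. By Lemma \ref{lemma:algdisp}, for each root $r_j\in S$ the DLG $\calL_{r_j}$ constructed there is valid, meaning that $\tau[r_j][h]$ is a time at which a valid $r_j$-message was received from the $h$-th recorded parent, and in particular $\tau[r_j][h]\le \omega(r_j,u)\le D_\omega$ for every node $u$ that recorded such an entry (the delay of \textsc{DLGcomp} never makes the recorded arrival time exceed the true distance-based deadline used to bound its runtime). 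The key scheduling idea is that a node $u$ sends its aggregated message $msg_p[r_j]$ to $parent\_in\_T_{r_j}[h]$ exactly in time slot $t=|S|+k-\tau[r_j][h]$; since a parent of $u$ in $\calL_{r_j}$ sits one ``level'' closer to the root, its own recorded arrival time is strictly smaller, so it transmits upward strictly later, which guarantees that $u$ has received all of its children's contributions before it must forward its own.

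For the correctness part I would argue, first for $S=\{r\}$, that by downward induction on the level index $\omega(r,u)$ the node $u$ has received a message from every child of $u$ in $\calL_r$ before the time slot in which $u$ transmits to its parents; hence $\mathcal{A}'_{agr}$ is executed on $u$ once for each child message, and $msg_p[r]$ is well-formed by the time it is needed. The base case is the leaves of $\calL_r$, which have no children and simply send up their initial value. For the inductive step, a child $u_c$ of $u$ with $(u,u_c)\in\vec E_r$ has $\omega(r,u_c)>\omega(r,u)$, hence (using validity of $\tau$ from Lemma \ref{lemma:algdisp}) $\tau[r][\cdot]$ recorded at $u_c$ is strictly larger than the one recorded at $u$, so $u_c$'s send time $|S|+k-\tau_{u_c}$ is strictly earlier than $u$'s send time $|S|+k-\tau_u$, and by induction $u_c$ already holds its final $msg_p[r]$ when it sends. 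For general $S$, I would note that the different DLGs use disjoint logical message streams tagged by the root ID $v$, so the aggregations do not interfere logically; congestion is the only coupling, and it is handled by the runtime bound rather than affecting correctness.

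For the runtime part, the loop runs for $t=0,\dots,|S|+k$ by construction, so it suffices to show that this many rounds suffice for every message to reach its destination, i.e. that no two upward messages ever need to cross the same edge in the same time slot. This is where the main work lies, and it is the mirror image of the congestion argument in \cite{holzer2012optimal} and in Lemma \ref{lemma:algdisp}: two messages belonging to roots $r<r'$ that would collide at an edge correspond, under the time-reversal $t\mapsto |S|+k-t$, to a collision in the forward schedule of \textsc{DLGcomp}, which was resolved there by a one-slot delay of the higher-ID computation; I would transfer that slack to show that assigning send times by the formula $|S|+k-\tau[v][\cdot]$ — where $\tau$ already incorporates those delays — yields a conflict-free schedule of makespan $|S|+k$. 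Concretely, since in \textsc{DLGcomp} each $r$-message was delayed by at most one slot per other root and the undelayed depth is at most $D_\omega$, all recorded $\tau$ values lie in $[0,|S|+D_\omega]$, so all send times $|S|+k-\tau$ lie in $[0,|S|+k]$ and, because the $\tau$ values along any root-to-node path were already made pairwise ``non-colliding'' by \textsc{DLGcomp}, the reversed times are too. Hence Algorithm \textsc{DLGagr} terminates in $\BO(|S|+k)=\BO(|S|+D_\omega)$ time for $k=D_\omega$.

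The hard part will be making the time-reversal argument fully rigorous: one must check that the delay slack introduced by \textsc{DLGcomp} to avoid forward collisions is exactly what is needed to avoid the reversed collisions in \textsc{DLGagr}, including the subtle case where an $r$-message and an $r'$-message were sent in opposite directions across the same edge in the same slot (handled in \textsc{DLGcomp} by retransmitting the higher-ID one), which under reversal becomes the analogous opposite-direction conflict for the aggregation phase. I expect this bookkeeping, rather than any genuinely new idea, to be the bulk of the proof, exactly as the analogous step was in \cite{holzer2012optimal}.
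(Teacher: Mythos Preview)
Your proposal is correct and follows the same approach as the paper: reverse the schedule $\tau$ computed by \textsc{DLGcomp}. The paper's own proof is extremely terse---it simply observes that \textsc{DLGagr} uses the schedule of \textsc{DLGcomp} in reverse, hence inherits the same $\BO(|S|+D_\omega)$ runtime, and that the reversal guarantees each node hears from all its children before it speaks to any parent.

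Where you diverge is in the runtime part: you flag the congestion analysis of the reversed schedule as ``the hard part'' and propose to redo the delay bookkeeping of \cite{holzer2012optimal} under time reversal. That is unnecessary. The reversed schedule is congestion-free by a one-line bijection: in \textsc{DLGcomp}, at each time $t$ a node $u$ receives at most one message from each neighbor $u_h$ (this is just the CONGEST model), and the pair $(u_h,t)$ is exactly what $\tau$ records; in \textsc{DLGagr}, $u$ sends to $u_h$ at time $|S|+k-t$ for precisely those recorded pairs. Hence at most one message per directed edge per slot, with no need to revisit the delay queue or the opposite-direction retransmission case. Your induction for correctness is fine and matches the paper; you can drop the elaborate reversal bookkeeping and replace it with this bijection.
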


\ifshort
We state the proof of Lemma \ref{lemma:algagg} in Appendix \ref{FULL:sec:aggregate}.
\fi
\iffull
\begin{proof}
\textbf{Runtime: }
In Algorithm\iffull~\ref{alg:DLGagr}\fi~\textsc{DLGagr}, a node $u$ sends only messages to parent nodes in all DLGs $\calL_r$ $\forall r\in S$. The schedule to send the messages is the same as in Algorithm\iffull~\ref{alg:DLGcomp}\fi~\textsc{DLGcomp} stored in $\tau$, but reversed. Therefore the runtime of Algorithm\iffull~\ref{alg:DLGagr}\fi~\textsc{DLGagr} is bound by the runtime of Algorithm\iffull~\ref{alg:DLGcomp}\fi~\textsc{DLGcomp}, which is $\BO(|S|+D_\omega)$.
The complexity of executing algorithms $\mathcal{A}_{agr}$ and $\mathcal{A}_{agr}'$ inside of a node causes no additional communication. Since we are only interested in communication complexity, the total runtime is $\BO(|S|+D_\omega)$.

\textbf{Correctness: }
Since we use the schedule $\tau$ of Algorithm\iffull~\ref{alg:DLGcomp}\fi~\textsc{DLGcomp} in reverse order to send messages, it is guaranteed that a node $u$ receives all messages from all children in a DLG $\calL_r$ before $u$ sends the first message to any parent in $\calL_r$.
\end{proof}
\fi

\begin{lemma}
Algorithms\iffull~\ref{alg:DLGcomp}\fi~\textsc{DLGcomp} and\iffull~\ref{alg:DLGagr}\fi~\textsc{DLGagr} aggregate information in $\BO(|S|+k)$ time for $k\leq D_\omega$.
\label{lem:kneighborhood}
\end{lemma}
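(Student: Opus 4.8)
The plan is to reduce Lemma~\ref{lem:kneighborhood} to Lemmas~\ref{lemma:algdisp} and~\ref{lemma:algagg}, which already handle the case $k = D_\omega$. The only thing that changes when $k \le D_\omega$ is the runtime bound: correctness of the DLG construction and of the bottom-up aggregation is inherited verbatim, since restricting to $N_k(r)$ only affects which arrival times/parents are \emph{stored} (the branch guarded by $\omega_i \le k$ in Algorithm~\textsc{DLGcomp}), not how $r$-messages propagate or delay one another. So first I would state that the correctness argument of Lemma~\ref{lemma:algdisp} goes through unchanged, and that the aggregation in Algorithm~\textsc{DLGagr} still respects the children-before-parents ordering because it replays the stored schedule $\tau$ in reverse — and $\tau$ only ever records entries for nodes in $N_k(r)$, so no node waits on a message that is never sent.

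Next, the runtime. Both algorithms run their main loop for $t = 0, \dots, |S| + k$ by construction, so the claimed bound $\BO(|S| + k)$ is immediate \emph{provided} we argue that everything relevant has happened by time $|S| + k$. For Algorithm~\textsc{DLGcomp}: a single undelayed computation $\calL_r$ reaches every node at weighted distance $\le k$ from $r$ within $k$ time slots (in slot $t$ exactly the nodes at distance $t$ receive their first $r$-message), and as in Lemma~\ref{lemma:algdisp} each $r$-message is delayed at most one slot per competing DLG, hence at most $|S|$ slots total; so every node in $N_k(r)$ has recorded its parents and arrival times in $\tau$ by time $|S| + k$. The subtlety worth a sentence: $r_i$-messages with recorded distance $> k$ are still forwarded (to keep the delay pattern intact), but these cost nothing extra since they are carried inside the same slots as the in-range traffic and a node never needs to forward an $r$-message beyond distance $k + |S|$ in time, so truncating the loop at $|S| + k$ loses nothing. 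For Algorithm~\textsc{DLGagr}, a node $u$ sends toward a parent in $\calL_r$ in slot $|S| + k - \tau[r][h]$; since every stored $\tau[r][h] \ge 0$ and (by the argument just given) every $\tau[r][h]$ that is finite is $\le |S| + k$, all sends land in $\{0, \dots, |S| + k\}$, and the reversed schedule inherits the per-message-per-DLG delay bound of Algorithm~\textsc{DLGcomp}, giving total time $\BO(|S| + k)$.

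The main obstacle I anticipate is making precise that truncating the loop at $|S| + k$ does not cut off any message a node in $N_k(r)$ still needs — i.e., that the "out-of-range forwarding" of $r$-messages with distance $> k$, while necessary for correctly spacing the delays, never feeds back into an in-range node after time $|S| + k$. The clean way to see this is: a node $u \in N_k(r)$ has $\omega(r,u) \le k$, its first valid $r$-message would arrive undelayed at slot $\omega(r,u) \le k$, and the total delay accumulated along that message's path is at most $|S|$ (one slot per other DLG, by the same crossing argument as in Lemma~\ref{lemma:algdisp}); so $\tau[r][\cdot] \le k + |S|$ for every entry that is ever written, and symmetrically the reverse schedule in Algorithm~\textsc{DLGagr} fits in the same window. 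Everything else is a direct specialization of the two preceding lemmas with $D_\omega$ replaced by $k$.
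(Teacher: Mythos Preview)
Your proposal is correct and follows essentially the same approach as the paper: reduce to Lemmas~\ref{lemma:algdisp} and~\ref{lemma:algagg}, argue that an undelayed $r$-message reaches $N_k(r)$ in $k$ slots and accumulates at most $|S|$ slots of delay, note that messages with distance $>k$ are still forwarded solely to preserve the delay pattern without extending $\calL_r$, and observe that the aggregation phase replays the stored schedule $\tau$ (whose entries all lie in $[0,|S|+k]$) in reverse. Your treatment of the truncation subtlety is in fact a bit more explicit than the paper's, but the structure and key observations are the same.
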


\ifshort
We state the proof of Theorem \ref{lem:kneighborhood} in Appendix \ref{FULL:sec:aggregate}.
\fi
\iffull
\begin{proof}
\textbf{Runtime: }
In the k-Neighborhood restricted Algorithm\iffull~\ref{alg:DLGcomp}\fi~\textsc{DLGcomp} a $r$-message which is not delayed by other IDs, needs $k$ time slots to reach all nodes $u\in N_k(r)$.
We showed in Lemma \ref{lemma:algdisp} that a $r$-message can be delayed at most $|S|$ time slots due to other ID's.
Hence, to reach all nodes in its $k$-neighborhood an $r$-message needs at most $|S|+k$ time slots.
For aggregating information Algorithm\iffull~\ref{alg:DLGagr}\fi~\textsc{DLGagr} uses schedule $\tau$, which was computed in the k-Neighborhood restricted Algorithm\iffull~\ref{alg:DLGcomp}\fi~\textsc{DLGcomp} and therefore is bound by the same number of time slots $|S|+k$.

\textbf{Correctness: }
As long as a distance $\omega_i$ in a $r$-message is less or equal to $k$, the message is processed the same way as in the restricted Algorithm\iffull~\ref{alg:DLGcomp}\fi~\textsc{DLGcomp} ($k=D_\omega$).
When $\omega_i$ is larger than $k$, then $r$-messages with ID $r$ do no longer extend the DLG $\calL_r$, but just delay all $i$-message with higher ID ($i>r$) the same way as in the restricted Algorithm\iffull~\ref{alg:DLGcomp}\fi~\textsc{DLGcomp}.
As a consequence after the execution of the k-Neighborhood restricted Algorithm\iffull~\ref{alg:DLGcomp}\fi~\textsc{DLGcomp}, all nodes $u\in N_k(r)$ have the same information of DLG $\calL_r$ stored as if the restricted Algorithm\iffull~\ref{alg:DLGcomp}\fi~\textsc{DLGcomp} has been executed.
All nodes $v\notin N_k(r)$ that still send $r$-messages do not contribute to $\calL_r$'s aggregation but are necessary to ensure that the other tree's computations are delayed and trees are constructed correctly. 

By executing Algorithm\iffull~\ref{alg:DLGagr}\fi~\textsc{DLGagr} a node $u$ sends aggregation messages $msg_p[v]$ back to a root node $r$ if and only if $u\in N_k(r)$, as mentioned above. Therefore Algorithm\iffull~\ref{alg:DLGagr}\fi~\textsc{DLGagr} just aggregates information of the neighborhood $N_k(r)$ to a root node $r$.
\end{proof}
\fi

\begin{proof}[Proof of Theorem \ref{theo:DLGruntime}]
Follows by combining Lemma \ref{lemma:algdisp} and \ref{lemma:algagg} and Lemma \ref{lem:kneighborhood}.
\end{proof}

\subsection{Tree Variation}
\label{sec:tree}

In Algorithm\iffull~\ref{alg:DLGagr}\fi~\textsc{DLGagr} we can aggregate information along all shortest $r$-$u$-paths between a root node $r\in S$ and a node $u\in V$.
For some applications it is desirable to aggregate information only along one shortest $r$-$u$-path, as for example in max-value/average-value aggregation or in a  $S$-Shortest Paths to approximate problems such as closeness centrality.
That means a node $u$ sends a message only to one parent in a DLG $\calL_r$ while executing Algorithm\iffull~\ref{alg:DLGagr}\fi~\textsc{DLGagr}.
The result is the same as when aggregating information through a tree $T_r$ rooted in $r$ (instead of a DLG $\calL_r$).
\todoI{Can be done inside aggregation fkt?}
For completeness we show an adaptation of Algorithms\iffull~\ref{alg:DLGcomp}\fi~\textsc{DLGcomp} and\iffull~\ref{alg:DLGagr}\fi~\textsc{DLGagr} which computes and aggregates along trees instead of DLGs.
We provide a detailed description of the Tree Variation in Appendix 
\ifshort
\ref{FULL:app:tree}.
\fi
\iffull
\ref{app:tree}.
\fi

\section{Applications}

\subsection{Betweenness Centrality}
\label{sec:BC}

Brandes showed in \cite{brandes:2001:fasterBCalgo}, an algorithm for computing betweenness centrality recursively. Let $\delta_{st}(v):=\sigma_{st}(v)/\sigma_{st}$ be the ratio of all shortest path between $s$ and $t$ that contain node $v$ compared to all shortest paths between $s$ and $t$. We denote by $\delta_{s*}(v)$ \todoI{how does it relate to BC?}the dependency of a node $s\in V$ on another node $v\in V$, defined by $\delta_{s*}(v):=\sum_{t\in V, t\neq v\neq s}\delta_{st}(v)$. The dependency can be calculated recursively as Brandes  \cite{brandes:2001:fasterBCalgo} stated:

\begin{theorem} (Recursive Betweenness Centrality Dependency, Brandes \cite{brandes:2001:fasterBCalgo} Theorem 6)
$$\delta_{s*}(v)=\sum\limits_{w:v\in P_s(w)}\frac{\sigma_{sv}}{\sigma_{sw}}\cdot (1+\delta_{s*}(w))$$
\label{theo:brandes:recursive}
\end{theorem}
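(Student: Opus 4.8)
The plan is to reproduce Brandes' argument (Theorem~6 of \cite{brandes:2001:fasterBCalgo}), reorganising the sum that defines $\delta_{s*}(v)$ according to which vertex immediately follows $v$ on a shortest path emanating from $s$. Fix the source $s$ throughout; it is convenient to picture everything inside the shortest-path DAG rooted at $s$, which is exactly the directed leveled graph $\calL_s$ of the model section, so that ``$v\in P_s(w)$'' means precisely that $w$ is a child of $v$ in $\calL_s$, i.e.\ $\{v,w\}\in E$ and $\omega(s,w)=\omega(s,v)+\omega(v,w)$. I will work in the unweighted case (the argument goes through verbatim for strictly positive edge weights, which is all that is needed to keep $\calL_s$ well defined).

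\textbf{Step 1 (a counting decomposition).} Fix $v\neq s$ and a target $t\notin\{s,v\}$. Every shortest $s$-$t$ path through $v$ has a unique vertex $w$ immediately after $v$; this $w$ satisfies $v\in P_s(w)$, lies on a shortest $s$-$t$ path, and the prefix up to $v$ (resp.\ the suffix from $w$) is a shortest $s$-$v$ path (resp.\ shortest $w$-$t$ path). Conversely, concatenating a shortest $s$-$v$ path, the edge $(v,w)$, and a shortest $w$-$t$ path produces a shortest $s$-$t$ path through $v$ whenever $v\in P_s(w)$ \emph{and} $w$ lies on a shortest $s$-$t$ path. Counting the blocks of this partition gives
$$\sigma_{st}(v)=\sum_{\substack{w:\,v\in P_s(w)\\ w\text{ on a shortest }s\text{-}t\text{ path}}}\sigma_{sv}\cdot\sigma_{wt}\qquad(\text{with }\sigma_{tt}:=1).$$
I would stress that the side condition ``$w$ on a shortest $s$-$t$ path'' cannot be dropped here: $v$ may well have children in $\calL_s$ that do not continue toward $t$.

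\textbf{Step 2 (assemble $\delta_{s*}(v)$ and clean up the sums).} Dividing by $\sigma_{st}$, summing over $t\notin\{s,v\}$, and exchanging the two summations yields
$$\delta_{s*}(v)=\sigma_{sv}\sum_{w:\,v\in P_s(w)}\ \sum_{\substack{t\notin\{s,v\}\\ w\text{ on a shortest }s\text{-}t\text{ path}}}\frac{\sigma_{wt}}{\sigma_{st}}.$$
For $t$ with $w$ on a shortest $s$-$t$ path one has $\sigma_{st}(w)=\sigma_{sw}\sigma_{wt}$, hence $\sigma_{wt}/\sigma_{st}=\delta_{st}(w)/\sigma_{sw}$, while for all other $t$ the numerator $\delta_{st}(w)$ vanishes; so the inner sum equals $\frac{1}{\sigma_{sw}}\sum_{t\notin\{s,v\}}\delta_{st}(w)$. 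Next, $v\in P_s(w)$ forces $\omega(s,w)>\omega(s,v)$, so $w$ cannot lie on any shortest $s$-$v$ path, giving $\delta_{sv}(w)=0$; thus the constraint $t\neq v$ is vacuous. Splitting off the term $t=w$, where $\delta_{sw}(w)=\sigma_{sw}(w)/\sigma_{sw}=1$, we get $\sum_{t\notin\{s,v\}}\delta_{st}(w)=\sum_{t\neq s}\delta_{st}(w)=1+\delta_{s*}(w)$. Substituting back produces exactly $\delta_{s*}(v)=\sum_{w:\,v\in P_s(w)}\frac{\sigma_{sv}}{\sigma_{sw}}\bigl(1+\delta_{s*}(w)\bigr)$.

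\textbf{Main obstacle.} The computation is short, so the real care is in the bookkeeping of excluded and boundary targets: recognising that the ``$w$ on a shortest $s$-$t$ path'' restriction of Step~1 becomes harmless only after passing from $\sigma_{wt}/\sigma_{st}$ to $\delta_{st}(w)/\sigma_{sw}$, and checking that the two boundary terms behave as required, namely $\delta_{sv}(w)=0$ (because $w$ strictly farther from $s$ than $v$) and $\delta_{sw}(w)=1$. One must also confirm that the multiplicativity identity $\sigma_{st}(w)=\sigma_{sw}\sigma_{wt}$ and the path-concatenation argument are valid; both follow from the fact that subpaths of shortest paths are shortest, which is precisely what makes the level decomposition defining $\calL_s$ legitimate.
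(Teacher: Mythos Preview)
Your proof is correct and follows Brandes' original argument essentially verbatim; the paper itself does not prove this statement but merely cites it as Theorem~6 of \cite{brandes:2001:fasterBCalgo}, so there is nothing further to compare. The only delicate points are the endpoint conventions $\delta_{sw}(w)=1$ and $\delta_{sv}(w)=0$, which you handle explicitly and correctly.
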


Bader et al. showed in \cite{bader:2007:BCapprox} an adaptive sampling algorithm, which approximates the betweenness centrality with significantly reduced number of single-source shortest path computations for a node with high centrality when using Brandes' recursive algorithm.
\begin{theorem} (Bader et al \cite{bader:2007:BCapprox} Theorem 3).
For $0<\epsilon<\frac{1}{2}$, if the centrality of a vertex $v$ is $n^2/t$ for some constant\todoI{constant ???} $t\geq 1$, then with probability $\geq 1-2\epsilon$ its betweenness centrality can be estimated to within a factor of $1/\epsilon$ with $t\epsilon$ samples of source vertices.
\label{theo:bader:estimation}
\end{theorem}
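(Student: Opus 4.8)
This statement is Theorem~3 of \cite{bader:2007:BCapprox}, so the plan is to reproduce its adaptive-sampling argument, instantiated with Brandes' recursion (Theorem~\ref{theo:brandes:recursive}). First I would record the basic decomposition $BC(v)=\sum_{s\in V}\delta_{s*}(v)$, so that picking a source $s\in V$ uniformly at random and setting $X:=\delta_{s*}(v)$ yields a random variable with $\mathbb{E}[X]=BC(v)/n=n/t$ and deterministic bounds $0\le X\le n-2$. A single value $X$ is produced by one single-source shortest-path computation from $s$ followed by one backward accumulation sweep, exactly as Theorem~\ref{theo:brandes:recursive} enables (and, in the distributed model, as the leveled-graph aggregation of Theorems~\ref{theo:DLGruntime} and~\ref{theo:tree} implements).

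Next I would analyze the adaptive sampler: draw i.i.d.\ sources $s_1,s_2,\dots$, maintain the running sum $T_k=\sum_{i=1}^k X_i$, stop at the first index $k$ with $T_k\ge cn$ for a threshold constant $c$ to be tied to $\epsilon$, and output $\widehat{BC}(v)=nT_k/k$. Since every sample contributes $n/t$ in expectation toward the threshold $cn$, Wald's identity will give $\mathbb{E}[k]=\Theta(ct)$, and choosing $c=\Theta(\epsilon)$ makes this $\Theta(t\epsilon)$, the claimed sample count. Because $cn\le T_k<(c+1)n$ holds deterministically, the ``overshoot'' $T_k-cn$ costs at most one sample's worth, so the only randomness left to control is the stopping time $k$; a multiplicative deviation of $k$ then translates one-for-one into a multiplicative deviation of $\widehat{BC}(v)$. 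I would bound $\Pr[k\text{ too small}]$ and $\Pr[k\text{ too large}]$ by a Chernoff/Hoeffding estimate on the partial sums $T_k$ (Chebyshev already suffices for the weak guarantee ``factor $1/\epsilon$ with probability $1-2\epsilon$'') and take a union bound over the two failure events, with the allowed factor $1/\epsilon$ and the probability $1-2\epsilon$ falling out of the choice $c=\Theta(\epsilon)$.

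The hard part will be handling the random stopping time: $\widehat{BC}(v)=nT_k/k$ is a ratio in which $k$ is itself random, so a fixed-sample-size concentration inequality cannot be applied directly; instead one must reason about the first passage of the random walk $(T_k)_k$ across the level $cn$ and carefully control the overshoot, which is precisely what caps the multiplicative error at $1/\epsilon$. A secondary subtlety, relevant to the later distributed implementation, is that $\delta_{s*}(v)$ must be accumulated over \emph{all} shortest paths out of $s$ rather than along a single tree, which is why the DLG variant of \textsc{DLGagr} rather than its tree variant is the one that has to be used here.
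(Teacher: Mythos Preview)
The paper does not prove this statement at all: Theorem~\ref{theo:bader:estimation} is quoted verbatim from Bader et al.\ \cite{bader:2007:BCapprox} and used as a black box inside the proof of Theorem~\ref{theo:bcapprox}. So there is nothing in the paper to compare your proposal against; any proof you supply goes strictly beyond what the authors do here.

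That said, your sketch is a faithful reconstruction of the Bader--Kintali--Madduri--Mihail argument. The decomposition $BC(v)=\sum_{s}\delta_{s*}(v)$, the bounded i.i.d.\ samples $0\le X_i\le n-2$ with mean $n/t$, the threshold rule $T_k\ge cn$ with $c=\Theta(\epsilon)$, and the observation that the overshoot is at most one sample so that $T_k\in[cn,(c+1)n)$ deterministically---these are exactly the ingredients of their Theorem~3. Your reduction of the error in $\widehat{BC}(v)=nT_k/k$ to a deviation bound on the stopping time $k$ is the right move, and the cleanest way to carry it out is the one you hint at: for any fixed $k_0$, the event $\{k>k_0\}$ equals $\{T_{k_0}<cn\}$, which is a fixed-sample-size tail and hence directly amenable to Hoeffding; Wald is not strictly needed. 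With the very loose target ``within a factor $1/\epsilon$ with probability $\ge 1-2\epsilon$'', even Chebyshev on $T_{k_0}$ at $k_0=\Theta(t\epsilon)$ already closes the argument. Your final remark about needing the DLG (all shortest paths) rather than the tree variant is correct and is precisely why the paper invokes Algorithms \textsc{DLGcomp}/\textsc{DLGagr} with $\mathcal{A}_{BC},\mathcal{A}_{BC}'$ in Section~\ref{sec:BC}.
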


This algorithm can be adapted efficiently in a distributed setting by using Algorithms\iffull~\ref{alg:DLGcomp}\fi~\textsc{DLGcomp} and\iffull~\ref{alg:DLGagr}\fi~\textsc{DLGagr}.
We define an Algorithm\iffull~\ref{alg:BC_setup_controlling}\fi \textsc{BCsetup} to perform multiple rounds as suggested by Bader in \cite{bader:2007:BCapprox}. In each round, Algorithms\iffull~\ref{alg:DLGcomp}\fi~\textsc{DLGcomp} and\iffull~\ref{alg:DLGagr}\fi~\textsc{DLGagr} are executed with algorithms $\mathcal{A}_{BC}$ and $\mathcal{A}_{BC}'$ as defined \ifshort below\fi\iffull in Algorithm \ref{alg:BC_f} and \ref{alg:BC_g}\fi. With $\mathcal{A}_{BC}$ and $\mathcal{A}_{BC}'$ we state the Algorithms $\mathcal{A}_{agr}$ and $\mathcal{A}_{agr}'$ of our framework specified for betweenness centrality approximation ($BC$) being the function $agr$. 
In contrast to Bader, our algorithm samples not just the betweenness centrality dependency of one node in each round, but of multiple nodes.
Furthermore, Bader's algorithm concentrates on one node $v\in V$ and stops sampling if $BC(v)$ can be approximated within an error $\left(\times,1+\frac{1}{\epsilon}\right)$ with probability $\geq 1-2\epsilon$. Our Algorithm\iffull~\ref{alg:BC_setup_controlling}\fi~\textsc{BCsetup} considers all nodes in the graph and stops after at least $\hat{n}$ nodes have been approximated within a similar multiplicative error $(\times, 1+\frac{1}{\epsilon'})$ with probability $\geq 1-2\epsilon'$ (where $1/n^c<\eps'=\eps-1/n^c<\eps<1/2$).
\ifshort
\footnote{In addition to the descriptions in the text below, we provide pseudocode of Algorithms\iffull~\ref{alg:BC_setup_controlling}\fi \textsc{BCsetup}, $\mathcal{A}_{BC}$ and $\mathcal{A}_{BC}'$ in Appendix \ref{FULL:sec:algBC}.}
\fi

\subsubsection{Algorithm for Betweenness Centrality}\label{sec:algBC}

In Algorithm\iffull~\ref{alg:BC_setup_controlling}\fi~\textsc{BCsetup}, the idea is to select in multiple rounds multiple sample nodes $s\in S$ and calculate the betweenness centrality dependency $\delta_{s*}(u)$ on all other nodes $u\in V$ for each $s\in S$. The algorithm stops if more than $\hat{n}$ nodes in $V$ with high betweenness centrality are found, $\hat{n}$ is an input parameter. Set $S_i$ is the set of nodes which are sampled in round $i$, all sets $S_i$ are disjoint and form together the set $S=S_1\cup S_2\cup\cdots\cup S_{max}$. 

During the algorithm each node $u\in V$ stores the sum of all the dependencies $\delta_{s*}(u)$ of the nodes $s\in S$ sampled so far in variable $s_{BC}$. Furthermore the number $k_i$ of sample nodes $s\in S_1\cup S_2\cup\cdots\cup S_{i}$ that were sampled so far (in any of rounds $\{1,2,\dots, i\}$), need to be stored to indicate whether the node itself is sampled ($is_{sampled}$).\todoI{why?}

All centralized communication to and from node 1 uses tree $T_1$. In round  $i$ \iffull(Line \ref{line:BCprobstart} to \ref{line:BCprobend})\fi node 1 calculates the probability $p_s$ (according to the proof in Theorem \ref{theo:bcapprox})

with which each node gets sampled. The value of $p_s$ is broadcasted in the network and each node decides if itself is a sample node and reports back to node 1 if so.
\ifshort
If 
\fi
\iffull
 In Line \ref{line:BCaccsamples}, if 
\fi
 a node in $T_1$ receives multiple messages in one time slot, it sends the sum of the message value to its parent in $T_1$. The number of sample nodes $|S_i|$ gathered by node 1 is then broadcasted again, this is needed to determine the runtime $|S_i|+D_\omega$ in Algorithms\iffull~\ref{alg:DLGcomp}\fi~\textsc{DLGcomp} and\iffull~\ref{alg:DLGagr}\fi~\textsc{DLGagr} and to maintain the number of sample nodes $k$.

\iffull
\begin{algorithm}[htp]
\begin{algorithmic}[1]
\STATE \textbf{global} $s_{BC}:=0$; \COMMENT{sum of dependency scores $\delta_{v*}(u)$, $v\in V\setminus\{u\}$}
\STATE \textbf{global} $k:=0$; \COMMENT{number of samples so far}
\STATE \textbf{global} $is_{sampled}:=0$;
\STATE $BC(u):=\bot$; \COMMENT{approximated betweenness centrality of $u$, initialized as undefined}
\IF{$u=1$}\label{line:BCestimstart}
        \STATE estimate $D_h'$ and $D_{\omega}'$ by generating a spanning tree $T_1$ and using Fact \ref{fact:ecc-approx-diam};
        \STATE \textbf{broadcast} $D_h'$ and $D_\omega'$ on $T_1$;
        \STATE $n_{BC}:=0$; \COMMENT{number of high BC nodes found so far}
\ELSE
        \STATE \textbf{wait} until value of $D_h'$ and $D_\omega'$ \textbf{received};
\ENDIF\label{line:BCestimend}
\FOR{$T:=log(D_h'),log(D_h')+1,\dots,log(n)$}
        \STATE $t:=2^T$;
        \IF{$u=1$} \label{line:BCprobstart}
                \STATE \textbf{broadcast} $p_s:=\frac{t}{n}$ on $T_1$;
                \STATE \textbf{wait} for responses $r_i$ for $2D_\omega'$ time slots;
                \STATE $|S_i|:=\sum\limits_i r_i$; \COMMENT{number of samples in this round}
                \STATE \textbf{broadcast} $|S_i|$ through $T_1$; 
        \ELSE 
                \STATE  \textbf{receive} value of $p_s$;
                \IF{$is_{sampled}=0$}
                        \STATE \textbf{set} $gets_{sampled}:=1$ \textbf{with probability} $p_s$;
                        \IF{$gets_{sampled}=1$}
                                \STATE \textbf{send} (``1") to node $1$ by $T_1$; \COMMENT{accumulate messages} \label{line:BCaccsamples}
                        \ENDIF
                        \STATE \textbf{receive} value of $|S|$;
                \ENDIF
        \ENDIF \label{line:BCprobend}
        \STATE \textbf{start} Algorithm\iffull~\ref{alg:DLGcomp}\fi~\textsc{DLGcomp} with ($|S_i|$, $D_\omega'$, $\mathcal{A}_{BC}()$);
        \STATE \textbf{start} Algorithm\iffull~\ref{alg:DLGagr}\fi~\textsc{DLGagr} with ($|S_i|$, $D_\omega'$, $\mathcal{A}_{BC}'()$); \label{line:BCstartagg}
        \STATE $k:=k+|S_i|$;
        \IF{$gets_{sampled}=1$}
                \STATE $is_{sampled}:=1$;
                \STATE $gets_{sampled}:=0$;
        \ENDIF
        \IF{$s_{BC}\ge n\cdot \tau_c \textbf{ and } BC(u)=\bot$} \label{line:BCcheck}
                \STATE $BC(u):=\frac{n\cdot s_{BC}}{2k}$; \COMMENT{divided by two, since G is a undirected graph} \label{line:scalebc}
                \STATE \textbf{send} (``reached threshold", 1) to node $1$ by using paths in $T_1$;  \COMMENT{accumulate messages} \label{line:BCaccthresh}
        \ENDIF
        \IF{$u=1$}
                \STATE \textbf{wait to receive} message(s) $M$ for up to $D_\omega'$ time slots;
                \STATE $n_{BC}:=n_{BC}+|M|$;
                \IF {$n_{BC}\ge \hat{n}$} \label{line:BCnhatcheck}
                        \STATE \textbf{broadcast} ``terminate";
                \ENDIF
        \ENDIF
\ENDFOR
\end{algorithmic}
\caption{\textsc{BCsetup}: Distributed Betweenness Centrality Approximation (executed by node $u$) 
\newline \textbf{Input:}  $\tau_c$: threshold criterion, $\hat{n}$: number of nodes which have to reach the threshold $\tau_c$}
\label{alg:BC_setup_controlling}
\vspace*{0.5cm}
\end{algorithm}
\fi

Next, all nodes simultaneously start to execute Algorithm\iffull~\ref{alg:DLGcomp}\fi~\textsc{DLGcomp} with algorithm $\mathcal{A}_{BC}$, whose pseudocode is specified in Algorithm
\ifshort
 \ref{FULL:alg:BC_f} in Appendix \ref{sec:algBC}.
\fi
\iffull
 \ref{alg:BC_f}.
\fi
In round $i$ Algorithm\iffull~\ref{alg:DLGcomp}\fi~\textsc{DLGcomp} computes a DLG $\calL_s$ for every sample node $s\in S_i$. And $\mathcal{A}_{BC}$ ensures that while computing the DLGs all nodes $u\in V$ compute the total number of shortest $u$-$s$-paths ($\tsp$) to all DLG root nodes $s$. Therefore $\mathcal{A}_{BC}$ needs to initialize the array $total_{sp}$ on each node $u\in V$, which stores at $total_{sp}[s]$ the number of shortest path between $u$ and $s$\iffull (Lines \ref{line:fbcinitstart} to \ref{line:fbcinitend})\fi. A sample node $s$ initializes the distance to itself with 1 

.
\iffull
\begin{algorithm}[htb]
\begin{algorithmic}[1]
\STATE \COMMENT{INITIALIZATION} \label{line:fbcinitstart}
\STATE \textbf{global} $total_{sp}:=\{0,0,\dots,0\}$; \COMMENT{$total_{sp}[v]$ := total shortest paths from $v$ to $u$}
\IF{$gets_{sampled}=1$}
        \STATE $total_{sp}[u]:=1$;\label{line:fbcinitend}\newline
\ENDIF

\STATE \COMMENT{COMPUTATION, $f(v, msg_p)$: $v$ root node ID of the message, $msg_p$ message of a parent in $\calL_v$}
\STATE $total_{sp}[v]:=total_{sp}[v]+msg_p$;
\STATE $msg_c[v]:=total_{sp}[v]$; \COMMENT{output}
\end{algorithmic}
\caption{$\mathcal{A}_{BC}()$}
\label{alg:BC_f}
\vspace*{0.5cm}
\end{algorithm}
\fi

During the execution of Algorithm\iffull~\ref{alg:DLGcomp}\fi~\textsc{DLGcomp} in node $u,$ algorithm $\mathcal{A}_{BC}$ processes messages ($msg_p$) from parents $u_h\in\{u_1,\dots,u_{deg(u)}\}$ in DLG $\calL_s$, which contains the value $\tsp$ from  $u_h$ to $s$.
Algorithm $\mathcal{A}_{BC}$ then updates the corresponding entry $total_{sp}[s]$ by adding the value $\tsp$ received from parent $u_h$. Entry $total_{sp}[s]$ stores the correct value after all messages of parents in $\calL_s$ have arrived. Node $u$ then forwards $total_{sp}[s]$ \iffull during the execution of Algorithm~\ref{alg:DLGcomp}~\textsc{DLGcomp} in Line \ref{line:sendreceivedist}\fi to all its children in $\calL_s$ (encoded in message $msg_c[s]$).

In Algorithm\iffull~\ref{alg:DLGagr}\fi~\textsc{DLGagr} each node $u\in V$ adds the dependencies $\delta_{s*}(u)$ of all sampled nodes $s\in S_i$ on itself to the sum $s_{BC}$. Each time a child node $w$ in a DLG $\calL_s$ sends a message to $u$, algorithm $\mathcal{A}_{BC}'$, whose pseudocode is specified in Algorithm
\ifshort
 \ref{FULL:alg:BC_g} in Appendix \ref{FULL:sec:algBC}
\fi
\iffull
 \ref{alg:BC_g}
\fi
, gets executed. The message ($msg_c=(\delta_c,\tsp_c)$) of child $w$ in $\calL_s$ contains the dependency $\delta_{s*}(w)$ and the $\tsp$ from $w$ to $s$. The dependencies $\delta_{s*}(u)$\todoI{these are approx, should they be called $\tilde{\delta}_{s*}(u)$?} are then calculated recursively with the formula in Theorem \ref{theo:brandes:recursive} and stored in $\delta_{BC}[s]$.
Entry $\delta_{BC}[s]$ stores the correct value after all messages of children in $\calL_s$ have arrived. Node $u$ then forwards $(\delta_{BC}[s], total_{sp}[s])$ \iffull during the execution of Algorithm~\ref{alg:DLGagr}~\textsc{DLGagr} in Line \ref{line:sendreceiveagg}\fi to all its parents in $\calL_s$ (in message $msg_p[s]$). In a leaf node $l$ of DLG $\calL_s$, Algorithm $\mathcal{A}_{BC}'$ never gets executed and therefore forwards the initialization value of $\delta_{BC}[s]=0$ to its parents in $\calL_s$.

\iffull
\begin{algorithm}[htb]
\begin{algorithmic}[1]
\STATE \COMMENT{INITIALIZATION} 
\STATE \textbf{global} $\delta_{BC}:=\{0,\dots,0\}$; \COMMENT{$\delta_{BC}[s]$ := dependency of the source vertex s on u.} \newline

\STATE \COMMENT{COMPUTATION, $g(v, msg_c)$: $v$ root node ID of the message, $msg_c$ message of a child in $\calL_v$}
\STATE $(\delta_c, tsp_c):=msg_c$;
\STATE $\delta:=(total_{sp}[v_i]/tsp_c)\cdot(1+\delta_c)$;\label{line:dep1bc}
\STATE $\delta_{BC}[v_i]:=\delta_{BC}[v_i]+\delta$;\label{line:depbc}
\STATE $s_{BC} := s_{BC}+\delta$; \label{line:sumsbc}
\STATE $msg_p[v]=(\delta_{BC}[v], total_{sp}[v])$; \COMMENT{output}
\end{algorithmic}
\caption{$\mathcal{A}_{BC}'()$}
\label{alg:BC_g}
\vspace*{0.5cm}
\end{algorithm}
\fi

After the execution of Algorithm\iffull~\ref{alg:DLGagr}\fi~\textsc{DLGagr} each node checks \iffull in Line \ref{line:BCcheck}\fi if $s_{BC}$ passed the threshold $n\cdot \tau_c$ and reports back to node 1 if this is the case. Node $1$ adds the number of passed threshold to $n_{BC}$, a variable that indicates how many nodes passed the threshold so far. In \cite{bader:2007:BCapprox} it is stated that $\tau_c=5$ yields a good trade off between computation (time) and approximation quality. To prevent congestion when multiple nodes reached the threshold and therefore multiple messages are sent back to node 1, we use the sum-aggregation of our framework as seen in Example
\ifshort
 \ref{FULL:ex:sumval}.
\fi
\iffull
 \ref{ex:sumval}.
\fi

If not enough nodes reached the threshold so far, i.e. $n_{BC}<\hat{n}$, another round with additional sample nodes gets initiated by node 1.

\begin{remark}
Note, that in an undirected graph the calculated value of $BC(u)$ has to be divided by 2, since for every node pair $(s,t)\in V\times V$ the betweenness dependency of the shortest $s$-$t$-paths on $u$ is computed twice, once as a part of $\delta_{s*}(u)$ and once as a part of $\delta_{t*}(u)$.
\end{remark}

\begin{lemma}
Algorithm\iffull~\ref{alg:BC_setup_controlling}\fi~\textsc{BCsetup} computes $\delta_{s*}(u)$ for each node $u\in V$ and all samples $s\in S_i$ in time $\BO(|S_i|+D_\omega)$ in round $i$, with a multiplicative approximation guarantee in the range of $[1-\frac{1}{n^c},1+\frac{1}{n^c}]$. 
\label{lemma:deproundapprox}
\end{lemma}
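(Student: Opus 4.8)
The plan is to separate the two claims — correctness of the recursive dependency computation and the running time — and to reduce each to facts already established in the paper. For the running time, I would observe that Algorithm~\textsc{BCsetup} in round $i$ does nothing but (a) some $\BO(D_\omega)$ centralized communication on the spanning tree $T_1$ to broadcast $p_s$, collect the sampled set $S_i$ and rebroadcast $|S_i|$, and (b) one invocation each of \textsc{DLGcomp} and \textsc{DLGagr} with parameters $(|S_i|, D_\omega')$. By Lemmas~\ref{lemma:algdisp} and~\ref{lemma:algagg} each of these runs in $\BO(|S_i| + D_\omega)$ time, and the estimate $D_\omega'$ obtained via Fact~\ref{fact:ecc-approx-diam} satisfies $D_\omega' \le 2 D_\omega$, so it does not change the asymptotics; the extra threshold-reporting step is again a sum-aggregation on $T_1$ costing $\BO(D_\omega)$. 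Summing these contributions gives $\BO(|S_i| + D_\omega)$ for round $i$.

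For correctness, I would argue in two stages. First, \textsc{DLGcomp} with $\mathcal{A}_{BC}$ correctly computes $total_{sp}[s] = \sigma_{su}$ for every $u$ and every sample $s \in S_i$: by Lemma~\ref{lemma:algdisp} the graph $\calL_s$ built is the valid DLG rooted at $s$, its directed edges are exactly the pairs $(w, v)$ with $w \in P_s(v)$, and the update rule $total_{sp}[v] \mathrel{+}= msg_p$ summed over all parents is precisely the standard recursion $\sigma_{sv} = \sum_{w \in P_s(v)} \sigma_{sw}$ with base case $\sigma_{ss} = 1$; a straightforward induction on the level $\omega(s,v)$ closes this. Second, \textsc{DLGagr} with $\mathcal{A}_{BC}'$ processes $\calL_s$ bottom-up, and by Lemma~\ref{lemma:algagg} a node $u$ receives messages from all its children in $\calL_s$ before forwarding to any parent; the update in Algorithm~\ref{alg:BC_g}, namely $\delta \mathrel{+}= (total_{sp}[v]/tsp_c)(1+\delta_c)$ accumulated over children $w$, is exactly Brandes' recursion from Theorem~\ref{theo:brandes:recursive}, $\delta_{s*}(v) = \sum_{w:\, v \in P_s(w)} \frac{\sigma_{sv}}{\sigma_{sw}}(1 + \delta_{s*}(w))$, with leaves contributing the initialized value $0$. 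Another induction, this time from the leaves upward, shows $\delta_{BC}[s] = \delta_{s*}(u)$ at every node.

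The remaining point — the approximation range $[1 - \tfrac{1}{n^c}, 1 + \tfrac{1}{n^c}]$ — is not about the combinatorial recursion (which is exact) but about the finite-precision arithmetic: the intermediate quantities $\sigma_{su}$ can be as large as roughly $2^n$, and dependency values are rationals, so messages in the CONGEST model carrying only $\BO(\log n)$ bits must be rounded. I would argue that by keeping a mantissa of $\Theta(c \log n)$ bits one incurs a relative error of $n^{-\Theta(c)}$ per arithmetic operation, and since the recursion has depth at most $D_\omega \le n$ and each node aggregates at most $n$ terms, the accumulated relative error over the whole computation is bounded by $n \cdot n \cdot n^{-\Theta(c)}$, which is still $n^{-c}$ after adjusting the hidden constant; hence each $\delta_{s*}(u)$ is computed within a multiplicative factor in $[1 - n^{-c}, 1 + n^{-c}]$, and this dominates the overall guarantee since all rounds use the same precision.

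I expect the main obstacle to be precisely this error-propagation bookkeeping: making the rounding model explicit, checking that $total_{sp}$ and the partial dependency sums never overflow the $\BO(\log n)$-bit budget after scaling, and verifying that the per-operation relative errors compound only polynomially (not exponentially) along the $\le D_\omega$ levels of the DLG. The purely structural parts — the two inductions identifying $total_{sp}$ with $\sigma_{su}$ and $\delta_{BC}$ with $\delta_{s*}(u)$, and the runtime accounting — are routine given Lemmas~\ref{lemma:algdisp}, \ref{lemma:algagg}, Fact~\ref{fact:ecc-approx-diam} and Theorem~\ref{theo:brandes:recursive}.
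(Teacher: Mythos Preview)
Your plan matches the paper's proof almost exactly: the runtime accounting, the two inductions identifying $total_{sp}$ with $\sigma_{su}$ and $\delta_{BC}$ with $\delta_{s*}(u)$ via Brandes' recursion, and the floating-point encoding of $\sigma_{su}$ with a $\Theta(c\log n)$-bit mantissa are all what the paper does. Two small points where the paper is more careful: (i) the error propagation is carried out as two explicit inductions --- first showing $\tspapp_k(u)\ge(1-\epsilon)^{k-1}\tsp_k(u)$ over the levels, then feeding the resulting worst-case factor $(1-\epsilon)^{D_h}$ into a second induction on $\depapp$ --- which yields the compounded bound $(1-\epsilon)^{\pm D_h^2}$ rather than your informal $n\cdot n\cdot n^{-\Theta(c)}$; (ii) in weighted graphs the recursion depth is the hop-length of shortest weighted paths, i.e.\ the shortest-path diameter $D_{sp}\le n$, not $D_\omega$ (which need not be bounded by $n$), so your phrase ``depth at most $D_\omega\le n$'' should be replaced accordingly.
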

\ifshort
This is a key-lemma within the proof of Theorem \ref{theo:bcapprox} and we state the full proof in Appendix~\ref{sec:algBC}.
\fi

\iffull
\begin{proof}
\textbf{Runtime: }
The computation of broadcasting and aggregating $p_s$ and $|S_i|$ \iffull in Lines \ref{line:BCprobstart} to \ref{line:BCprobend} \fi takes $\BO(D_\omega)$ time slots. The execution of Algorithms\iffull~\ref{alg:DLGcomp}\fi~\textsc{DLGcomp} and\iffull~\ref{alg:DLGagr}\fi~\textsc{DLGagr} is bounded when using our framework by $|S_i|+D_\omega'$, see Theorem \ref{theo:DLGruntime}. Therefore the runtime of Algorithm\iffull~\ref{alg:BC_setup_controlling}\fi~\textsc{BCsetup} is bound by $\BO(|S_i|+D_\omega)$.

\textbf{Correctness: }
As defined in Theorem \ref{theo:brandes:recursive}, to calculate $\delta_{s*}(u),   $ it suffices that a node $u$ needs to know the number of shortest $u$-$s$-paths $\sigma_{us}$ and for every child $u_h$ of $u$ in $\calL_s$ the number of $u_h$-$s$-paths $\sigma_{u_hs}$ and the child's $\delta_{s*}(u_h)$. The number of shortest $u$-$s$-paths of a node $u$ can be obtained by adding the number of shortest paths of all parents in $\calL_s$ together, which is done in $\mathcal{A}_{BC}$ along the execution of Algorithm\iffull~\ref{alg:DLGcomp}\fi~\textsc{DLGcomp}. A leaf $u_l$ in $\calL_s$ has a betweenness dependency of zero per definition. Therefore Algorithm\iffull~\ref{alg:DLGagr}\fi~\textsc{DLGagr} and $\mathcal{A}_{BC}'$ can start aggregating the missing parameters in the leafs of $\calL_s$. At the end of the execution of Algorithm\iffull~\ref{alg:DLGagr}\fi~\textsc{DLGagr} in \iffull Line \ref{line:BCstartagg} of \fi Algorithm\iffull~\ref{alg:BC_setup_controlling}\fi~\textsc{BCsetup}, each node $u\in V$ has the information to calculate $\delta_{s*}(u)$.
\todoI{@stephan: this is the only reference to the whole shortest path proof}

While executing Algorithm\iffull~\ref{alg:DLGcomp}\fi~\textsc{DLGcomp} to compute a $DLG$ $\calL_s$, a node $u$ sends the total number of shortest paths ($\tsp$) between $s$ and $u$ to its children. 
As the graph in Figure \ref{abb:example2n} shows, this number can be as large as $\BO(2^n)$ and therefore $\BO(n)$ bits are needed to transmit the exact number. In order to use not more than $\BO(\log n)$ bits per message, we approximated $\tsp$ by using a $(4+c)\log n$ bits: $(3+c)\log(n)$ bit for the most significant bits $b$  of $\tsp$ and $\log(n)$ bits to encode the exponent $E$ of $tsp$, that is $tsp=b\cdot 2^E$. Therefore the approximation of $\tsp$ has a multiplicative approximation guarantee in the range of $[1-\delta,1+\delta]$ for the betweenness centrality with $|\delta|\leq\frac{1}{n^c}$. We start with the proof for the case of an unweighted graph. And state later in Lemma \ref{lemma:shortesp:weighted} that it holds too for weighted graphs.

\begin{figure}[h]
\begin{center}
\includegraphics{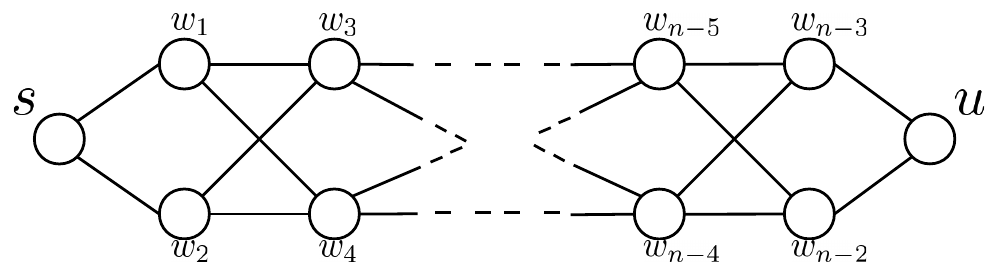}
\caption{Graph with $2^{\frac{n-2}{2}}$ shortest $s$-$u$-paths}\label{abb:example2n}
\end{center}
\end{figure}

Let $\tsp_k(u)$ be the exact and $\tspapp_k(u)$ the approximated number of total shortest paths from a sample node $s$ to $u$, where $k$ is the hop distance of $u$ to $s$.

In Equation \eqref{eq:tsp0} the number of shortest paths from a node $u$ to itself is set to $1$. This is done for simplicity--note that the actual number is $0$--as it ensures that the number of shortest paths from $u$ to ant neighbor is (at least) $1$. And since there is only one shortest path between two nodes which are one hop apart, consisting of the edge between them, Equation \eqref{eq:tsp1} follows.
The exact computation of $\tsp(u)$, which is dependent of the values received from parents of $u$, is shown in Equation \eqref{eq:tsprec}. It is the sum of all shortest path from $s$ to all parents of $u$.

\begin{align}
&\tsp\nolimits_0(s)=1 \label{eq:tsp0}\\
&\tsp\nolimits_1(u)=1 \label{eq:tsp1}\\
&\tsp\nolimits_k(u)=\sum_{v\in P(u)}\tsp\nolimits_{k-1}(v)
\label{eq:tsprec}
\end{align}

We proof by induction a lower bound for the approximation error at a node $u$ dependent of the hop distance $k$ from $u$ to the root node $s$.

\begin{align}
\textbf{Claim: }&& 
\tspapp\nolimits_k(u)&\geq(1-\varepsilon)^{k-1}\cdot\tsp\nolimits_k(u)
\label{eq:tspclaim}  \allowdisplaybreaks[4] \\
\textbf{Base: }&& 
\tspapp\nolimits_1(u)&\geq(1-\varepsilon)^0\cdot\tsp\nolimits_0(u)=1\cdot 1=1
\label{eq:tspbase}  \allowdisplaybreaks[4] \\
\textbf{Step: }&& 
\tspapp\nolimits_{k+1}(u) &\geq (1-\varepsilon)\cdot\left(\sum_{v\in P(u)}\tspapp\nolimits_{k}(v)\right)\nonumber\\
&& &\geq (1-\varepsilon)\cdot\left(\sum_{v\in P(u)}(1-\varepsilon)^{k-1}\tsp\nolimits_{k}(v)\right)\nonumber\\
&& &= (1-\varepsilon)^{k}\cdot\left(\sum_{v\in P(u)}\tsp\nolimits_{k}(v)\right)\nonumber\\
&& &= (1-\varepsilon)^{k}\cdot\tsp\nolimits_{k+1}(u) & \square
\label{eq:tspstep}
\end{align}

From Equation \eqref{eq:tspclaim} we derive a maximum lower error bound independent of $k$ for all nodes $u\in V$, as shown in Equation \eqref{eq:tsprecappend}, using the fact that $k\leq D_h$.

\begin{align}
\tspapp\nolimits_k(u)\geq(1-\epsilon)^{D_h}\cdot\tsp\nolimits_k(u)\quad\forall 1\leq k\leq D_h
\label{eq:tsprecappend}
\end{align}

Now let $\dep(u)$ be the exact and $\depapp(u)$ the approximated betweenness dependency of a sample node $s$ on $u$, where $k$ is the hop distance of $u$ to $s$. Let us assume $\tsp$ got approximated at every node with the maximum error $(1-\epsilon_m)=(1-\epsilon)^{D_h}$.
All nodes without any children in $DLG_s$ (leafs), e.g. nodes at maximum hop distance of $s$, have no betweenness dependency of $s$, since there are no shortest paths going \textit{through} them, this proves Equation \eqref{eq:bcrec0}.
Then the exact recursive computation of $\dep$ is shown in Equation \eqref{eq:bcrec}.

\begin{align}
&\dep\nolimits_{D_h}(u)=\dep\nolimits_{leaf}(u)=0 \label{eq:bcrec0}\\
&\dep\nolimits_k(u)=\sum_{w:u\in P(w)}\frac{\tsp\nolimits_{k}(u)}{\tsp\nolimits_{k+1}(w)}\cdot\left(1+\dep\nolimits_{k+1}(w)\right) \label{eq:bcrec}
\end{align}

Now we derive error bounds by evaluating the term $\frac{\tspapp_i}{\tspapp_{i+1}}$. The maximum errors occur at $\frac{\tspapp_i}{\tspapp_{i+1}}=\frac{\tsp_i\cdot(1-\epsilon_m)}{\tsp_{i+1}}$(i) and at $\frac{\tspapp_i}{\tspapp_{i+1}}=\frac{\tsp_i}{\tsp_{i+1}\cdot(1-\epsilon_m)}$(ii).
With (i) and (ii) we show by induction an upper and lower bound for the multiplicative error of $\depapp_k$ in Equations \eqref{eq:depclaimi} to \eqref{eq:depstepii}.
\newline\newline
Proof by Induction on the lower approximation error bound of $\depapp$:
\begin{align}
\textbf{Claim: }&&
\depapp\nolimits_k(u)&\geq(1-\epsilon_m)^{D_h-k}\cdot\dep\nolimits_k(u) \label{eq:depclaimi}  \allowdisplaybreaks[4] \\
\textbf{Base: }&&
\depapp\nolimits_{D_h-1}(u)&=\sum_{w:u\in P(w)}\frac{\tspapp\nolimits_{D_h-1}(u)}{\tspapp\nolimits_{D_h}(w)}\cdot\left(1+0\right)\nonumber\\
&& &\geq\sum_{w:u\in P(w)}\frac{\tsp\nolimits_{D_h-1}(u)\cdot(1-\varepsilon_m)}{\tsp\nolimits_{D_h}(w)}=(1-\varepsilon_m)\cdot\dep\nolimits_{D_h-1}(u) \label{eq:depbasei}  \allowdisplaybreaks[4] \\
\textbf{Step: }&&
\depapp\nolimits_{k-1}(u)&=\sum_{w:u\in P(w)}\frac{\tspapp\nolimits_{k-1}(u)}{\tspapp\nolimits_{k}(w)}\cdot\left(1+\depapp\nolimits_{k}(w)\right)\nonumber\\
&& &\geq\sum_{w:u\in P(w)}\frac{\tsp\nolimits_{k-1}(u)\cdot(1-\epsilon_m)}{\tsp\nolimits_{k}(w)}\cdot\left(1+(1-\epsilon_m)^{D_h-k}\cdot\dep\nolimits_k(w)\right)\nonumber\\
&& &\geq(1-\epsilon_m)^{D_h-(k-1)}\cdot\sum_{w:u\in P(w)}\frac{\tsp\nolimits_{k-1}(u)}{\tsp\nolimits_{k}(w)}\cdot\left(1+\dep\nolimits_k(w)\right)\nonumber\\
&& &=(1-\epsilon_m)^{D_h-(k-1)}\cdot\dep\nolimits_{k-1} & \square \label{eq:depstepi}
\end{align}
\newline
Proof by Induction on the upper approximation error bound of $\depapp$:
\begin{align}
\textbf{Claim: }&&
\depapp\nolimits_k(u)&\leq(1-\epsilon_m)^{k-D_h}\cdot\dep\nolimits_k(u)\label{eq:depclaimii} \allowdisplaybreaks[4] \\
\textbf{Base: }&&
\depapp\nolimits_{D_h-1}(u)&=\sum_{w:u\in P(w)}\frac{\tspapp\nolimits_{D_h-1}(u)}{\tspapp\nolimits_{D_h}(w)}\cdot\left(1+0\right)\nonumber\\
&& &\leq\sum_{w:u\in P(w)}\frac{\tsp\nolimits_{D_h-1}(u)}{\tsp\nolimits_{D_h}(w)\cdot(1-\varepsilon_m)}=(1-\varepsilon_m)^{-1}\cdot\dep\nolimits_{D_h-1}(u)\label{eq:depbaseii}  \allowdisplaybreaks[4]\\
\textbf{Step: }&&
\depapp\nolimits_{k-1}(u)&=\sum_{w:u\in P(w)}\frac{\tspapp\nolimits_{k-1}(u)}{\tspapp\nolimits_{k}(w)}\cdot\left(1+\depapp\nolimits_{k}(w)\right)\nonumber\\
&& &\leq\sum_{w:u\in P(w)}\frac{\tsp\nolimits_{k-1}(u)}{\tsp\nolimits_{k}(w)\cdot(1-\epsilon_m)}\cdot\left(1+(1-\epsilon_m)^{k-D_h}\cdot\dep\nolimits_k(w)\right)\nonumber\\
&& &\leq(1-\epsilon_m)^{(k-1)-D_h}\cdot\sum_{w:u\in P(w)}\frac{\tsp\nolimits_{k-1}(u)}{\tsp\nolimits_{k}(w)}\cdot\left(1+\dep\nolimits_k(w)\right)\nonumber\\
&& &=(1-\epsilon_m)^{(k-1)-D_h}\cdot\dep\nolimits_{k-1} & \square
\label{eq:depstepii}
\end{align}

From the upper and lower approximation error bounds in a node $u$ with a certain hop distance $k$ to the root node, we can derive in Equations \eqref{eq:depclaimi} and \eqref{eq:depclaimii} a maximum error range for any node at any distance independent of $k$, as shown in Equation \eqref{eq:depbounds}.
\todoI{$\eps$ is later $n^{-c-3}$}
\begin{align}
(1-\epsilon_m)^{D_h-1}\cdot\dep\nolimits_k(u)\leq\depapp\nolimits_k(u) &\leq (1-\epsilon_m)^{1-D_h}\cdot\dep\nolimits_k(u) \nonumber\\ 
(1-\epsilon)^{D_h^2}\cdot\dep\nolimits_k(u)\leq\depapp\nolimits_k(u) &\leq(1-\epsilon)^{-D_h^2}\cdot\dep\nolimits_k(u)\qquad\forall 1\leq k\leq D_h
\label{eq:depbounds}
\end{align}

By using the fact, that $D_h\leq n$ we derive in Equations \eqref{eq:epsilondeltamin} to \eqref{eq:delta} a tight bound for $|\delta|\leq\frac{1}{n}$.

\begin{equation}
(1+\delta)\geq(1-\epsilon)^{D_h^2}\overbrace{\geq}^{\epsilon=\frac{1}{n^{3+c}}\leq\frac{1}{n^c\cdot D_h^2}}\left(1-\frac{1}{n^c\cdot D_h^2}\right)^{D_h^2}\approx 1-\frac{1}{n^c}
\label{eq:epsilondeltamin}
\end{equation}

\begin{equation}
(1+\delta)\leq(1-\epsilon)^{-D_h^2}\overbrace{\leq}^{\epsilon=\frac{1}{n^{3+c}}\leq\frac{1}{n^c\cdot D_h^2}}\left(1-\frac{1}{n^c\cdot D_h^2}\right)^{-D_h^2}\approx 1+\frac{1}{n^c}
\label{eq:epsilondeltamax}
\end{equation}

\begin{equation}
-\frac{1}{n^c}\leq\delta\leq\frac{1}{n^c} \qquad \square
\label{eq:delta}
\end{equation}

\begin{corollary}
For betweenness centrality on unweighted graphs we obtain an approximation of the total shortest paths which is guaranteed to be in the range of a $[1-\delta,1+\delta]$ multiplicative factor, where $|\delta|\leq\frac{1}{n^c}$, when assuming that the total shortest path error is multiplicatively bounded to be smaller than $\frac{1}{n^{3+c}}$.
\label{cor:tsperrunw}
\end{corollary}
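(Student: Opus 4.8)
The plan is to (i) quantify the per-node rounding precision, (ii) show this error propagates only multiplicatively through the $\tsp$-recursion, with no blow-up from a node having several parents, and (iii) close the estimate using the trivial depth bound $D_h\le n$.

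First I would pin down the per-node error. Each node stores and forwards $\tsp$ in a floating-point format with a $(3+c)\log n$-bit mantissa and a $\log n$-bit exponent, so truncating an intermediate value to its $(3+c)\log n$ most significant bits costs a relative error of at most $2^{-(3+c)\log n}=n^{-(3+c)}=:\varepsilon$. Since every quantity involved is a non-negative count of paths, replacing a true value $x$ by its stored approximation $\bar x$ always satisfies $(1-\varepsilon)x\le\bar x\le(1+\varepsilon)x$, with no cancellation to worry about; this is exactly the hypothesis ``$\tsp$ error multiplicatively bounded below $n^{-(3+c)}$''.

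Next I would propagate this through the recursion $\tsp_k(u)=\sum_{v\in P(u)}\tsp_{k-1}(v)$ of \eqref{eq:tsprec} by induction on the hop-distance $k=d(s,u)$ from the sampled source $s$, exactly as in claim \eqref{eq:tspclaim} together with its mirror image. The base case $k=1$ is immediate: there is a single length-$1$ shortest path and the value $1$ is stored exactly. For the step, every parent $v\in P(u)$ lies at distance $k-1$ from $s$, so by the induction hypothesis all summands $\tspapp_{k-1}(v)$ carry the \emph{same} relative-error window $[(1-\varepsilon)^{k-2},(1+\varepsilon)^{k-2}]$; summing non-negative numbers preserves this window, and the single rounding step $u$ performs when forming $\tspapp_k(u)$ contributes one further factor $(1\pm\varepsilon)$. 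This yields $(1-\varepsilon)^{k-1}\tsp_k(u)\le\tspapp_k(u)\le(1+\varepsilon)^{k-1}\tsp_k(u)$.

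Finally, every node reached in $\calL_s$ is at hop-distance at most $D_h\le n$ from $s$, and $1-\varepsilon<1<1+\varepsilon$, so the exponent $k-1$ may be replaced by $D_h$ uniformly (this is \eqref{eq:tsprecappend}); writing $\tspapp_k(u)=(1+\delta)\tsp_k(u)$, it remains to check $(1-\varepsilon)^{D_h}\ge1-n^{-c}$ and $(1+\varepsilon)^{D_h}\le(1-\varepsilon)^{-D_h}\le1+n^{-c}$. With $\varepsilon=n^{-(3+c)}$ and $D_h\le n$ this is a one-line Bernoulli estimate: $(1-\varepsilon)^{D_h}\ge1-D_h\varepsilon\ge1-n^{-(2+c)}\ge1-n^{-c}$ and $(1-\varepsilon)^{-D_h}\le1+2D_h\varepsilon\le1+n^{-c}$ for $n\ge2$, so $|\delta|\le n^{-c}$, as claimed. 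I do not expect a genuine obstacle; the only subtle point is the inductive step, and it works out precisely because all parents of a node sit on one common level and $\tsp$ is a sum of non-negative terms, so summation never amplifies the relative error. The generous $(3+c)\log n$-bit mantissa (rather than the $(1+c)\log n$ that the $\tsp$ bound alone would demand) is chosen with hindsight so that the dependency recursion, which compounds this error over a further $\approx D_h$ levels in \eqref{eq:depbounds}, also stays within $n^{-c}$.
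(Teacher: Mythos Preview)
Your proof is correct and follows the paper's approach: the induction in step~(ii) is exactly the paper's claim~\eqref{eq:tspclaim} (the paper writes out only the lower bound; you add the symmetric upper bound), and your closing Bernoulli estimate plays the role of~\eqref{eq:tsprecappend} specialized via $D_h\le n$. Your final remark---that the $(3+c)\log n$-bit mantissa is oversized for the $\tsp$ bound alone and is really chosen so that the dependency recursion in~\eqref{eq:depbounds} (where the exponent becomes $D_h^2$) also lands within $n^{-c}$---is precisely what the paper does next in~\eqref{eq:epsilondeltamin}--\eqref{eq:delta}.
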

\todoI{
\begin{proof}
The Error Bound \eqref{eq:delta} holds for the dependencies calculated in Algorithm \ref{alg:BC_g}, Line \ref{line:dep1bc} and \ref{line:depbc}. By adding multiple dependencies, in Line \ref{line:sumsbc} of the same algorithm, the error bounds do not change. Neither do they change in Algorithm\iffull~\ref{alg:BC_setup_controlling}\fi~\textsc{BCsetup}, Line \ref{line:scalebc}, by scaling up the dependencies of $S$ on $u$ to the approximated betweenness centrality $BC(u)$. This is true, as in sumations and when scaling equations, the multiplicative error can simply \todo{be factored out -- what does this mean?}. \blue{$BC_{unscaled}=OPT*\varepsilon$ and $BC_{scaled}=\frac{k*BC_{unscaled}}{n}$, it holds $BC_{scaled}=\frac{k*OPT*\varepsilon}{n}=\frac{k*OPT}{n}*\varepsilon=BC_{OPT}*\varepsilon$, d.h. $ 
\varepsilon$ aendert sich nicht}
\label{lemma:tsperrw}
\end{proof}
}

\begin{lemma}
The approximation guarantee of a factor within the range $[1-\delta, 1+\delta]$ with $|\delta|\leq\frac{1}{n^c}$ of the betweenness centrality for unweighted graphs holds for weighted graphs, too.
\label{lemma:shortesp:weighted}
\end{lemma}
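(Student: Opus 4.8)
The plan is to notice that the unweighted analysis in Equations~\eqref{eq:tsp0}--\eqref{eq:delta} never used unweightedness except for one structural fact: the recursions for $\tsp$ and for the dependency $\dep$ are unrolled over $O(D_h)$ \emph{levels}, each level contributing at most one truncation‑induced multiplicative slack $(1-\epsilon)$ (the truncation keeps $(3+c)\log n$ significant bits plus a $\log n$‑bit exponent, so the slack depends only on the magnitude of the integer stored, not on the edge weights). In the weighted case $\calL_s$ is still a DAG: by the definition of $\calL_s$ a directed edge goes from $V_i$ to $V_{i+\omega(u,v)}$ with $\omega(u,v)\ge 1$, so the weighted distance from $s$ is strictly increasing along every directed edge; hence every directed $s$‑$u$‑path in $\calL_s$ is simple and has at most $n-1$ edges. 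I would therefore replace the hop index $k$ of the unweighted proof by $\ell(u)$, the length (number of edges) of a longest directed $s$‑$u$‑path in $\calL_s$. This quantity still has the two properties the inductions need: (a) $\ell(v)\le \ell(u)-1$ for every parent $v\in P(u)$ of $u$ in $\calL_s$, and (b) $0\le \ell(u)\le n-1$ for all $u$. Symmetrically, for the dependency recursion I would use the dual depth $m(u)$, the length of a longest directed path from $u$ to a leaf of $\calL_s$, which satisfies $m(w)\le m(u)-1$ for every child $w$ of $u$ and $m(u)\le n-1$.

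With this substitution the argument is verbatim. Brandes' recursion (Theorem~\ref{theo:brandes:recursive}) and the recursion $\tsp(u)=\sum_{v\in P(u)}\tsp(v)$ hold unchanged in the weighted setting, since $P(u)$ is exactly the set of parents of $u$ in $\calL_s$ regardless of the weights, and Lemmas~\ref{lemma:algdisp}--\ref{lem:kneighborhood} already guarantee that \textsc{DLGcomp}/\textsc{DLGagr} evaluate these recursions faithfully along $\calL_s$ even in the presence of scheduling delays. Re‑running the induction of Equations~\eqref{eq:tspclaim}--\eqref{eq:tsprecappend} with $\ell(\cdot)$ in place of $k$ yields $\tspapp(u)\ge(1-\epsilon)^{\ell(u)-1}\tsp(u)\ge(1-\epsilon)^{n}\tsp(u)$; setting $1-\epsilon_m:=(1-\epsilon)^{n}$ and re‑running the two dependency inductions of Equations~\eqref{eq:depclaimi}--\eqref{eq:depbounds} with $m(\cdot)$ in place of $D_h-k$ gives
\[
(1-\epsilon)^{n^2}\cdot\dep\nolimits(u)\ \le\ \depapp\nolimits(u)\ \le\ (1-\epsilon)^{-n^2}\cdot\dep\nolimits(u).
\]
Finally, the calibration $\epsilon=\tfrac{1}{n^{3+c}}\le\tfrac{1}{n^{2+c}}\le\tfrac{1}{n^{c}n^{2}}$ makes $(1-\epsilon)^{\pm n^2}$ lie in $[\,1-\tfrac{1}{n^c},\,1+\tfrac{1}{n^c}\,]$ by the same Bernoulli estimates as in Equations~\eqref{eq:epsilondeltamin}--\eqref{eq:delta}; the only change is that $D_h^2$ is replaced by $n^2$, and since $D_h\le n$ was already used there the same $\epsilon$ still suffices. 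Hence $|\delta|\le\tfrac{1}{n^c}$, and by Corollary~\ref{cor:tsperrunw} this multiplicative error is preserved through the summations forming $s_{BC}$ and the subsequent scaling to $BC(u)$, which proves the lemma.

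The main obstacle is precisely the point addressed in the first paragraph: in a weighted DLG the ``layers'' are no longer the hop‑layers, because a node may have parents lying at different hop distances from $s$, so one cannot literally index the recursion by hop count. Identifying the correct replacement — any topological grading of the DAG $\calL_s$ that is simultaneously bounded by $n$ and strictly increases along directed edges (the longest‑path depth $\ell$, and dually $m$) — and verifying that both the $\tsp$ induction and the dependency induction still close with this grading is the crux. Once that is in place, no numerical work beyond the already‑computed unweighted estimates is needed, since the worst case of the weighted bound is governed by a quantity ($n$) for which the choice $\epsilon=1/n^{3+c}$ was already shown to be sufficient.
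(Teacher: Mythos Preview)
Your proposal is correct and follows essentially the same approach as the paper: the paper introduces the shortest-path diameter $D_{sp}$ (the maximum number of edges on any shortest weighted path), observes that the recursion depth in the weighted DLG is bounded by $D_{sp}\le n$ rather than $D_h$, and notes that since the calibration $\epsilon=1/n^{3+c}$ already absorbed the bound $D_h\le n$ in Equations~\eqref{eq:epsilondeltamin}--\eqref{eq:delta}, the same conclusion $|\delta|\le 1/n^c$ follows unchanged. Your longest-path gradings $\ell(\cdot)$ and $m(\cdot)$ are just a per-node refinement of the paper's global $D_{sp}$ bound, and your explicit discussion of why a single hop index no longer suffices is in fact more careful than the paper's one-line substitution.
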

To prove this, we need to introduce the shortest path diameter $D_{sp}$.
\begin{definition}[Shortest path diameter]
$D_{sp}(G):=\max_{u,v\in V}\{|P||P \text{is a shortest weighted u-v-path}\}$ of a graph G is the maximum number of hops of a shortest weighted path between any two nodes of the graph.\\
Note that $1\leq D_h\leq D_{sp}\leq n$.
\label{def:diamsp}
\end{definition}

\begin{proof}
The maximum hop distance between any leaf $l$ in a DLG and its root node $r$ can be longer than $D_h$. Because a shortest weighted path is not bounded by $D_h$ hops, but by $D_{sp}$ as defined in Definition \ref{def:diamsp}. That changes the maximum exponents in Equations \eqref{eq:tsprecappend}, \eqref{eq:depclaimi} and \eqref{eq:depclaimii} from $D_h$ to $D_{sp}$. Since $D_{sp}$ is smaller than $n$ too, the error bounds in Equations \eqref{eq:epsilondeltamin} and \eqref{eq:epsilondeltamax} do not change.
\end{proof}

\end{proof}

\begin{proof}[Proof of Theorem \ref{theo:bcapprox}]
\textbf{Runtime: }
The computation of $D_h$ and $D_\omega$ \iffull in Lines \ref{line:BCestimstart} to \ref{line:BCestimend}\fi takes $\BO(D_\omega)$ time slots.
Each round takes $\BO(|S_i|+D_\omega)$ time as shown in Lemma \ref{lemma:deproundapprox}. And for sampling $|S|$ nodes we need $\log\frac{|S|}{D_h}$ rounds, since we start with ${D_h}$ sample nodes and double this value in every round. This leads to a total time complexity of $\BO\left(|S|+D_\omega\log\frac{|S|}{D_h}\right)$.

\textbf{Correctness: }Due to Theorem \ref{theo:bader:estimation}, the betweenness centrality $BC(v)$ of a node $v$ can be estimated with probability $\geq 1-2\epsilon$ by sampling the dependency $\delta_{s*}(v)$ of $t\epsilon=|S|$ root nodes $s\in S$, if $BC(v)\geq\frac{n^2}{t}$.
The approximation error given in Theorem \ref{theo:bader:estimation} is $\left(\times,1+\frac{1}{\epsilon}\right)$.

As shown in Lemma \ref{lemma:deproundapprox}, Algorithm\iffull~\ref{alg:BC_setup_controlling}\fi~\textsc{BCsetup} approximates those dependencies $\delta_{s*}(v)$ within a with a multiplicative approximation guarantee in the range of $[1-\frac{1}{n^c},1+\frac{1}{n^c}]$. 
\iffull In Line \ref{line:BCcheck} and \ref{line:BCnhatcheck}, \fi
Algorithm\iffull~\ref{alg:BC_setup_controlling}\fi~\textsc{BCsetup} ensures that $s_{BC}\ge n\cdot \tau_c$ root nodes have been sampled to approximate the value of $BC(v)$ of a node $v$ such that the betweenness centralities of at least $\hat{n}$ nodes are approximated close enough before terminating.
Combining both approximation errors and noting that $\frac{1}{n^c}<\frac{1}{\epsilon}$, leads to a total error of $\left(\times,1+\frac{1}{\epsilon'}\right)$.
\end{proof}

\begin{corollary}
Let $BC(v)$ be the betweenness centrality of node $v$. Then $\BO\left(\frac{n^2}{BC(v)} + D_w\log \frac{n}{BC(v)\cdot D_h}\right)$ rounds of communication are necessary to compute an $\left(\times,1+\frac{1}{\epsilon'}\right)$-approximation of $BC(v)$. Furthermore, in the same time our Algorithm \iffull~\ref{alg:BC_setup_controlling}\fi~\textsc{BCsetup} finds  w.h.p all other nodes with higher betweenness centrality than $v$.
\end{corollary}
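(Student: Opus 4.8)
The plan is to derive the corollary as a specialization of Theorem~\ref{theo:bcapprox} together with a monotonicity observation about the sampling schedule of Algorithm~\textsc{BCsetup}. First I would rewrite the hypothesis $BC(v)=n^2/t$ of Theorem~\ref{theo:bcapprox} as $t=n^2/BC(v)$, so that the number of source samples that theorem requires becomes $|S|=t\epsilon'=\epsilon' n^2/BC(v)$. Substituting this into its running-time bound $\BO\!\left(|S|+D_\omega\cdot\max\!\left(1,\log\frac{|S|}{D_h}\right)\right)$ and using $1/n^c<\epsilon'<1/2=\BO(1)$ to absorb the factor $\epsilon'$ gives $\BO\!\left(\frac{n^2}{BC(v)}+D_\omega\log\frac{n}{BC(v)\cdot D_h}\right)$ rounds; the $\max(1,\cdot)$ simply collapses to $D_\omega$ in the degenerate regime $\frac{n}{BC(v)\cdot D_h}<1$, where $v$ is so central that a constant number of sampling rounds already suffices. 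This settles the running-time part.

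For the second assertion I would exploit that the number of samples needed to $(\times,1+\tfrac1{\epsilon'})$-approximate $BC(w)$ --- namely $\epsilon' n^2/BC(w)$ by Theorem~\ref{theo:bader:estimation}, realized in the distributed setting by Lemma~\ref{lemma:deproundapprox} --- is decreasing in $BC(w)$, while Algorithm~\textsc{BCsetup} only ever enlarges its running sample set $S=S_1\cup\dots\cup S_i$, doubling $|S|$ each round starting from $D_h$. Hence after $\BO\!\left(\log\frac{n}{BC(v)\cdot D_h}\right)$ rounds, with total communication $\BO\!\left(\frac{n^2}{BC(v)}+D_\omega\log\frac{n}{BC(v)\cdot D_h}\right)$ by the per-round bound of Lemma~\ref{lemma:deproundapprox}, the running set reaches size $|S|\ge\epsilon' n^2/BC(v)$, which is exactly the point at which $v$ gets estimated. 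At that moment every node $w$ with $BC(w)>BC(v)$ satisfies $\epsilon' n^2/BC(w)<\epsilon' n^2/BC(v)\le|S|$, so its betweenness centrality has likewise been $(\times,1+\tfrac1{\epsilon'})$-approximated, each with probability at least $1-2\epsilon'$. I would then take a union bound over the at most $n$ such nodes, choosing $\epsilon'$ polynomially small within the admissible window $1/n^c<\epsilon'<1/2$ (or, for larger $\epsilon'$, boosting via a median over $\BO(\log n)$ independent repetitions, which leaves the asymptotics unchanged), so that all these estimates are simultaneously correct with high probability.

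I expect the substitution in the first paragraph to be routine; the part that needs care is the interaction with the stopping rule of Algorithm~\textsc{BCsetup}. The algorithm terminates once $\hat n$ nodes have passed the threshold $s_{BC}\ge n\cdot\tau_c$, so to be sure the round at which $|S|\ge\epsilon' n^2/BC(v)$ is actually reached one must either read the claim as ``within the stated number of rounds every node of betweenness centrality larger than $BC(v)$ is estimated, unless the algorithm has already terminated'' or run \textsc{BCsetup} with $\hat n$ set to at least the number of nodes of centrality $\ge BC(v)$. Reconciling the concrete detection test $s_{BC}\ge n\cdot\tau_c$ with the clean sample-count condition $|S|\ge\epsilon' n^2/BC(v)$ is precisely what the adaptive-sampling analysis behind Theorem~\ref{theo:bader:estimation}, Lemma~\ref{lemma:deproundapprox} and Theorem~\ref{theo:bcapprox} already provides, so no new probabilistic estimate is needed.
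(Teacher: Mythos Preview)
Your proposal is correct and is exactly the intended derivation: the paper states this corollary without proof, leaving it as an immediate consequence of Theorem~\ref{theo:bcapprox}, and your argument---substituting $t=n^2/BC(v)$ into the runtime bound and then invoking monotonicity of the required sample size in $BC(w)$---is precisely how one reads it off. Your third paragraph on the interaction with the stopping rule is a welcome bit of care that the paper itself glosses over; note also that the direct substitution actually yields $\log\frac{n^2}{BC(v)\cdot D_h}$ inside the logarithm rather than $\log\frac{n}{BC(v)\cdot D_h}$, so the corollary as stated is slightly loose, but this does not affect your reasoning.
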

\fi

\subsection{Closeness Centrality}
\label{sec:CC}

\todoI{It is mentioned in \cite{aggarwal2010survey} that a modification of \cite{rattigan2007graph} chooses landmarks using local closeness centrality - can we do that, too?}\blueI{I dont know. Didnt read it}

Eppstein and Wang showed in \cite{eppstein:2001:fastCCapprox} a fast approximation algorithm to derive the closeness centrality of a node by sampling only a few nodes $S$ and computing shortest paths from $S$ to all other nodes ($S$-SP) instead of calculating all pairs shortest paths (APSP).
These ideas can be implemented efficiently in a distributed setting by using Algorithm\iffull~\ref{alg:DLGcomp}\fi~\textsc{DLGcomp} of our proposed framework.
\ifshort 
A detailed description is given in Appendix \ref{FULL:sec:CC}.\\
\fi
\iffull
The Algorithm \ref{alg:CC} first computes $T_1$ with a breath-first-search algorithm starting in node $1$, then the value of $\epsilon$ gets broadcasted through $T_1$ in the network.
When a node has received $\epsilon$, it joins $S$ as a sample node with probability $\frac{log(n)}{n\cdot \epsilon^2}$. The number of nodes joining set $S$ are collected and broadcasted in Lines \ref{line:CCjoinSstart} to \ref{line:CCjoinSend}. In Line \ref{line:CCaccmessages}, if a node $u\in V$ receives multiple messages in one time slot, then in the next time slot node $u$ sends the sum of the values of all messages to its parent in $T_1$.
Now, in line \ref{line:CCexAlg}, each node executes Algorithm\iffull~\ref{alg:DLGcomp}\fi~\textsc{DLGcomp}. No algorithm $\mathcal{A}_{agr}$ or $\mathcal{A}_{agr}'$ 
specified for Closeness Centrality is needed, as Algorithm\iffull~\ref{alg:DLGcomp}\fi~\textsc{DLGcomp} already computes the required distances.
After Algorithm\iffull~\ref{alg:DLGcomp}\fi~\textsc{DLGcomp} is executed, every node knows its distances $\omega[r_i]$ to all sample nodes $r_i\in S$ and therefore the number of samples $|S|$. Using those distances, every node $u$ can approximate its closeness centrality $CC_u$. \blueI{note: no aggregation / Algo2 needed.}

\begin{algorithm}[htb]
\begin{algorithmic}[1]
\IF{$u=1$}
        \STATE \textbf{compute} tree $T_{1}$ and \textbf{estimate} $D_\omega'$ by using Fact \ref{fact:ecc-approx-diam};
        \STATE \textbf{broadcast} values of $\epsilon$ and $D_\omega'$ on $T_{1}$;
\ELSE
        \STATE \textbf{wait} until values of $\epsilon$ and $D_\omega'$ \textbf{received};
\ENDIF
\STATE $u$ \textbf{joins} $S$ with probability $\frac{log(n)}{n\cdot \epsilon^2}$;
\IF{$u\in S$} \label{line:CCjoinSstart}
        \STATE \textbf{send} (``1'') to node $1$ by $T_1$; \COMMENT{accumulate messages}\label{line:CCaccmessages}
\ENDIF
\IF{$u=1$}
        \STATE \textbf{wait} for responses $r_i$;
        \STATE $|S|:=\sum_i r_i$;
        \STATE \textbf{broadcast} $|S|$ through $T_1$;
\ELSE
        \STATE \textbf{receive} value of $|S|$;
\ENDIF \label{line:CCjoinSend}
\STATE \textbf{execute} Algorithm\iffull~\ref{alg:DLGcomp}\fi~\textsc{DLGcomp} with $(|S|,D_\omega', \bot)$; \label{line:CCexAlg}
\STATE $CC_u:=1/\sum_{i=1}^{|S|}\frac{n\cdot \omega[r_i]}{|S|\cdot (n-1)}$ //** $\omega[r_i]$; is the distance of $u$ to sample node $r_i$
\end{algorithmic}
\caption{$\mathcal{A}_{CC}$ Approximation (executed by node $u$) \newline Input: $\epsilon$ }\label{alg:CC}
\vspace*{0.5cm}
\end{algorithm}

\begin{lemma}
Algorithm \ref{alg:CC} approximates closeness centrality of all nodes with an inverse additive error of $\epsilon D_h$ in time $\BO\left(\frac{log(n)}{\epsilon^2}+D_h\right)$ in unweighted graphs.
\label{lemma:ccunweighted}
\end{lemma}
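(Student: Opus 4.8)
I would split the claim into the runtime bound and the approximation bound and treat them independently, following the sampling scheme of Eppstein and Wang~\cite{eppstein:2001:fastCCapprox} but charging the extra bookkeeping to the CONGEST model.

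For the runtime, the plan is to simply account for each phase of Algorithm~\ref{alg:CC}. Building the BFS tree $T_1$ and estimating $D_\omega'=D_h'$ via Fact~\ref{fact:ecc-approx-diam} (so that $D_h\le D_h'\le 2D_h$, which both guarantees that the DLGs of depth $D_h'$ span the whole graph and keeps $\BO(D_h')=\BO(D_h)$) costs $\BO(D_h)$ rounds; broadcasting $\epsilon$ and $D_h'$ down $T_1$ costs $\BO(D_h)$; letting every node in $S$ send a ``$1$'' toward node~$1$ while summing partial counts at internal nodes of $T_1$ costs $\BO(D_h)$ by pipelining; and broadcasting $|S|$ back down costs $\BO(D_h)$. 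The only other cost is the call to \textsc{DLGcomp} with $(|S|,D_h',\bot)$, which by Theorem~\ref{theo:DLGruntime} runs in $\BO(|S|+D_h')=\BO(|S|+D_h)$. Finally I would observe that $|S|=\BO(\log n/\epsilon^2)$ with high probability: each of the $n$ nodes joins $S$ independently with probability $\log n/(n\epsilon^2)$, so $\mathbb{E}[|S|]=\log n/\epsilon^2$ and a Chernoff bound gives $|S|\le 2\log n/\epsilon^2$ except with probability $n^{-\Omega(1)}$. Adding the phases gives the claimed $\BO(\log n/\epsilon^2+D_h)$.

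For the approximation guarantee I would reinterpret the last line of Algorithm~\ref{alg:CC} as an empirical mean. Fix a node $u$; after \textsc{DLGcomp} terminates, $u$ knows $\omega[r_i]=d(u,r_i)$ for every $r_i\in S$, so it computes
\[
 \frac{1}{CC_u}=\frac{n}{(n-1)|S|}\sum_{r\in S}d(u,r)=\frac{n}{n-1}\cdot\Big(\tfrac{1}{|S|}\textstyle\sum_{r\in S}d(u,r)\Big),
\]
whereas the exact value is $\tfrac{1}{CC(u)}=\tfrac{1}{n-1}\sum_{v\in V}d(u,v)=\tfrac{n}{n-1}\cdot\big(\tfrac1n\sum_{v\in V}d(u,v)\big)$. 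Hence it suffices to bound $\big|\tfrac1{|S|}\sum_{r\in S}d(u,r)-\tfrac1n\sum_{v\in V}d(u,v)\big|$ by $\tfrac{n-1}{n}\epsilon D_h\le \epsilon D_h$ with high probability. I would condition on $|S|=s$; conditioned on this, $S$ is a uniformly random $s$-subset of $V$, so the sample average is an unbiased estimator of the true average and each term $d(u,r)$ lies in $[0,D_h]$. Applying Hoeffding's inequality for sampling without replacement yields
\[
 \Pr\!\Big[\big|\tfrac1s\textstyle\sum_{r\in S}d(u,r)-\tfrac1n\sum_{v}d(u,v)\big|\ge\tfrac{n-1}{n}\epsilon D_h \,\Big|\, |S|=s\Big]\le 2\exp\!\big(-2s\,(\tfrac{n-1}{n})^2\epsilon^2\big).
\]
On the high-probability event $s\ge\tfrac12\log n/\epsilon^2$ (the Chernoff bound again, now in the lower direction) the right-hand side is $n^{-\Omega(1)}$, and by taking the constant in the sampling probability large enough it is at most, say, $n^{-2}$; a union bound over all $n$ nodes $u$ together with the $\BO(1)$ bad events for $|S|$ then gives that, with probability $1-o(1)$, every node simultaneously computes $1/CC_u$ within additive error $\epsilon D_h$ of $1/CC(u)$, which is exactly an inverse additive approximation with error $\epsilon D_h$.

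The main obstacle is bookkeeping rather than conceptual: the sample set is generated by independent Bernoulli trials, so $|S|$ is random and the estimator is normalized by this random size, not by its mean. Handling this cleanly forces the conditioning-on-$|S|$ argument, the use of the without-replacement form of Hoeffding's inequality (or Serfling's sharper bound), and careful tracking of the constant in the sample size so that the per-node failure probability still survives the union bound over all $n$ vertices. Everything else mirrors \cite{eppstein:2001:fastCCapprox} with $D_h$ in place of the diameter; the weighted statement of Theorem~\ref{theo:ccapprox} then follows along the same lines with $D_h$ replaced by $D_\omega$ and \textsc{DLGcomp} run on the weighted graph.
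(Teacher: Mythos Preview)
Your argument is correct and follows the same strategy as the paper: the paper's proof simply invokes Eppstein and Wang~\cite{eppstein:2001:fastCCapprox} for the approximation guarantee and Theorem~\ref{theo:DLGruntime} for the runtime, and notes $D_h=D_\omega$ in unweighted graphs. The only substantive difference is that you unpack the Eppstein--Wang citation into an explicit Hoeffding/union-bound computation and, in doing so, are actually more careful than the paper about one point: because Algorithm~\ref{alg:CC} samples by independent Bernoulli coins, $|S|$ is random rather than fixed, and you handle this by conditioning on $|S|$ and applying Chernoff in both directions, whereas the paper silently identifies $|S|$ with its expectation. Your remark that the constant in the sampling probability may need to be enlarged to push the union bound through is well taken (the algorithm as written fixes it to~$1$), but this is a cosmetic issue and does not affect the asymptotics.
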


\begin{proof}
As stated by Epstein and Wang in \cite{eppstein:2001:fastCCapprox}, by using $k=\BO\left(\frac{log(n)}{\epsilon^2}\right)$ samples, we can approximate the closeness centrality w.h.p. with an inverse additive error of $\epsilon D_h$. As shown in Theorem \ref{theo:DLGruntime}, Algorithm\iffull~\ref{alg:DLGcomp}\fi~\textsc{DLGcomp} takes $\BO(|S|+D_\omega)$ time to complete. Computing, estimating and/or broadcasting $T_1$, $D_\omega'$ and $\epsilon$ takes in each case $\BO(D_\omega)$ time. Algorithm \ref{alg:CC} therefore takes $\BO\left(\frac{log(n)}{\epsilon^2}+D_h\right)$ time for unweighed graphs, since $D_h=D_\omega$ in unweighted graphs.
\end{proof}

\begin{proof}[Proof of Theorem \ref{theo:ccapprox}]
In the proof of Epstein and Wang in \cite{eppstein:2001:fastCCapprox}, all $D_h$ can be substituted by $D_\omega$ and the proof is still valid. Hence, by using $k=\BO\left(\frac{log(n)}{\epsilon^2}\right)$ samples we can approximate the closeness centrality w.h.p. with an inverse additive error of $\epsilon D_\omega$.

As shown in Lemma \ref{lemma:ccunweighted} Algorithm \ref{alg:CC} takes $\BO(|S|+D_\omega)$ time to complete. Thus, by sampling $k=\BO\left(\frac{log(n)}{\epsilon^2}\right)$ root nodes, Algorithm \ref{alg:CC} needs  time $\BO\left(\frac{log(n)}{\epsilon^2}+D_\omega\right)$.
\end{proof} 
\fi

\iffull
\subsection{Minimum Routing Cost Tree}
\label{app:MRCT}

The S-MRCT-Algorithm in \cite{hochuli:holzer:MRCST} is executed in a similar fashion as our proposed Algorithms\iffull~\ref{alg:DLGcomp}\fi~\textsc{DLGcomp} and\iffull~\ref{alg:DLGagr}\fi~\textsc{DLGagr}, as we generalized the aggregation concept of the S-MRCT-Algorithm \cite{hochuli:holzer:MRCST}.
Therefore the S-MRCT-Algorithm as stated in \cite{hochuli:holzer:MRCST} can be simulated with the Tree Variation of our proposed framework, as stated in Algorithms\iffull~\ref{alg:TreeComputing}\fi~\textsc{TreeComputing} and\iffull~\ref{alg:TreeAggregating}\fi~\textsc{TreeAggregating} .

The S-MRCT-Algorithm in \cite{hochuli:holzer:MRCST} computes a ($\times$,2)-approximation of the S-Minimum Routing Cost Spanning Tree (S-MRCT) problem. It relies on the recursive computation of  routing costs as stated in Lemma \ref{lemma:recRC}. The number of nodes of the set $S$ which are in the subtree of a child $c_i$ is denoted as $|Z_{\left(u,c_i\right)}^2\left(T\right)|=|\{v\in S:v\in T|_{c_i}\}|$ and $|Z_{\left(u,c_i\right)}^1\left(T\right)|=|\{v\in S:v\notin T|_{c_i}\}|$ \todoI{update} denotes the number of all other nodes in set $S$.

\begin{lemma}[Lemma 5.6 in \cite{hochuli:holzer:MRCST}]\label{lemma:recRC}
\[rc_S\left(T,u\right)  = \sum_{i=1}^{deg\left(u\right)-1} rc_S\left(T,c_i\right) + \sum_{i=1}^{deg\left(u\right)-1} \omega\left(u,c_i\right)\cdot |Z_{\left(u,c_i\right)}^1\left(T\right)|\cdot|Z_{\left(u,c_i\right)}^2\left(T\right)|\]
\end{lemma}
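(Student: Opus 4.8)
The plan is to unwind the definition of the partial routing cost $rc_S(T,u)$ — the part of the $S$-routing cost $RC_S(T)=\sum_{v,w\in S}\omega_T(v,w)$ that is charged to the edges of the subtree $T|_u$ rooted at $u$ — and to exploit that $T$ is a tree, so that routing paths are unique. Concretely I would first establish the edge-load characterisation: for any edge $e=(x,y)$ of $T$, with $y$ lying on the far side of $e$ from the root, removing $e$ disconnects $T$ into two pieces, one of them being $T|_y$; since the $v$--$w$ route in $T$ is the unique $v$--$w$ path, it crosses $e$ exactly when $v$ and $w$ fall on opposite sides, i.e. exactly one of them lies in $T|_y$. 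Counting unordered pairs, the weight $e$ contributes to $RC_S(T)$ is therefore $\omega(e)\cdot|Z^1_e(T)|\cdot|Z^2_e(T)|$ with $Z^2_e(T)=S\cap V(T|_y)$ and $Z^1_e(T)=S\setminus Z^2_e(T)$, which is exactly the load term appearing in the lemma. Hence $rc_S(T,u)=\sum_{e\in E(T|_u)}\omega(e)\,|Z^1_e(T)|\,|Z^2_e(T)|$.

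The second and final step is purely combinatorial: I would partition the edge set of $T|_u$. Writing $c_1,\dots,c_{deg(u)-1}$ for the children of $u$ (a non-root node in a rooted tree has $deg(u)-1$ children, the remaining incident edge going to its parent; the root is handled by the analogous identity with $deg(u)$ summands), every edge of $T|_u$ is either one of the connecting edges $(u,c_i)$ or an edge inside exactly one child subtree $T|_{c_i}$, and these $deg(u)-1+\sum_i|E(T|_{c_i})|$ edges are pairwise distinct and exhaust $E(T|_u)$. Splitting $\sum_{e\in E(T|_u)}\omega(e)|Z^1_e(T)||Z^2_e(T)|$ along this partition, the edges inside $T|_{c_i}$ contribute, by the formula for $rc_S$ just derived applied to $c_i$, exactly $rc_S(T,c_i)$ — crucially, $Z^1,Z^2$ are measured in the global tree $T$, so the per-edge loads are unchanged when we recurse into a child — and the connecting edges contribute $\sum_i\omega(u,c_i)|Z^1_{(u,c_i)}(T)||Z^2_{(u,c_i)}(T)|$. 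Adding these two gives the claimed recursion; the base case is a leaf $u$, for which $E(T|_u)=\emptyset$, the right-hand side is an empty sum, and both sides are $0$.

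I do not expect a genuine obstacle here: the statement is essentially the standard edge-by-edge (``routing load'') decomposition of the routing cost of a tree, restricted to $S$ and to the subtree below $u$, and it already appears as Lemma 5.6 of \cite{hochuli:holzer:MRCST}. The only thing that needs care is keeping the conventions of \cite{hochuli:holzer:MRCST} consistent — whether pairs are counted ordered or unordered (a global factor of $2$ that does not affect the form of the recursion), the degree/children convention at the root, and the fact that $T$ need only span $S$ and may contain Steiner vertices — none of which changes the argument above.
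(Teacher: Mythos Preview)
Your proposal is correct and matches the paper's own treatment: the paper does not prove this lemma but cites it from \cite{hochuli:holzer:MRCST} and only restates it in words, explaining that $rc_S(T,u)$ is the sum of the children's routing costs plus the contributions of the edges $(u,c_i)$, where the product $|Z^1_{(u,c_i)}(T)|\cdot|Z^2_{(u,c_i)}(T)|$ counts the $v$--$w$ paths in $T$ (for $v,w\in S$) that use the edge $(u,c_i)$. That is precisely your edge-load decomposition followed by the partition of $E(T|_u)$ into the connecting edges and the child subtrees, so there is nothing to add.
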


Lemma \ref{lemma:recRC} in words: the routing cost $rc_S(T,u)$ of a subtree $T|_{u}$ rooted in $u$ of tree $T$ is the sum of two sums. First, the sum of the routing cost $rc_S(T,c_i)$ of all subtrees $T|_{c_i}$ rooted in the children $c_i$ of $u$. And second, the sum of the routing cost off all edges $(u,c_i)$ from $u$ to all its children $c_i$. The product $|Z_{\left(u,c_i\right)}^1\left(T\right)|\cdot|Z_{\left(u,c_i\right)}^2\left(T\right)|$ denotes the number of $v$-$w$-paths in $T$ which include the edge $(u,c_i)$, for $v,w\in S$. Note that we set the routing cost $rc_S(T,u_l)$ to be $0$ for a node $u_l$ which is a leaf in $T$. And $rc_S(T,r):=RC_S(T)$ for the root node $r$ of tree $T$\todoI{Notation anpassen}.

The idea of the S-MRCT-Algorithm in \cite{hochuli:holzer:MRCST} is, in a first step to compute in parallel $|S|$ trees $T_r$ rooted in $r$ for each node $r\in S$. And in a second step to compute the routing cost $RC_S(T_r)$ for each tree $T_r$ in parallel. Therefore the algorithm aggregates the routing costs $rc_S(T_r,u)$ of the subtrees $T_r|_u$ in tree $T_r$ in a bottom-up fashion with the formula stated in Lemma \ref{lemma:recRC}. Finally the algorithm selects the tree $T_r$ with the minimum routing cost $RC_S(T_r)$ and broadcasts the result in the network.

We use Algorithm\iffull~\ref{alg:MRCT}\fi~\textsc{MRCT}  and Algorithm $\mathcal{A}_{MRCT}'$ (whose pseudocode is stated in Algorithm \ref{alg:MRCT_g}) to transfer the S-MRCT-Algorithm from \cite{hochuli:holzer:MRCST} into our framework. The subalgorithm $\mathcal{A}_{MRCT}'$ denotes the algorithm $\mathcal{A}_{agr}'$ of our framework, specified for the S-MRCT approximation.
Since we are only interested in Trees and not in DLGs, the Tree Variation involving Algorithms\iffull~\ref{alg:TreeComputing}\fi~\textsc{TreeComputing}  and\iffull~\ref{alg:TreeAggregating}\fi~\textsc{TreeAggregating}  is used. 
Assume each node $u\in V$ knows weather it is in set $S$ and $u$ knows $|S|$. Further the weight of an edge denotes the routing cost for that edge.

Algorithm\iffull~\ref{alg:MRCT}\fi~\textsc{MRCT}  starts with estimating and distributing $D_\omega'$ in the graph (Line \ref{line:MRCTstart} to \ref{line:MRCTendestim}). This is needed to determine the runtime of Algorithms\iffull~\ref{alg:TreeComputing}\fi~\textsc{TreeComputing}  and\iffull~\ref{alg:TreeAggregating}\fi~\textsc{TreeAggregating} . Algorithm\iffull~\ref{alg:TreeComputing}\fi~\textsc{TreeComputing}  executed in Line \ref{line:MRCTalgdist} is only needed to compute the trees $T_r$ for Algorithm\iffull~\ref{alg:TreeAggregating}\fi~\textsc{TreeAggregating}  for each node $r\in S$. Therefore we do not need algorithm $\mathcal{A}_{MRCT}$ for the execution. During the aggregating in Line \ref{line:MRCTalgagg}, algorithm $\mathcal{A}_{MRCT}'$ executed inside Algorithm\iffull~\ref{alg:TreeAggregating}\fi~\textsc{TreeAggregating}  computes the routing costs for the trees $T_r$ by applying the recursive formula stated above. Algorithm $\mathcal{A}_{MRCT}'$ therefore uses the recursive relation of the routing cost stated in Lemma \ref{lemma:recRC}.

In $\mathcal{A}_{MRCT}'$ Line \ref{line:gmrctinitstart} to \ref{line:gmrctinitend}, each node $u\in V$ is initialized with zero routing cost $rc_S[r]=0$ (per root node $r\in S$) and $z[r]$ the number of root nodes $s\in S$ in subtree $T_r|u$ is set to $1$ if $u\in S$, and to $0$ otherwise. In Line \ref{line:gmrctinitmsg}, the message $msg_p[r]$ is initialized in case node $u$ is a leaf in tree $T_r$, and therefore is sent to the parent of $u$ in $T_r$ before $\mathcal{A}_{MRCT}'$ is executed the first time. If $u$ is not a leaf in in tree $T_r$, then the values of $rc_S[r]$, $z[r]$ and $msg_p[r]$ will change during the execution of Algorithm\iffull~\ref{alg:TreeAggregating}\fi~\textsc{TreeAggregating} .

Back in Algorithm\iffull~\ref{alg:MRCT}\fi~\textsc{MRCT}  in Lines \ref{line:MRCTinftystart} to \ref{line:MRCTinftyend}, after the execution of Algorithms\iffull~\ref{alg:TreeComputing}\fi~\textsc{TreeComputing}  and\iffull~\ref{alg:TreeAggregating}\fi~\textsc{TreeAggregating} , all nodes $r\in S$ store the routing cost $rc_S[r]$ of their tree $T_r$ in $RC_S(r)$, all other nodes $v\notin S$ store $\infty$. Then, in Line \ref{line:MRCTminstart} to \ref{line:MRCTminend} the tree with the smallest routing cost is selected and broadcast.

\begin{proof}[Proof of Theorem \ref{theo:mrctapprox}]
Note that our Algorithms compute exactly the same trees as Algorithm 1 in \cite{hochuli:holzer:MRCST}, as received messages (which determine the parent) from neighbors are forwarded based on the same rule (prioritizing neighbor IDs) in both algorithms.
As a consequence Algorithm\iffull~\ref{alg:MRCT}\fi~\textsc{MRCT}  computes the same results as the S-MRCT algorithm in \cite{hochuli:holzer:MRCST} by aggregating along the trees computed by Algorithm\iffull~\ref{alg:TreeComputing}\fi~\textsc{TreeComputing}. Lines \ref{line:MRCTstart} to \ref{line:MRCTendestim}, \ref{line:MRCTaggmin} and \ref{line:MRCTbroadmin} take time $\BO(D_\omega)$ to estimate, broadcast and aggregate along tree $T_1$. Algorithms\iffull~\ref{alg:TreeComputing}\fi~\textsc{TreeComputing}  and\iffull~\ref{alg:TreeAggregating}\fi~\textsc{TreeAggregating}  are bound by time $\BO(|S|+D_\omega)$, as seen before. Hence, Algorithm\iffull~\ref{alg:MRCT}\fi~\textsc{MRCT}  runs in time $\BO(|S|+D_\omega)$. Theorem 2 in \cite{hochuli:holzer:MRCST} states that the algorithm that we transfer into our setting computes a $(\times,2)$-approximation to an MRCT. Note that \cite{hochuli:holzer:MRCST} assumes that $\omega(e)$ is uniform or corresponds to the time needed to transmit a message through edge $e$.
\end{proof}

\begin{proof}[Proof of Theorem \ref{theo:mrctapprox-rand}]
Algorithm \textsc{MRCTrand} is very similar to Algorithm~\ref{alg:MRCT} \textsc{MRCT}, only the set of root nodes is different. 
Like in \cite{hochuli:holzer:MRCST} we sample a set $S'\subset S$ of size $\BO(\log n)$ uniformly at random and then execute Algorithm\iffull~\ref{alg:MRCT}\fi~\textsc{MRCT} on set $S'$. In the end we choose the tree with smallest routing cost as the approximation. Theorem 3 in \cite{hochuli:holzer:MRCST} states this yields a $(\times,2+\epsilon)$-approximation to an MRCT. From the proof of Theorem \ref{theo:mrctapprox} we know that the runtime of opur implementation is $\BO(|S'|+D_\omega)=\BO(D_\omega + \log n)$. Note that \cite{hochuli:holzer:MRCST} assumes that $\omega(e)$ is uniform or corresponds to the time needed to transmit a message through edge $e$.
\end{proof}

\begin{algorithm}[htb]
\begin{algorithmic}[1]
\IF{$u=1$}\label{line:MRCTstart}
        \STATE \textbf{estimate} $D_{\omega}'$ by generating a spanning tree $T_1$ and using Fact \ref{fact:ecc-approx-diam};
        \STATE \textbf{broadcast} $D_\omega'$ through $T_1$;
\ELSE
        \STATE \textbf{receive} value of $D_\omega'$;
\ENDIF \label{line:MRCTendestim}
\STATE \textbf{execute} Algorithm\iffull~\ref{alg:TreeComputing}\fi~\textsc{TreeComputing}  with $(|S|,D_\omega', \bot)$; \label{line:MRCTalgdist}
\STATE \textbf{execute} Algorithm\iffull~\ref{alg:TreeAggregating}\fi~\textsc{TreeAggregating}  with $(|S|, D_\omega',\mathcal{A}_{MRCT}'())$; \label{line:MRCTalgagg}
\IF{$u\in S$}\label{line:MRCTinftystart}
        \STATE $RC_S(T_u)=rc_S[u]$;
\ELSE
        \STATE $RC_S(T_u)=\infty$;
\ENDIF \label{line:MRCTinftyend}
\IF{$u=1$} \label{line:MRCTminstart}
        \STATE \textbf{aggregate} $v:= \arg\min_{v\in V}RC_S(T_v)$ via $T_1$; \label{line:MRCTaggmin}
        \STATE \textbf{broadcast} ``$T_v$ is $(\times,2)$-approximation for S-MRCT''; \label{line:MRCTbroadmin}
\ENDIF \label{line:MRCTminend}
\end{algorithmic}
\caption{\textit{MRCT} Approximation (executed by node $u$)}\label{alg:MRCT}
\vspace*{0.5cm}
\end{algorithm}

\begin{algorithm}[htb]
\begin{algorithmic}[1]
\STATE \COMMENT{INITIALIZATION} 
\STATE \textbf{global} $rc_{S}:=\{0,\dots,0\}$; \label{line:gmrctinitstart}
\IF{$u\in S$}
        \STATE $z:=\{1,\dots,1\}$;
\ELSE
        \STATE $z:=\{0,\dots,0\}$; 
\ENDIF \label{line:gmrctinitend}
\STATE \textbf{for each} $v\in S$ \textbf{do} $msg_p[v]=(rc_S[v], z[v])$; \COMMENT{output if $u$ is a leaf}\label{line:gmrctinitmsg}\newline

\STATE \COMMENT{COMPUTATION, $g(v, msg_c, u_c)$: $v$ root node ID of the message, $msg_c$ message of a child $u_c$ in $T_v$}
\STATE $(r_c, z_c):=msg_c$;
\STATE $rc_S[v]:=rc_S[v]+r_c+\omega(u,u_c)\cdot z_c\cdot(|S|-z_c)$;
\STATE $z[v]:=z[v]+z_c$;
\STATE $msg_p[v]=(rc_S[v], z[v])$; \COMMENT{output}
\end{algorithmic}
\caption{$\mathcal{A}_{MRCT}'()$}
\label{alg:MRCT_g}
\vspace*{0.5cm}
\end{algorithm}

\fi


\addcontentsline{toc}{section}{References} 
\bibliographystyle{abbrv}
\bibliography{references}

\begin{center}
\textbf{Appendix}
\end{center}


\subsection{Tree Variation}\label{app:tree}

In schedule $\tau$ initialized in Algorithm \ref{alg:DLGcomp} (Line \ref{line:tauinit}) we only store the arrival time of the first $r$-message per root node $r\in S$.
Therefore the second dimension of the array $\tau$ representing the schedule can be omitted in both, Algorithm \ref{alg:DLGcomp} and \ref{alg:DLGagr}.
The number of parents $sp$ can be omitted too.
And since we have to handle only one $r$-message per root node $r$, we rewrite Lines \ref{line:treevarstart}--\ref{line:treevarend} of Algorithm \ref{alg:DLGcomp} as stated in Algorithm \ref{alg:TreeComputing}.

In Algorithm\iffull~\ref{alg:DLGagr}\fi~\textsc{DLGagr} just a minor modification is required. 
The \textit{`For each'}-loop in Line \ref{line:sendreceiveagg} needs to iterate over one parent $parent\_in\_T_r$ per tree $T_r$ only, since in a tree a node has at most one parent.

The fully rewritten Tree Variation of Algorithms\iffull~\ref{alg:DLGcomp}\fi~\textsc{DLGcomp} and \iffull~\ref{alg:DLGagr}\fi~\textsc{DLGagr} are stated in Algorithm \ref{alg:TreeComputing} and \ref{alg:TreeAggregating}.

\begin{proof}[Proof of Theorem \ref{theo:tree}]
By modifying Algorithms\iffull~\ref{alg:DLGcomp}\fi~\textsc{DLGcomp} and \iffull~\ref{alg:DLGagr}\fi~\textsc{DLGagr}, we only changed the space requirements inside the nodes $u\in V$ and did not change anything concerning the running time. Therefore the time complexity remains $\BO(|S|+k)$  after the modifications.
To show that the modification computes valid trees, note that a tree $T_r$ is a subgraph of a DLG $\calL_r$. A tree $T_r$ can be computed from a DLG $\calL_r$, by removing all but one edge to the parents at each node $u\in V$, except for the root node $r$, which has no parent node at all. This is done in Algorithm \ref{alg:TreeComputing} by considering only the neighbor node $u_p$ as a parent in tree $T_r$, which sends the first $r$-message.
\end{proof}

\begin{algorithm}[H]
\small
\begin{algorithmic}[1]
\STATE \COMMENT{On node globally usable variables:}
\STATE $L:=\emptyset$, $L_{delay}:=\emptyset$;
\STATE $\tau:=[\infty,\infty,\dots,\infty]$; \COMMENT{$\tau[v]=$ time when the first message of tree $T_v$ reaches $u$}
\STATE $\omega:=\{\infty,\infty,\dots,\infty\}$; \COMMENT{$\omega[v]=\omega(v,u)$, $\omega[\emptyset]:=\infty$}
\IF{$u \in S$}
        \STATE $L:=\{u\}$;
        \STATE $\omega[u]:=0$;
        \STATE $\tau[u]:=0$;
\ENDIF
\STATE $L_1,\dots,L_{deg(u)}:=L$;
\STATE \textbf{initialize} algorithm $\mathcal{A}_{agr}$;
\FOR{$t=0,\dots,|S|+D_{\omega}'$}
        \FOR{$i=1,\dots,deg(u)$}
                \STATE $l_i:=\min\left\{v\in L_i : \tau[v]+\omega(u,v)\ge t\wedge v\notin L_{delay}\right\}$;
        \ENDFOR
        \STATE within one time slot:
        \newline \textbf{send} $\left(l_1,\omega[l_1]+\omega\left(u,u_1\right), msg_c[l_1]\right)$ to neighbor $u_1$, receive $\left(r_1,\omega_1, msg_{p,1}\right)$ from $u_1$;
        \newline \textbf{send} $\left(l_2,\omega[l_2]+\omega\left(u,u_2\right), msg_c[l_2]\right)$ to neighbor $u_2$, receive $\left(r_2,\omega_2, msg_{p,2}\right)$ from $u_2$;
        \newline $\dots$
        \newline \textbf{send} $\left(l_{deg(u)},\omega[l_{deg(u)}]+\omega\left(u,l_{deg(u)}\right), msg_c[l_{deg(u)}]\right)$ to neighbor $u_{deg(u)}$, receive $\left(r_{deg(u)},\omega_{deg(u)}, msg_{p,deg(u)}\right)$ from $u_{deg(u)}$;
        
        \STATE $R=\{r_i:r_i<l_i\text{ and }i\in \{1,\dots,deg(u)\}\}\setminus L$;
        \STATE $s:=\left\{
          \begin{array}{l l}
            \infty & \quad \text{if $L_{delay} = \emptyset$}\\
            \min(L_{delay}) & \quad \text{else}
          \end{array} \right.$;
        \IF{$s\leq \min(R) \textbf{ and } s<\infty$}
                \STATE $L_{delay}:=L_{delay}\setminus\{s\}$;
        \ENDIF
        
        \FOR{$i=1,\dots,deg(u)$}
                \IF[$T_{l_i}$'s message will be delayed due to $T_{r_i}$.]{$r_i<l_i$}
                        \IF{$r_i\notin L$}
                                \STATE $\tau[r_i]:=t$;
                                \STATE $parent\_in\_T_{r_i}:=$ neighbor $i$;
                                \STATE \textbf{execute} algorithm $\mathcal{A}_{agr}(r_i, msg_{p,i})$;
                                \STATE $\omega[r_i]=\omega_i$;
                                \STATE $L:=L\cup\{r_i\}, L_1:=L_1\cup\{r_i\}, L_2:=L_2\cup\{r_i\},\dots,L_{i-1}:=L_{i-1}\cup\{r_i\},L_{i+1}:=L_{i+1}\cup\{r_i\},\dots,L_{deg(u)}:=L_{deg(u)}\cup\{r_i\}$;
                                \IF{$\min(R)<r_i \textbf{ or } s<r_i$}
                                        \STATE $L_{delay}:=L_{delay}\cup\{r_i\}$;
                                \ENDIF
                        \ELSE
                                \STATE $L_i:=L_i\setminus \{r_i\}$;
                        \ENDIF
                \ELSE
                        \STATE $L_i:=L_i\setminus \{l_i\}$; \COMMENT{$T_{l_i}$'s message was successfully sent to neighbor $i$.}
                \ENDIF
        \ENDFOR
\ENDFOR
\end{algorithmic}
\caption{computing $|S|$ trees (executed by node $u$) \newline 
        \textbf{Input:} $|S|$, $D_\omega'$, $\mathcal{A}_{agr}$ \newline 
        \textbf{passed parameters on execution of $\mathcal{A}_{agr}(v, msg_p[v])$:} $msg_p[v]$: message of the parent in tree $T_v$, $msg_c[v]$: message that is sent to children in tree $T_v$ after execution}
\label{alg:TreeComputing}
\end{algorithm}

\begin{algorithm}[ht]
\begin{algorithmic}[1]
\STATE \textbf{initialize} algorithm $\mathcal{A}_{agr}'$;
\FOR{$t=0,\dots,|S|+D_{\omega}'$}
        \STATE within one time slot:
        \newline \textbf{For each} $v\in L$ \\
                 \ \ \ \textbf{if} $t=|S|+D_{\omega}'-\tau[v]$ \textbf{then}\\
                         \ \ \ \ \ \ \ \textbf{send} $\left(v,msg_p[v]\right)$ to $parent\_in\_T_{v}$;
        \newline \textbf{receive} $\left(v_1,msg_{c,u_1}\right)$ from neighbor $u_1$;
        \newline \textbf{receive} $\left(v_2,msg_{c,u_2}\right)$ from neighbor $u_2$;
        \newline $\cdots$
        \newline \textbf{receive} $\left(v_{deg(u)},msg_{c,u_{deg(u)}}\right)$ from neighbor $u_{deg(u)}$;
        \FOR{$i=1,\dots,deg(u)$}
                \IF{$v_i\neq\bot$}
                        \STATE \textbf{execute} algorithm $\mathcal{A}_{agr}'(v_i, msg_{c,u_i})$;
                \ENDIF
        \ENDFOR
\ENDFOR
\end{algorithmic}
\caption{Tree aggregation (executed by node $u$) \newline
        \textbf{Input:} $|S|$, $D_\omega'$, $\mathcal{A}_{agr}'$   \newline
        \textbf{passed parameters on execution of $\mathcal{A}_{agr}'(v,msg_c)$:} $msg_c$: message of a child in tree $T_v$, $msg_p[v]$: message that is sent to parents in tree $T_v$ after execution}
\label{alg:TreeAggregating}
\end{algorithm}

\end{document}